\newtheorem{lemma}{Lemma}[subsection]
\newtheorem{proposition}{Proposition}[section]
\newtheorem{theorem}{Theorem}[section]
\newtheorem{definition}{Definition}[section]
\newtheorem{remark}{Remark}[section]
\newtheorem{corollary}{Corollary}[section]
\newtheorem{conjecture}{Conjecture}
\newtheorem{assumption}{Assumption}
\title{Extensions of Lorentzian Hawking--Page Solutions with \ \ \ \ \ \ \ \ \ \  Null Singularities,
Spacelike Singularities, and Cauchy horizons of Taub--NUT type}
\author{Serban Cicortas\footnote{Princeton University, Department of Mathematics, Fine Hall, Washington Road, Princeton, NJ 08544, USA}}
\begin{document}

\maketitle

\begin{abstract}
    Starting from the Hawking--Page solutions of \cite{HP}, we consider the corresponding Lorentzian cone metrics. These represent cone interior scale-invariant vacuum solutions, defined in the chronological past of the scaling origin. We extend the Lorentzian Hawking--Page solutions to the cone exterior region in the class of $(4+1)$-dimensional scale-invariant vacuum solutions with an $SO(3)\times U(1)$ isometry, using the Kaluza--Klein reduction and the methods of Christodoulou in \cite{nakedsingularities}. We prove that each Lorentzian Hawking--Page solution has extensions with a null curvature singularity, extensions with a spacelike curvature singularity, and extensions with a null Cauchy horizon of Taub--NUT type. These are all the possible extensions within our symmetry class. The extensions to spacetimes with a null curvature singularity can be used to construct $(4+1)$-dimensional asymptotically flat vacuum spacetimes with locally naked singularities, where the null curvature singularity is not preceded by trapped surfaces. We prove the instability of such locally naked singularities using the blue-shift effect of Christodoulou in \cite{instability}.
\end{abstract}

\tableofcontents

\section{Introduction}

Given an $n$-dimensional Einstein manifold with negative scalar curvature, there is a canonical procedure to obtain the associated Lorentzian cone metric as an $(n+1)$-dimensional scale-invariant vacuum solution. This is defined in the chronological past of the scaling origin, so it can be viewed as a cone interior solution. For example, starting from the $n$-dimensional hyperbolic space, one constructs the Minkowski metric in the chronological past of the origin. In the case $n=3,$ all Einstein manifolds have constant sectional curvature, so the only $(3+1)$-dimensional Lorentzian cone metric is given by the above example. However, in the case $n=4$ there already exist non-trivial examples.

The problem of extending $(4+1)$-dimensional Lorentzian cone metrics to the cone exterior region as vacuum solutions was considered in a general context by Anderson in \cite{Andersoncone} and \cite{Anderson}. In these works, he provides heuristics suggesting that one could use this procedure in order to construct $(4+1)$-dimensional vacuum spacetimes with naked singularities. More precisely, he also defines the Lorentzian cone metric in the chronological future of the scaling origin, and extends the metric in the cone exterior region as a formal Wick rotation of the interior metric. If the scaling origin represents a curvature singularity and the exterior metric interpolates smoothly between the past and future cones, he argues that the scaling origin is a naked singularity. 

Constructing examples of naked singularities is of great importance in view of weak cosmic censorship, which is the conjecture that the maximal globally hyperbolic development of generic asymptotically flat initial data does not have naked singularities. The conjecture was proved for the model problem of Einstein-Scalar field in spherical symmetry by Christodoulou in \cite{instability}. Previously, Christodoulou had constructed examples of naked singularities for this matter model in \cite{nakedsingularities}, which implies that the genericity assumption in the statement of the conjecture is necessary. In vacuum, there has been remarkable recent progress in the works \cite{examplevacuumequations}, \cite{examplevacuumequationsinterior} and \cite{twistedselfsimilarity}, which constructed examples of $(3+1)$-dimensional vacuum spacetimes with naked singularities. However, the weak cosmic censorship conjecture remains one of the main open problems of general relativity. As an essential intermediate step in proving weak cosmic censorship, Christodoulou formulated the following conjecture in \cite{overview}, using the terminology of terminal indecomposable past set of \cite{Geroch}:

\begin{conjecture}\label{conjecture}
    For generic asymptotically flat initial data, the maximal future development $(M,g)$ has the property that if $\mathcal{P}$ is a terminal indecomposable past set in $M$ whose trace on the initial data hypersurface has compact closure $K$, then for any open domain $D$ containing $K$, the domain of dependence of $D$ in $M$ contains a closed trapped surface.
\end{conjecture}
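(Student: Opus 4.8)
The plan is to reduce the conjecture, which concerns fully general asymptotically flat data, to a dichotomy for the causal structure near a candidate terminal indecomposable past set (TIP) and then to show that the ``bad'' branch of the dichotomy is non-generic. First I would exploit the compactness hypothesis: a TIP $\mathcal{P}$ whose trace on the initial hypersurface has compact closure $K$ is precisely the causal imprint of a singularity (or future-incomplete null congruence) that is \emph{visible} from a bounded spatial region, and the failure of the conclusion --- no closed trapped surface in the domain of dependence of any open $D\supset K$ --- is exactly the statement that this singularity is locally naked. The scale-invariant vacuum solutions constructed in this paper, and in particular the extensions carrying a null curvature singularity, supply the model geometries for the borderline case, realizing a TIP with compact trace whose development contains no trapped surface. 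The program is therefore to prove that such configurations are destroyed under generic perturbation of the data.

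The key steps, in order, would be: (i) establish a trichotomy for the maximal development of data close to a naked-singularity profile, whereby either the solution disperses and extends regularly across the scaling origin, or a null Cauchy horizon of Taub--NUT type forms, or a closed trapped surface appears to the past of the putative singularity --- a trichotomy modeled on the three extension types classified here; (ii) quantify the blue-shift effect of Christodoulou from \cite{instability} along the incoming null directions limiting onto the singularity, showing that a nonzero incoming flux is amplified without bound, so that the dispersive and Cauchy-horizon branches force the perturbation to satisfy an infinite hierarchy of vanishing conditions; (iii) interpret this hierarchy as a condition of positive --- in fact infinite --- codimension on the space of asymptotically flat data, which is the precise meaning of ``generic'' in the statement; and (iv) in the complementary, generic case, upgrade the blue-shift amplification to the actual formation of a trapped surface by a focusing argument in the spirit of Christodoulou's trapped-surface-formation theorem, locating that surface within the domain of dependence of $D$.

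The hard part will be step (iv) in the absence of symmetry, together with its coupling to step (ii). The blue-shift instability and the dispersion-versus-collapse dichotomy are at present only under control in spherical symmetry for the Einstein-scalar-field system, where \cite{instability} closes the argument; in the vacuum setting without symmetry even constructing the naked profiles is delicate (cf. \cite{examplevacuumequations}, \cite{examplevacuumequationsinterior}, \cite{twistedselfsimilarity}), and there is no matter field to drive the focusing, so the trapped surface must be extracted from the nonlinear self-interaction of the incoming gravitational radiation itself. A further obstacle is showing that a TIP with compact trace can asymptote only to the ends identified in our symmetry class: the classification here suggests that null and spacelike curvature singularities and Taub--NUT Cauchy horizons exhaust the local possibilities, but establishing this without symmetry is essentially the content of the conjecture and lies beyond current techniques. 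I would therefore expect a complete proof only once the blue-shift and trapped-surface-formation estimates are available for vacuum outside symmetry; the present construction is best read as furnishing the model solutions against which such a program must be calibrated.
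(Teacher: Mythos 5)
There is a fundamental mismatch here: the statement you are trying to prove is Conjecture \ref{conjecture}, which the paper does not prove and does not claim to prove. It is Christodoulou's conjecture from \cite{overview}, stated in the introduction as an essential \emph{open} intermediate step toward weak cosmic censorship; it has only been established for the Einstein-Scalar field system in spherical symmetry (in \cite{instability}), and in vacuum it remains one of the main open problems of general relativity. So there is no proof in the paper to compare your proposal against, and your proposal itself is not a proof either: you explicitly concede that steps (ii)--(iv) --- the blue-shift amplification, the interpretation of the resulting vanishing conditions as a genericity/codimension statement, and the formation of a trapped surface from focusing of gravitational radiation outside symmetry --- ``lie beyond current techniques.'' A chain of reductions whose every essential link is an acknowledged open problem is a research program, not an argument, and in particular step (iii) (converting an ``infinite hierarchy of vanishing conditions'' into positive codimension in a well-posedness class) is not even precisely formulated: genericity is exactly the part that is hard, and even Christodoulou's spherically symmetric proof does not proceed by such a hierarchy but by exhibiting explicit perturbations built from the unbounded blue-shift on the incoming cone $C_0^-$.

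What the paper actually does in relation to the conjecture is far more modest, and it is worth being precise about it. Via the Kaluza--Klein reduction, the $(4+1)$-dimensional vacuum problem with $SO(3)\times U(1)$ symmetry becomes the spherically symmetric Einstein-Scalar field system with $k^2=\tfrac13$; the paper constructs asymptotically flat vacuum spacetimes whose null curvature singularity is not preceded by trapped surfaces (a locally naked singularity), verifies that the blue-shift is unbounded on $C_0^-$ (since $\mu=\tfrac14$ there), and then invokes Christodoulou's two instability theorems to produce a \emph{two-dimensional} space of perturbations, within the symmetry class, after which every exterior incoming null cone reaches a spacelike singularity preceded by trapped surfaces. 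The paper is explicit that ``showing the initial data is non-generic in a whole well-posedness class is beyond the purpose of this paper,'' and that even the symmetric analogue of Conjecture \ref{conjecture} would additionally require an extension principle at the center guaranteeing unbounded blue-shift at any first central singularity. So the correct reading is: the paper provides model solutions and an instability mechanism \emph{consistent with} the conjecture in one symmetry class, while the conjecture itself --- which your proposal attempts to attack in full generality --- remains open, and no argument along the lines you sketch currently exists.
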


\subsection{Hawking--Page Solutions and Main Results}

In this paper we provide a rigorous study of the problem considered by Anderson, in the case of the Lorentzian cone metrics corresponding to the Hawking--Page solutions of \cite{HP}, in order to investigate the possibility of constructing $(4+1)$-dimensional vacuum spacetimes with naked singularities. The Hawking--Page solutions are the 1-parameter family of Einstein metrics on $\mathbb{R}^2\times S^2:$
\[g^{(4)}_m=V^{-1}d\rho^2+\rho^2d\sigma_2^2+\bigg(\frac{2\rho_+}{3\rho_+^2+1}\bigg)^2Vd\sigma_1^2,\]
where for any $m>0$ we have $V(\rho)=\rho^2+1-\frac{2m}{\rho},$ and we denote its largest root by $\rho_+$. Moreover, $d\sigma_k^2$ represents the standard metric on $S^k.$ 

These solutions are of interest in the AdS/CFT correspondence physics literature, since according to \cite{Anderson} they provide examples of non-uniqueness to the problem of finding Einstein metrics with a given conformal type at infinity (see already Section \ref{non uniqueness}). 

We define the Lorentzian Hawking--Page solutions to be the associated cone metrics:
\[g^{(4+1)}_m=-d\tau^2+\tau^2V^{-1}d\rho^2+\tau^2\rho^2d\sigma_2^2+\bigg(\frac{2\rho_+}{3\rho_+^2+1}\bigg)^2\tau^2Vd\sigma_1^2.\]
They are scale-invariant $(4+1)$-dimensional vacuum spacetimes with an $SO(3)\times U(1)$ isometry and scaling vector field $S=\tau\partial_{\tau}$. Moreover, they represent cone interior solutions, in the sense that they are defined in the chronological past of the scaling origin. 

The main result of our paper is the construction of all the possible extensions of the Lorentzian Hawking--Page solutions to the cone exterior region in the class of $(4+1)$-dimensional scale-invariant vacuum solutions with an $SO(3)\times U(1)$ isometry. We obtain that each Lorentzian Hawking--Page solution has:
\begin{enumerate}[label=(\roman*)]
    \item extensions with a null curvature singularity (Theorem \ref{solutions singular null boundary}),
    \item extensions with a spacelike curvature singularity (Theorem \ref{solutions apparent horizon} and Theorem \ref{spacelike singularity}),
    \item extensions with a null Cauchy horizon of Taub--NUT type (Theorem \ref{solutions apparent horizon}, Theorem \ref{regular spacelike boundary}, Theorem \ref{regularity at BB}).
\end{enumerate}
We also describe the asymptotic behavior of the solutions near the boundaries. We point out that the solutions constructed are not asymptotically flat. In cases (i) and (ii), we represent the Penrose diagrams of the quotient of the spacetimes by the $SO(3)\times U(1)$ action:

\ \ 

\begin{center}
\begin{tikzpicture}[scale=1]
\draw[dashed] (0,0) -- (3,3);
\draw (0,0) -- (0,4);
\draw[very thin] (0,4) -- (2,2);
\draw[dashed] (3,3) -- (1,5);
\draw[thick, dashdotted] (0,4) -- (1,5);
\filldraw[color=black, fill=white] (0,4) circle (2pt) node[anchor=east] {$b_{\Gamma}$};
\draw (0,2.5)  node[anchor=east] {$\Gamma$};
\draw (0.8,3.25)  node[anchor=west] {$C_0^-$};
\draw (1.5,1.5)  node[anchor=north west] {$\mathcal{I}^-$};
\draw (1.75,4.25)  node[anchor=south west] {$\mathcal{I}^+$};
\draw (0.5,4.5)  node[anchor=south east] {$\mathcal{B}_{null}$};
\draw (1.5,0)  node[anchor=north] {Case (i)};
\draw (0.7,6) node[anchor=south] {curvature};
\draw (0.7,5.5) node[anchor=south] {singularity};
\draw [-latex](0.7,5.5) -- (0.5,4.6);
\draw [-latex](3.3,2) -- (3,2.9);
\draw (3.4,2) node[anchor=north] {not A.F.};

\draw[dashed] (8,0) -- (12.4,4.4);
\draw (8,0) -- (8,4);
\draw[very thin] (8,4) -- (10,2);
\draw[thick, dashdotted] (8,4) .. controls (9.5,4) and (10.9,4) .. (12.4,4.4);
\filldraw[color=black, fill=white] (8,4) circle (2pt) node[anchor=east] {$b_{\Gamma}$};
\draw (8,2.5)  node[anchor=east] {$\Gamma$};
\draw (8.8,3.25)  node[anchor=west] {$C_0^-$};
\draw (9.5,1.5)  node[anchor=north west] {$\mathcal{I}^-$};
\draw (11.4,4.15)  node[anchor=south east] {$\mathcal{B}_{spacelike}$};
\draw (9.5,0)  node[anchor=north] {Case (ii)};
\draw (8.7,5.5) node[anchor=south] {curvature};
\draw (8.7,5) node[anchor=south] {singularity};
\draw [-latex](8.7,5) -- (8.9,4.1);
\draw [-latex](12.7,3.4) -- (12.4,4.3);
\draw (12.8,3.4) node[anchor=north] {not A.F.};
\end{tikzpicture}
\end{center}

In case (iii), we depict on the left the Penrose diagram of the quotient of the spacetime by the $SO(3)\times U(1)$ action. The boundary $\mathcal{CH}$ is null, because the $S^1$ direction is null. This is a Cauchy horizon of Taub--NUT type, since there exist non-unique analytic extensions beyond $\mathcal{CH}$ with closed timelike curves. The existence of these solutions is problematic from the point of view of strong cosmic censorship. We conjecture the Cauchy horizon of such solutions to be unstable. We represent the causality of one such extension in the diagram on the right:

\begin{center}
\begin{tikzpicture}[scale=0.9]
\draw[dashed] (0.3,0) -- (4.7,4.4);
\draw (0.3,0) -- (0.3,4);
\draw[very thin] (0.3,4) -- (2.3,2);
\draw[thick, dotted] (0.3,4) .. controls (1.8,4) and (2.2,4) .. (4.7,4.4);
\filldraw[color=black, fill=white] (0.3,4) circle (2pt) node[anchor=east] {$b_{\Gamma}$};
\draw (0.3,2.5)  node[anchor=east] {$\Gamma$};
\draw (1.1,3.25)  node[anchor=west] {$C_0^-$};
\draw (1.8,1.5)  node[anchor=north west] {$\mathcal{I}^-$};
\draw (3.9,4.15)  node[anchor=south east] {$\mathcal{CH}_{\text{Taub--NUT}}$};
\draw (6.8,0)  node[anchor=north] {Case (iii)};
\draw (1,5.5) node[anchor=south] {$S^1$ direction};
\draw (1,5) node[anchor=south] {is null};
\draw [-latex](1,5) -- (1.2,4.1);
\draw [-latex](5,3.4) -- (4.7,4.3);
\draw (5.1,3.4) node[anchor=north] {not A.F.};

\draw (8,0) -- (8,5);
\draw (11,0) -- (11,5);
\draw[very thin] (9.5,0) -- (9.5,5);

\draw (13.3,2.55)  node {$\mathcal{CH}_{\text{Taub--NUT}}$};
\draw [-latex](12,2.6) -- (10.7,2.8);
\draw (8,3) .. controls (8.8,2.5) and (10.2,2.5) .. (11,3);
\draw[dashed] (8,3) .. controls (8.8,3.3) and (10.2,3.3) .. (11,3);
\draw[very thick, shift={(9.65,2.78)}, rotate=45] (0,0) ellipse (0.08 and 0.18);
\draw[very thick] (9.5,2.65) -- (9.5,2.88);
\draw[very thick] (9.48,2.65) -- (9.7,2.64);
    
\draw (13.9,4.6)  node {Non-unique extension};
\draw (13.9,4.1)  node {with closed timelike curves};
\draw [-latex](11.8,4.6) -- (10.7,4.6);
\draw (8,4) .. controls (8.8,3.5) and (10.2,3.5) .. (11,4);
\draw[dashed] (8,4) .. controls (8.8,4.3) and (10.2,4.3) .. (11,4);
\draw[very thick, shift={(9.65,3.78)}, rotate=30] (0,0) ellipse (0.08 and 0.24);
\draw[very thick] (9.5,3.65) -- (9.5,3.92);
\draw[very thick] (9.48,3.65) -- (9.7,3.58);

\draw (13.8,1.8)  node {Extension in the cone};
\draw (13.8,1.3)  node {exterior region};
\draw [-latex](11.7,1.8) -- (10.7,1.8);
    
\draw (13.7,0.3)  node {Lorentzian};
\draw (13.7,-0.2)  node { Hawking--Page solution};
\draw [-latex](12.5,0.3) -- (10.7,0.3);
\draw (8,1) .. controls (8.8,0.5) and (10.2,0.5) .. (11,1);
\draw[dashed] (8,1) .. controls (8.8,1.3) and (10.2,1.3) .. (11,1);
\draw[very thick, shift={(9.6,0.82)}, rotate=60] (0,0) ellipse (0.08 and 0.11);
\draw[very thick] (9.5,0.65) -- (9.5,0.88);
\draw[very thick] (9.5,0.64) -- (9.65,0.73);
\end{tikzpicture}
\end{center}

In particular, we obtain that in our situation the construction proposed by Anderson in \cite{Andersoncone} does not lead to naked singularities in the usual sense. We point out that a spacetime with a naked singularity would have the Penrose diagram of case (i) such that the boundary $\mathcal{B}_{null}$ is regular with complete null generators reaching infinity, and $\mathcal{I}^+$ is incomplete. However, in our situation we have that in case (i) the boundary $\mathcal{B}_{null}$ is a null curvature singularity, in case (ii) the boundary $\mathcal{B}_{spacelike}$ is a spacelike curvature singularity, and in case (iii) the generators of the Cauchy horizon $\mathcal{CH}_{\text{Taub--NUT}}$ are given by closed null curves. Moreover, we recall that the extension procedure to the cone exterior region suggested in \cite{Andersoncone} is to consider a Wick rotation of the cone interior solution. In the case of Lorentzian Hawking--Page solutions, we prove that these extensions correspond to case (iii) and we compute them explicitly.

We can use the spacetimes obtained in our main result in order to construct asymptotically flat extensions of Lorentzian Hawking--Page solutions following a standard gluing argument. In case (i) of spacetimes with a null curvature singularity, this construction gives $(4+1)$-dimensional asymptotically flat vacuum spacetimes with locally naked singularities, in the sense that the null curvature singularity is not preceded by trapped surfaces. We depict below the Penrose diagram of the quotient of such spacetimes by the $SO(3)\times U(1)$ action:
\begin{center}
\begin{tikzpicture}[scale=0.683]
\draw[dashed] (0,0) -- (4,4);
\draw (0,0) -- (0,5);
\draw[very thin] (0,3) -- (2.5,5.5);
\draw[very thin] (0,5) -- (2.5,2.5);
\draw[dashed] (4,4) -- (2.5,5.5);
\draw[thick, dashdotted] (0,5) -- (0.85,5.85);
\draw[thick, dotted] (0.85,5.85) -- (0.9,5.9);
\draw[very thin] (0.6,5.6) -- (3.1,3.1);
\draw[thick, dashdotted] (1.4,5.7) .. controls (1.7,5.5) and (2,5.5) .. (2.1,5.5);
\draw[thick, dotted] (2.1,5.5) -- (2.5,5.5);
\draw[thick, dotted] (0.9,5.9) -- (1.4,5.7);
\filldraw[color=black, fill=white] (0,5) circle (2pt) node[anchor=east] {$b_{\Gamma}$};
\filldraw[color=black, fill=white] (2.5,5.5) circle (2pt) node[anchor=south west] {$i^+$};
\filldraw[color=black, fill=white] (4,4) circle (2pt) node[anchor=north west] {$i^0$};
\filldraw[color=black, fill=white] (0,0) circle (2pt) node[anchor=north west] {$i^-$};
\draw (0,2.5)  node[anchor=east] {$\Gamma$};
\draw (1.5,3.5)  node[anchor=west] {$C_0^-$};
\draw (2.15,4.15)  node[anchor=west] {$C_1^-$};
\draw (2,2)  node[anchor=north west] {$\mathcal{I}^-$};
\draw (3.25,4.75)  node[anchor=south west] {$\mathcal{I}^+$};
\draw (0.45,5.45)  node[anchor=south east] {$\mathcal{B}_{null}$};
\draw (1.7,5.7)  node[anchor=south] {$\mathcal{B}$};
\draw (0.7,7) node[anchor=south] {curvature};
\draw (0.7,6.5) node[anchor=south] {singularity};
\draw [-latex](0.7,6.5) -- (0.5,5.6);
\end{tikzpicture}    
\end{center}

We prove the instability of such locally naked singularities using the blue-shift effect of Christodoulou in \cite{instability}, and interpret this result in the context of Conjecture \ref{conjecture}.

\subsection{Idea of the Proof}

A key aspect is the Kaluza--Klein reduction, which provides a connection between our work and the construction of examples of naked singularities of Christodoulou in \cite{nakedsingularities}. In general, the Kaluza--Klein reduction gives a correspondence between $(4+1)$-dimensional vacuum metrics with a $U(1)$ isometry and $(3+1)$-dimensional solutions to the Einstein-Scalar field equations. The additional symmetries of the Lorentzian Hawking--Page solutions imply that the associated Einstein-Scalar field solutions are spherically symmetric and self-similar, which is exactly the framework considered in \cite{nakedsingularities}. We point out that in our case the Einstein-Scalar field solutions have exotic boundary conditions at the center, dictated by regularity of the corresponding $(4+1)$-dimensional vacuum spacetimes.

We obtain that the $(4+1)$-dimensional Einstein vacuum equations in our setting reduce to an autonomous system of first order ODEs in the plane. The Lorentzian Hawking--Page solutions can be identified with orbits of this system. The future Cauchy horizons of the Lorentzian Hawking--Page solutions correspond to singular points of the orbits. A careful analysis allows us to continue the orbits beyond the singular points, which amounts to extending the solutions to the cone exterior region with a loss of regularity across the cone. In particular, the solutions are $C^{1,1/2}$ across the cone and smooth everywhere else. We remark that this mild singularity also occurs for the examples constructed in \cite{nakedsingularities}, and we argue that our solutions belong to a well-posedness class. The ODE analysis proves that each interior Lorentzian Hawking--Page solution bifurcates to a one-parameter family of exterior solutions.

We firstly consider case (i) of extensions to solutions with a null singular boundary, where the curvature blows up. We later use these extensions to construct $(4+1)$-dimensional asymptotically flat vacuum spacetimes with locally naked singularities. We finally prove their instability using the blue-shift effect of \cite{instability}.

The extensions of the Lorentzian Hawking--Page solutions in the cone exterior region in cases (ii) and (iii) have a spacelike boundary $\mathcal{B}$ in the Penrose diagrams of the quotients by the $SO(3)\times U(1)$ action. We compute the precise asymptotic behavior of the solutions near $\mathcal{B}$, and remark that the geometric interpretation of this boundary in the upstairs spacetimes is different in the two cases. Thus, in case (ii) we have that $\mathcal{B}=\mathcal{B}_{spacelike}$ represents a spacelike curvature singularity. However, in case (iii) we prove that $\mathcal{B}=\mathcal{CH}_{\text{Taub--NUT}}$ is a null Cauchy horizon where the $S^1$ direction is null. We prove that there exist non-unique analytic extensions beyond $\mathcal{CH}_{\text{Taub--NUT}}$ with closed timelike curves, which is a similar behavior with the Taub--NUT spacetimes. An essential step in dealing with case (iii) is that we can compute these extensions in the cone exterior region explicitly, which can be interpreted as a Wick rotation of the interior Lorentzian Hawking--Page solutions, as in \cite{Anderson} (see already Section \ref{regular spacelike boundary section}). 

We point out that in both cases (ii) and (iii), the corresponding Einstein-Scalar field solutions have a spacelike curvature singularity at $\mathcal{B}.$ We compute the blow-up rate of the scalar field $\phi=c\log r+O(1)$ near the spacelike singularity, where $c=\pm\sqrt{3},\ \mp\frac{1}{\sqrt{3}}$. The analysis of the orbits of the above solutions is also of interest because it describes the behavior in the trapped region of the Einstein-Scalar field solutions of \cite{nakedsingularities} in the case $k^2=\frac{1}{3}$.

\subsection{Outline of the Paper}

We outline the remainder of the paper. In Section \ref{first big section}, we study the Lorentzian Hawking--Page solutions in the cone interior region in detail. We introduce the Kaluza--Klein reduction and we prove that in our symmetry class the $(4+1)$-dimensional Einstein vacuum equations reduce to an autonomous system of first order ODEs in the plane. We identify the Lorentzian Hawking–Page solutions with orbits of this system. In Section \ref{second big section}, we study all the possible extensions of our solutions to the cone exterior region in the class of $(4+1)$-dimensional scale-invariant vacuum solutions with an $SO(3)\times U(1)$ isometry. In the non-trapped region, we describe the extensions with a null curvature singularity in Theorem \ref{solutions singular null boundary} and the extensions with an apparent horizon in Theorem \ref{solutions apparent horizon}. Continuing the latter in the trapped region, we describe the extensions with a spacelike curvature singularity in Theorem \ref{spacelike singularity} and the extensions to the explicit solutions with a null Cauchy horizon of Taub--NUT type in Theorem \ref{regular spacelike boundary} and Theorem \ref{regularity at BB}. In Section \ref{third big section}, we use the extensions from Theorem \ref{solutions singular null boundary} in order to construct asymptotically flat spacetimes with locally naked singularities. We then prove their instability in the $SO(3)\times U(1)$ symmetry class.

\subsection{Acknowledgements}

The author would like to greatly acknowledge Igor Rodnianski for his valuable help in the process of writing this paper. The author would also like to thank Mihalis Dafermos for the very useful discussions.

\section{Lorentzian Hawking--Page Solutions in the Cone Interior Region}\label{first big section}

We begin the section by proving some basic properties of the Hawking--Page solutions and the associated Lorentzian Hawking--Page solutions. We then introduce the Kaluza--Klein reduction, which gives a connection to self-similar spherically symmetric solutions of the Einstein-Scalar field equations. This allows us to work in the framework of Christodoulou in \cite{nakedsingularities}, so we obtain that the Einstein vacuum equations for $(4+1)$-dimensional scale-invariant spacetimes with an $SO(3)\times U(1)$ isometry reduce to an autonomous system of first order ODEs in the plane. We provide a phase portrait analysis of our autonomous system. Finally, we identify the Lorentzian Hawking--Page solutions with orbits of this system and prove that they represent cone interior solutions.

\subsection{Hawking--Page Solutions}\label{Hawking--Page solutions section}

The Hawking--Page solutions are a 1-parameter family of Einstein metrics on $\mathbb{R}^2\times S^2$, first introduced in \cite{HP}. Following the presentation in \cite{Anderson}, for any $m>0$ we consider the function:
\[V(\rho)=\rho^2+1-\frac{2m}{\rho}.\]
Let $\rho_+$ be the largest root of $V$ and define:
\[b=\frac{4\pi \rho_+}{3\rho_+^2+1}.\]

\begin{definition}
    We define the Hawking--Page metrics on $(\rho_+,\infty)\times S^2\times S^1$ to be:
\[g^{(4)}_m=V^{-1}d\rho^2+\rho^2d\sigma_2^2+\bigg(\frac{b}{2\pi}\bigg)^2Vd\sigma_1^2.\]
\end{definition}

\begin{proposition}
    The Hawking--Page solutions are Riemannian metrics on $\mathbb{R}^2\times S^2$ with an $SO(3)\times U(1)$ isometry. They are Einstein metrics satisfying $Ric(g)=-3g.$ The $SO(3)$ action is free, and the set of fixed points of the $U(1)$ action represents a regular center given by the round 2-sphere of radius $\rho_+.$
\end{proposition}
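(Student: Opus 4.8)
The plan is to establish the five assertions separately, since they largely decouple. I would first record the relevant algebraic facts about $V$. Since $V(\rho)=\rho^2+1-2m\rho^{-1}\to-\infty$ as $\rho\to0^+$ and $V\to+\infty$ as $\rho\to\infty$, the largest root $\rho_+$ is positive; differentiating gives $V'(\rho)=2\rho+2m\rho^{-2}>0$ on $(0,\infty)$, and using $2m=\rho_+(\rho_+^2+1)$ (from $V(\rho_+)=0$) this yields $V'(\rho_+)=(3\rho_+^2+1)/\rho_+>0$, so $\rho_+$ is a simple root and $V>0$ on $(\rho_+,\infty)$. Hence the three coefficients $V^{-1}$, $\rho^2$, $(b/2\pi)^2V$ are strictly positive there and $g^{(4)}_m$ is a genuine Riemannian metric on $(\rho_+,\infty)\times S^2\times S^1$.

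Next, the regular center, which is the conceptual heart of the statement and where the value of $b$ enters. Near $\rho=\rho_+$ the $S^1$ factor collapses since $V\to0$, so I would introduce a geodesic radial coordinate $r$ with $dr=V^{-1/2}d\rho$, giving $r\simeq2\sqrt{(\rho-\rho_+)/V'(\rho_+)}$ and hence $V\simeq\tfrac14V'(\rho_+)^2 r^2$ to leading order. The $(\rho,\psi)$ part of the metric then becomes $dr^2+(b/2\pi)^2\tfrac14V'(\rho_+)^2 r^2\,d\psi^2$, which is the flat plane in polar form with no conical defect precisely when $(b/2\pi)\tfrac12V'(\rho_+)=1$ for $\psi$ of period $2\pi$ (the standard $d\sigma_1^2=d\psi^2$). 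Substituting $V'(\rho_+)=(3\rho_+^2+1)/\rho_+$ returns exactly $b=4\pi\rho_+/(3\rho_+^2+1)$. Thus with the stated $b$ the $(\rho,\psi)$ disk caps off smoothly, those directions form $\mathbb{R}^2$, the global topology is $\mathbb{R}^2\times S^2$, and the $S^2$ factor limits to $\rho_+^2\,d\sigma_2^2$, the round $2$-sphere of radius $\rho_+$.

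For the symmetries, since every metric coefficient depends only on $\rho$, the standard rotation action of $SO(3)$ on the round $S^2$ factor (preserving $d\sigma_2^2$) and the rotation action of $U(1)$ on the $S^1$ factor (preserving $d\sigma_1^2$) are commuting isometries, giving the $SO(3)\times U(1)$ isometry group. The $SO(3)$ orbits are the $2$-spheres $\{\rho\}\times S^2\times\{\psi\}$ of radius $\rho>\rho_+>0$, so the action has no fixed points and every orbit is a genuine $2$-sphere; the $U(1)$ orbits are circles of circumference proportional to $\sqrt{V}$ that degenerate exactly on $\{\rho=\rho_+\}$, whose fixed-point set is the center $2$-sphere already shown to be regular. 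Finally, for $Ric(g)=-3g$ I would compute the Ricci tensor of the multiply-warped metric $V^{-1}d\rho^2+\rho^2 d\sigma_2^2+(b/2\pi)^2V\,d\psi^2$ in the orthonormal coframe $V^{-1/2}d\rho$, $\rho\cdot(\text{unit }S^2\text{ coframe})$, $(b/2\pi)V^{1/2}d\psi$; the constant $(b/2\pi)^2$ merely rescales $\psi$ and is irrelevant. This is the familiar computation for a metric $f^{-1}d\rho^2+\rho^2 d\sigma_2^2+f\,d\psi^2$ with transverse curvature $1$: the three independent components of $Ric=-3g$ reduce to the single requirement $f(\rho)=1-2m\rho^{-1}+\rho^2$, which I would verify directly from $V'=2\rho+2m\rho^{-2}$ and $V''=2-4m\rho^{-3}$. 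Equivalently, one recognizes $g^{(4)}_m$ as the Euclidean Schwarzschild--AdS$_4$ metric of unit AdS radius, for which $Ric=-3g$ is classical.

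I expect the Einstein computation to be the longest step, but it is entirely routine once the frame is fixed (or once the Schwarzschild--AdS identification is invoked). The genuinely content-bearing point, and the one I would treat most carefully, is the regularity of the center: checking that the prescribed value of $b$ exactly cancels the conical singularity. This single computation simultaneously fixes the topology as $\mathbb{R}^2\times S^2$ and identifies the $U(1)$ fixed-point set as a smooth round $2$-sphere, so the main care goes into the leading-order expansion near $\rho=\rho_+$ and the consistent normalization of the period of $\psi$.
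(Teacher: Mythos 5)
Your proposal is correct and follows essentially the same route as the paper: the paper's proof consists precisely of introducing the radius function $R(\rho)=\int_{\rho_+}^{\rho}V^{-1/2}\,d\tilde\rho$ and passing from polar to Cartesian coordinates $x=R\cos\psi$, $y=R\sin\psi$, leaving the rest as "a simple computation." Your geodesic radial coordinate is exactly this $R(\rho)$, and your leading-order expansion showing that $b=4\pi\rho_+/(3\rho_+^2+1)$ cancels the conical defect, together with the routine Einstein-equation check, is just that computation carried out explicitly.
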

\begin{proof}
It suffices to change from polar to cartesian coordinates on $\mathbb{R}^2$. We introduce the radius function:
\[R(\rho)=\int_{\rho_+}^\rho\frac{d\tilde{\rho}}{\sqrt{V(\tilde{\rho})}}.\]
The desired change of coordinates is given by $(\rho,\theta,\varphi,\psi)\xrightarrow[]{}(x,y,\theta,\varphi)$, where $x=R(\rho)\cos\psi,\ y=R(\rho)\sin\psi.$ The rest of the proof is a simple computation.
\end{proof}

\subsection{Lorentzian Cone Metrics}

Following the definitions in \cite{BV} and \cite{nakedsingularities}, we say that a metric is scale-invariant if there exists a 1-parameter group of diffeomorphisms $h_a$ such that $h_a^*g=a^2g.$ We refer to the vector field $S$ generating the group $\{h_a\}$ as the scaling vector field. 

We note here the following canonical procedure of \cite{Andersoncone} to construct Lorentzian cone metrics:

\begin{proposition}
Let $(M,g^{(4)})$ be a 4-dimensional Riemannian manifold. Then the $(4+1)$-dimensional Lorentzian manifold $\big((0,\infty)\times M,g^{(4+1)}\big),$ where:
\[g^{(4+1)}=-d\tau^2+\tau^2g^{(4)},\]
is a vacuum solution if and only if $g^{(4)}$ is an Einstein metric with $Ric(g^{(4)})=-3g^{(4)}$. Moreover, the metric $g^{(4+1)}$ is scale-invariant with scaling vector field $S=\tau\partial_{\tau}$.
\end{proposition}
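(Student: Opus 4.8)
The plan is to recognize $g^{(4+1)}=-d\tau^2+\tau^2 g^{(4)}$ as a semi-Riemannian warped product and to read the vacuum condition directly off O'Neill's curvature formulas. Write $B=\big((0,\infty),-d\tau^2\big)$ for the one-dimensional Lorentzian base, $f(\tau)=\tau$ for the warping function, and $(F,g_F)=(M,g^{(4)})$ for the four-dimensional fiber, so that $g^{(4+1)}=g_B+f^2 g_F$ with $n:=\dim F=4$. I would first record the elementary base quantities, all computed with respect to $g_B$: since the metric coefficient $-1$ is constant, the Levi-Civita connection of $B$ is flat, whence $\mathrm{Hess}^B f=0$ and $\Delta_B f=0$; moreover $\nabla^B f=-\partial_\tau$, so that $g_B(\nabla^B f,\nabla^B f)=-1$. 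The single sign that governs the whole computation is this last one, which is negative precisely because $\partial_\tau$ is timelike.

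Next I would invoke the semi-Riemannian warped-product Ricci identities, which for $X,Y$ tangent to $B$ and $V,W$ tangent to $F$ read
\[ \mathrm{Ric}(X,Y)=\mathrm{Ric}_B(X,Y)-\tfrac{n}{f}\,\mathrm{Hess}^B f(X,Y),\qquad \mathrm{Ric}(X,V)=0, \]
\[ \mathrm{Ric}(V,W)=\mathrm{Ric}_F(V,W)-\Big(\tfrac{\Delta_B f}{f}+(n-1)\tfrac{g_B(\nabla^B f,\nabla^B f)}{f^2}\Big)f^2\,g_F(V,W). \]
Substituting the base data: the base is one-dimensional, so $\mathrm{Ric}_B=0$, and together with $\mathrm{Hess}^B f=0$ this forces the $BB$- and mixed components of $\mathrm{Ric}(g^{(4+1)})$ to vanish identically, independently of $g^{(4)}$. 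For the fiber block, $\Delta_B f=0$ and $g_B(\nabla^B f,\nabla^B f)=-1$ produce the bracketed coefficient $(n-1)(-1/\tau^2)\tau^2=-3$, which enters with an overall minus sign, so that
\[ \mathrm{Ric}(V,W)=\mathrm{Ric}_{g^{(4)}}(V,W)+3\,g^{(4)}(V,W). \]
Hence $\mathrm{Ric}(g^{(4+1)})\equiv 0$ if and only if $\mathrm{Ric}(g^{(4)})=-3\,g^{(4)}$, which is the asserted equivalence. The negative sign in $g_B(\nabla^B f,\nabla^B f)$ is exactly what flips the usual Riemannian cone condition $\mathrm{Ric}=(n-1)g$ into the Einstein constant $-3$.

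For scale-invariance I would exhibit the group explicitly: set $h_a(\tau,p)=(a\tau,p)$ for $a>0$. Since $\tau\circ h_a=a\tau$ while $h_a$ acts as the identity on $M$, one computes $h_a^*(-d\tau^2)=-a^2\,d\tau^2$ and $h_a^*(\tau^2 g^{(4)})=(a\tau)^2 g^{(4)}=a^2\,\tau^2 g^{(4)}$, so that $h_a^* g^{(4+1)}=a^2 g^{(4+1)}$. Reparametrizing $a=e^s$, the one-parameter group $\{h_{e^s}\}$ is generated by $\tfrac{d}{ds}\big|_{s=0}h_{e^s}=\tau\partial_\tau$, identifying the scaling vector field as $S=\tau\partial_\tau$.

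I expect no serious obstacle, as the whole argument is warped-product bookkeeping; the only genuinely delicate point is the propagation of the Lorentzian sign through the fiber Ricci coefficient, which I would double-check either by trusting O'Neill's conventions in the semi-Riemannian case or by a short direct verification. If one prefers to avoid quoting the warped-product identities, the alternative is a direct computation in the orthonormal frame $\{e_0=\partial_\tau,\ e_i=\tau^{-1}\hat e_i\}$, where $\{\hat e_i\}$ is a $g^{(4)}$-orthonormal frame on $M$; this is slightly longer but entirely routine and reproduces the same coefficients.
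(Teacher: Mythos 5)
Your proof is correct, and it reaches the same two identities the paper relies on, but by a different route. The paper proves the curvature statement by brute force: it computes the Christoffel symbols of $g^{(4+1)}=-d\tau^2+\tau^2g^{(4)}$ directly (${}^{(4+1)}\Gamma_{ij}^k={}^{(4)}\Gamma_{ij}^k$, ${}^{(4+1)}\Gamma_{ij}^{\tau}=\tau g_{ij}$, ${}^{(4+1)}\Gamma_{\tau i}^i=\tfrac{1}{\tau}$) and then reads off ${}^{(4+1)}R_{ij}={}^{(4)}R_{ij}+3g_{ij}$ and ${}^{(4+1)}R_{\tau i}={}^{(4+1)}R_{\tau\tau}=0$, which is exactly your fiber-block identity $\mathrm{Ric}(V,W)=\mathrm{Ric}_{g^{(4)}}(V,W)+3g^{(4)}(V,W)$ together with the vanishing of the base and mixed components. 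You instead obtain these by specializing O'Neill's semi-Riemannian warped-product Ricci formulas to the base $\big((0,\infty),-d\tau^2\big)$ with warping function $f=\tau$, so the entire computation collapses to the base data $\mathrm{Hess}^B f=0$, $\Delta_B f=0$, $g_B(\nabla^B f,\nabla^B f)=-1$. What your approach buys is conceptual transparency and generality: it isolates the single Lorentzian sign responsible for turning the Riemannian cone condition $\mathrm{Ric}=(n-1)g$ into $\mathrm{Ric}=-3g$, and it would apply verbatim in other dimensions (indeed the paper later repeats a direct computation of the same type for $g^{(4+1)}=dz^2+z^2g^{(3+1)}$, where a unified warped-product treatment would also suffice, though there the explicit Riemann components are needed for the Kretschmann scalar). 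What the paper's computation buys is self-containedness — no appeal to the warped-product machinery, and the recorded Christoffel symbols are of the type reused elsewhere. For scale invariance, the two arguments are equivalent in the opposite direction: the paper verifies the infinitesimal condition $\mathcal{L}_Sg^{(4+1)}=2g^{(4+1)}$, whereas you exhibit the finite flow $h_a(\tau,p)=(a\tau,p)$, check $h_a^*g^{(4+1)}=a^2g^{(4+1)}$, and differentiate to identify $S=\tau\partial_\tau$; your version matches the paper's definition of scale invariance (which is stated in terms of the group $h_a$) slightly more directly. The one point to be careful about, which you flag yourself, is the sign convention in the fiber Ricci coefficient of the warped-product formula; your substitution is consistent with O'Neill's conventions, so there is no gap.
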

\begin{proof}
    We write down the nonzero Christoffel symbols and the components of the Ricci curvature:
    \[{}^{(4+1)}\Gamma_{ij}^k={}^{(4)}\Gamma_{ij}^k,\ {}^{(4+1)}\Gamma_{ij}^{\tau}=\tau g_{ij},\ {}^{(4+1)}\Gamma_{{\tau}i}^i=\frac{1}{{\tau}},\]
    \[{}^{(4+1)}R_{ij}={}^{(4)}R_{ij}+3g_{ij},\ {}^{(4+1)}R_{{\tau}i}={}^{(4+1)}R_{{\tau}{\tau}}=0.\]
    To prove the scale invariance statement, we check directly that $\mathcal{L}_Sg^{(4+1)}=2g^{(4+1)}.$
\end{proof}

As a corollary of this proposition we extend the Hawking--Page solutions to Lorentzian metrics:

\begin{definition}
    We define the Lorentzian Hawking--Page metrics on $\mathcal{M}_{int}^{(4+1)}=(0,\infty)\times\mathbb{R}^2\times S^2$ to be:
    \[g_m^{(4+1)}=-d\tau^2+\tau^2g_m^{(4)}.\]
\end{definition}

We remark that the Lorentzian Hawking--Page metrics are scale-invariant vacuum solutions and have an $SO(3)\times U(1)$ isometry. As before, the set of fixed points of the $U(1)$ action represents a regular center given by $\Gamma^{(4+1)}=(0,\infty)\times S^2,$ with $\{\rho=\rho_+\}.$ Using polar coordinates, we denote the non-central component by $\mathcal{U}^{(4+1)}=(0,\infty)\times(\rho_+,\infty)\times S^2\times S^1$. On $\mathcal{U}^{(4+1)},$ the Lorentzian Hawking--Page metrics have the form:
\[g^{(4+1)}_m=-d\tau^2+\tau^2V^{-1}d\rho^2+\tau^2\rho^2d\sigma_2^2+\bigg(\frac{b}{2\pi}\bigg)^2\tau^2Vd\sigma_1^2.\]

\subsection{Kaluza--Klein Reduction}\label{Kaluza Klein}

There is a canonical procedure to reduce a $(4+1)$-dimensional vacuum metric with a $U(1)$ isometry to a $(3+1)$-dimensional solution to the Einstein-Scalar field equations. The latter is a well studied system in spherical symmetry, see the works of Christodoulou \cite{formationoftrappedsurfaces}--\cite{overview}, and also \cite{trappedsurface},\cite{extensionprinciple}, and \cite{kommemi}. We shall use extensively these results in order to study the behavior of the $(4+1)$-dimensional vacuum solutions.

\begin{proposition}\label{Kaluza Klein proposition}
Let $(\Tilde{M},\Tilde{g}^{(4+1)})$ be a $(4+1)$-dimensional spacetime with a $U(1)$ isometry, such that $\Tilde{M}=M\times S^1$ and $\Tilde{g}^{(4+1)}=\Tilde{g}^{(3+1)}+f^2d\sigma_1^2.$ Then $(\Tilde{M},\Tilde{g}^{(4+1)})$ is a vacuum spacetime if and only if $(M,g,\phi)$ solves the Einstein-Scalar field equations, where $g=f\Tilde{g}^{(3+1)},$
\[f=\exp\bigg(\pm\frac{2}{\sqrt{3}}\phi\bigg).\]
\end{proposition}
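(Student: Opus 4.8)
The plan is to treat $\tilde{g}^{(4+1)}=\tilde{g}^{(3+1)}+f^2 d\sigma_1^2$ as a warped product over $S^1$, compute its Ricci tensor, split the $(4+1)$-dimensional vacuum equations according to the $U(1)$ isometry, and then pass to the Einstein frame via the conformal rescaling $g=f\,\tilde{g}^{(3+1)}$. Write $\psi$ for the $S^1$-coordinate and let $\bar{\nabla}$, $\bar{\Box}$ and $\bar{R}_{\mu\nu}$ denote the Levi-Civita connection, the metric wave operator and the Ricci tensor of $\tilde{g}^{(3+1)}$ on $M$; by the assumed isometry both $f$ and $\tilde{g}^{(3+1)}$ are independent of $\psi$. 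Reducing the Christoffel symbols exactly as in the preceding cone proposition, the warped-product formulas give
\[\tilde{R}_{\mu\nu}=\bar{R}_{\mu\nu}-\frac{1}{f}\bar{\nabla}_\mu\bar{\nabla}_\nu f,\qquad \tilde{R}_{\mu\psi}=0,\qquad \tilde{R}_{\psi\psi}=-f\,\bar{\Box}f.\]
Hence $\tilde{g}^{(4+1)}$ is Ricci-flat if and only if $\bar{\Box}f=0$ and $\bar{R}_{\mu\nu}=\tfrac{1}{f}\bar{\nabla}_\mu\bar{\nabla}_\nu f$; the mixed components vanish identically, so this diagonal ansatz produces no gauge field.

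Next I would introduce $u=\log f$ (legitimate since $f>0$) and rewrite the reduced system. Using $\tfrac{1}{f}\bar{\nabla}_\mu\bar{\nabla}_\nu f=\bar{\nabla}_\mu\bar{\nabla}_\nu u+\bar{\nabla}_\mu u\,\bar{\nabla}_\nu u$ and $\bar{\Box}f=f(\bar{\Box}u+|\bar{\nabla}u|^2)$, the vacuum equations become
\[\bar{R}_{\mu\nu}=\bar{\nabla}_\mu\bar{\nabla}_\nu u+\bar{\nabla}_\mu u\,\bar{\nabla}_\nu u,\qquad \bar{\Box}u+|\bar{\nabla}u|^2=0.\]
I then perform the conformal change $g=e^{2\omega}\tilde{g}^{(3+1)}$ with $2\omega=u$, i.e. $g=f\,\tilde{g}^{(3+1)}$, in dimension $n=4$. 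Substituting the first equation into the conformal transformation law for the Ricci tensor, the crucial point is that all second-derivative (Hessian) terms cancel because $n-2=2$, while the remaining trace term equals $\bar{\Box}\omega+2|\bar{\nabla}\omega|^2=\tfrac12(\bar{\Box}u+|\bar{\nabla}u|^2)$, which vanishes by the second equation. One is left with
\[R_{\mu\nu}[g]=\tfrac32\,\partial_\mu u\,\partial_\nu u.\]
Defining $\phi$ through $\log f=u=\pm\tfrac{2}{\sqrt{3}}\phi$ turns this into $R_{\mu\nu}[g]=2\,\partial_\mu\phi\,\partial_\nu\phi$, since $\tfrac32\big(\pm\tfrac{2}{\sqrt{3}}\big)^2=2$; this is precisely where the coupling constant $\tfrac{2}{\sqrt{3}}$ is forced. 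Finally, the conformal transformation law for the scalar wave operator in dimension $4$ gives $\Box_g\phi=(cf)^{-1}(\bar{\Box}u+|\bar{\nabla}u|^2)=0$ with $c=\pm\tfrac{2}{\sqrt{3}}$, so $(g,\phi)$ solves the massless Einstein-Scalar field system. Since every step is either an identity or an invertible change of variables, this establishes the stated equivalence in both directions.

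The computation is essentially mechanical, so the step I expect to require the most care — and the only one carrying genuine content — is the pair of cancellations under the conformal rescaling. It is exactly the fact that the Hessian terms drop out (a consequence of $n=4$ together with the conformal weight $f^{1}$) and that the leftover trace term is annihilated by the fiber equation $\bar{\Box}f=0$ that simultaneously eliminates any potential for $\phi$ and pins down the weight $f=\exp(\pm\tfrac{2}{\sqrt{3}}\phi)$. I would fix once and for all the sign and normalization conventions in the conformal transformation formulas for $R_{\mu\nu}$ and $\Box_g$, and match them against the normalization of the Einstein-Scalar field equations used in \cite{nakedsingularities}, since these determine the precise numerical coupling.
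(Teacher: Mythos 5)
Your proposal is correct, and it reaches the paper's conclusion by a genuinely different organization of the computation. The paper fuses everything into a single step: it writes $\tilde{g}^{(4+1)}=\frac{1}{f}g+f^2d\sigma_1^2$ and computes the Christoffel symbols and Ricci components of the five-dimensional metric directly in terms of the Einstein-frame metric $g$ and $\log f$, obtaining $\tilde{R}_{ij}=R_{ij}-\frac{3}{2}\partial_i\log f\,\partial_j\log f+\frac{1}{2}g_{ij}\square_g\log f$ and $\tilde{R}_{\psi\psi}=-f^3\square_g\log f$, from which the equivalence is read off at once. You instead factor the computation through the Jordan frame: first the warped-product (O'Neill) formulas for the circle reduction over $\bar{g}=\tilde{g}^{(3+1)}$, giving $\tilde{R}_{\mu\nu}=\bar{R}_{\mu\nu}-\frac{1}{f}\bar{\nabla}_\mu\bar{\nabla}_\nu f$ and $\tilde{R}_{\psi\psi}=-f\bar{\Box}f$, and then the standard conformal transformation laws for $\mathrm{Ric}$ and $\Box$ under $g=f\bar{g}$ in dimension $4$. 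I checked your intermediate identities: the Hessian cancellation (the $-(n-2)\bar{\nabla}\bar{\nabla}\omega$ term with $\omega=\frac{1}{2}\log f$ exactly absorbing the Hessian in $\bar{R}_{\mu\nu}=\bar{\nabla}_\mu\bar{\nabla}_\nu u+\partial_\mu u\,\partial_\nu u$ precisely when $n=4$), the residual $\frac{3}{2}\partial_\mu u\,\partial_\nu u$ matching $2\partial_\mu\phi\,\partial_\nu\phi$ via $\frac{3}{2}c^2=2$, and the relation $\Box_g\phi=(cf)^{-1}\big(\bar{\Box}u+|\bar{\nabla}u|^2\big)=(cf^2)^{-1}\bar{\Box}f$ are all consistent with the paper's formulas (indeed $-f^3\square_g\log f=-f\bar{\Box}f$), and since every step is an equivalence the two-sided statement follows as you say. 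What each approach buys: yours is modular, relies only on quotable textbook lemmas, and makes structurally transparent why $n=4$, the weight $f^1$, and the coupling $2/\sqrt{3}$ are forced; the paper's single computation is shorter, keeps all formulas in the one frame ($g$, $\log f$) that is actually used downstream (e.g., in the general $M\times N$ remark and the sectional-curvature appendix), and avoids carrying two sets of conventions, which is exactly the place where your route demands the care you flag at the end.
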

\begin{proof}
We compute the nonzero Christoffel symbols:
    \[\Tilde{\Gamma}_{ij}^k=\Gamma_{ij}^k-\frac{1}{2}\big(\delta_j^k\partial_i\log f+\delta_i^k\partial_j\log f-g_{ij}\partial^k\log f\big),\]
    \[\Tilde{\Gamma}_{\psi\psi}^i=-f^3\partial^i\log f,\ \Tilde{\Gamma}_{\psi i}^{\psi}=\partial_i\log f.\]
We then compute the components of the Ricci curvature:
\[\tilde{R}_{ij}=R_{ij}-\frac{3}{2}\partial_i\log f\partial_j\log f+\frac{1}{2}g_{ij}\square_g\log f,\]
\[\tilde{R}_{i\psi}=0,\]
\[\tilde{R}_{\psi\psi}=-f^3\square_g\log f.\]
Therefore, the Einstein vacuum equations for $(\Tilde{M},\Tilde{g}^{(4+1)})$ are equivalent to the Einstein-Scalar field equations for $(M,g,\phi),$ namely:
\[\begin{cases}
R_{ij}=2\partial_i\phi\partial_j\phi, \\
\square_g\phi=0.
\end{cases}\]
\end{proof}

We can now apply this result to the non-central component of the Lorentzian Hawking--Page solutions:

\begin{definition}
    We define the Hawking--Page Einstein-Scalar field solutions on $\mathcal{U}^{(3+1)}=(0,\infty)\times(\rho_+,\infty)\times S^2:$
    \[g_m=-\frac{b}{2\pi}\tau\sqrt{V}d\tau^2+\frac{b}{2\pi}\frac{\tau^3}{\sqrt{V}}d\rho^2+r^2d\sigma_2^2,\]
\[r=\sqrt{\frac{b}{2\pi}}\tau^{3/2}V^{1/4}\rho,\]
\[\phi_m=\pm\frac{\sqrt{3}}{2}\log\bigg(\frac{b}{2\pi}\tau\sqrt{V}\bigg).\]
\end{definition}

We remark that the above solutions are spherically symmetric. The set of fixed points of the $SO(3)$ action is given by $\Gamma^{(3+1)}=(0,\infty)\times\{\rho=\rho_+\}\times S^2$ and it represents the singular center. 

Our solutions have an additional important symmetry, satisfying the self-similarity condition introduced in \cite{nakedsingularities}. We recall that a solution of the Einstein-Scalar field equations is self-similar if there exists a 1-parameter group of diffeomorphisms $h_a$ such that: \[h_a^*g=a^2g,\ h_a^*\phi=\phi-k\log a.\]
Equivalently, denoting by $S$ the scaling vector field generating $\{h_a\},$ the solution is self-similar if:
\[\mathcal{L}_Sg=2g,\ S\phi=-k.\]
We remark that if $k=0$, we recover the notion of scale invariance. The more general notion of self-similarity combines the invariance of the Einstein-Scalar field equations under rescaling the metric and translating the scalar field. By direct computation we obtain:

\begin{proposition}
    The Hawking--Page Einstein-Scalar field solutions are self-similar solutions with $S=\frac{2}{3}\tau\partial_{\tau}$ and $k=\mp\frac{1}{\sqrt{3}}$.
\end{proposition}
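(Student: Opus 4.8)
The plan is to verify the two self-similarity conditions $\mathcal{L}_S g_m = 2g_m$ and $S\phi_m = -k$ directly, exploiting the crucial simplification that the function $V=V(\rho)$ depends only on $\rho$, so that the scaling vector field $S=\frac{2}{3}\tau\partial_{\tau}$, being purely in the $\tau$-direction, annihilates $V$. Thus, up to the $\rho$-dependent factors that $S$ ignores, every metric coefficient is a monomial in $\tau$ and the scalar field is a logarithm in $\tau$, and the whole computation reduces to tracking powers of $\tau$.

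For the metric, first I would record that the only nonvanishing partial derivative of the components of $S$ is $\partial_{\tau}S^{\tau}=\frac{2}{3}$, and then apply the coordinate formula
\[(\mathcal{L}_S g_m)_{\mu\nu}=S^{\lambda}\partial_{\lambda}(g_m)_{\mu\nu}+(g_m)_{\lambda\nu}\partial_{\mu}S^{\lambda}+(g_m)_{\mu\lambda}\partial_{\nu}S^{\lambda}\]
component by component. For the $d\rho^2$ and angular blocks the last two terms vanish, so the identity reduces to the observation that $(g_m)_{\rho\rho}$ and $(g_m)_{\theta\theta},(g_m)_{\varphi\varphi}$ are homogeneous of degree $3$ in $\tau$: indeed $S^{\tau}\partial_{\tau}\big(\tau^3(\cdots)\big)=\frac{2}{3}\tau\cdot 3\tau^2(\cdots)=2\tau^3(\cdots)$, which is exactly $2(g_m)_{\mu\nu}$. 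The only block requiring care is $(g_m)_{\tau\tau}=-\frac{b}{2\pi}\tau\sqrt{V}$, where the last two terms contribute $2(g_m)_{\tau\tau}\partial_{\tau}S^{\tau}=\frac{4}{3}(g_m)_{\tau\tau}$ in addition to the transport term $S^{\tau}\partial_{\tau}(g_m)_{\tau\tau}=\frac{2}{3}(g_m)_{\tau\tau}$, and the two combine to $2(g_m)_{\tau\tau}$. Alternatively, one may integrate the flow to $\tau\mapsto a^{2/3}\tau$ and check that $h_a^*g_m=a^2g_m$, with the $d\tau^2$ factor supplying the extra power of $a$ that balances the single power of $\tau$ in the $(g_m)_{\tau\tau}$ coefficient; this $(g_m)_{\tau\tau}$ bookkeeping is the only mildly delicate point, and I expect it to be the one place where the differing $\tau$-degrees of the coefficients must be reconciled.

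For the scalar field the computation is immediate: since $S\log\tau=\frac{2}{3}$ and $SV=0$,
\[S\phi_m=\frac{2}{3}\tau\cdot\partial_{\tau}\Big(\pm\tfrac{\sqrt{3}}{2}\log\tau\Big)=\pm\frac{2}{3}\cdot\frac{\sqrt{3}}{2}=\pm\frac{1}{\sqrt{3}},\]
so that the requirement $S\phi_m=-k$ forces $k=\mp\frac{1}{\sqrt{3}}$, consistent with the sign choice in the definition of $\phi_m$. Since both self-similarity conditions hold with $S=\frac{2}{3}\tau\partial_{\tau}$ and $k=\mp\frac{1}{\sqrt{3}}$, the Hawking--Page Einstein--Scalar field solutions are self-similar in the sense of \cite{nakedsingularities}, which completes the verification.
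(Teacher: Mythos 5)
Your proposal is correct: the Lie-derivative bookkeeping of the $\tau$-degrees (degree $3$ for the $d\rho^2$ and angular blocks, degree $1$ for $g_{\tau\tau}$ compensated by the $2\partial_\tau S^\tau$ terms) and the computation $S\phi_m=\pm\frac{1}{\sqrt{3}}=-k$ are exactly what is needed. This is the same route as the paper, which states the proposition as following "by direct computation"; you have simply written that computation out in full.
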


In the general case, the same computation shows that a $(4+1)$-dimensional vacuum spacetime $(\Tilde{M},\Tilde{g}^{(4+1)})$ with a $U(1)$ isometry is scale-invariant, with scaling vector field $\tilde{S}$, if and only if the $(3+1)$-dimensional Einstein-Scalar field solution $(M,g,\phi)$ obtained via the Kaluza--Klein reduction is self-similar, with scaling vector field $S=\frac{2}{3}\tilde{S}$ and $k=\mp\frac{1}{\sqrt{3}}$.

According to this result, we expect to have a connection with the analysis of Christodoulou in \cite{nakedsingularities}, where he studies self-similar spherically symmetric solutions of the Einstein-Scalar field equations. We shall explore this connection in the following section.

We point out that the Hawking--Page Einstein-Scalar field solutions are not globally hyperbolic, since they have a singular center. However, they are the unique evolution of the initial data in the self-similar class. Moreover, the corresponding vacuum Lorentzian Hawking--Page solutions are globally hyperbolic and represent the unique development of the smooth initial data induced on $\mathbb{R}^2\times S^2.$

\begin{remark}
    We could repeat this computation in the general case of $\Tilde{M}=M\times N$, with $\dim M=m, \dim N=n,$ and the product metric:
    \[\tilde{g}_{\alpha\beta}dx^{\alpha}dx^{\beta}=\frac{1}{f(x^1,\cdots,x^m)}g^M_{ij}dx^idx^j+f^A(x^1,\cdots,x^m)g^N_{ab}dx^adx^b.\]
    We obtain:
    \[\Tilde{R}_{ij}=R^M_{ij}-\frac{A^2n+2An-m+2}{4}\partial_i\log f\partial_j\log f+\frac{1}{2}g^M_{ij}\square_{g^M}\log f-\]\[-\big(\frac{An}{2}+1-\frac{m}{2}\big)\bigg[\nabla^M_i\nabla^M_j
    \log f-\frac{1}{2}\partial_k\log f\partial^k\log fg^M_{ij}\bigg],\]
    \[\Tilde{R}_{ai}=0,\]
    \[\Tilde{R}_{ab}=R^N_{ab}-\frac{A}{2}f^{A+1}\bigg[\square_{g^M}\log f+\big(\frac{An}{2}+1-\frac{m}{2}\big)\partial_i\log f\partial^i\log f\bigg]g^N_{ab}.\]
    To obtain a correspondence between a vacuum metric on $\tilde{M}$ and an Einstein-Scalar field solution on $M,$ we require that $m=4$ and $An=2.$ We also require $M$ to be spherically symmetric. If we attach a regular center $\Gamma$ to $\tilde{M}$, corresponding to $f=0,$ then the Cauchy hypersurface $\Sigma$ of $\tilde{M}$ has the topology of $CN\times S^2,$ where $CN$ is the cone $\big(N\times[0,\infty)\big)/(N\times\{0\}).$ In order for $\Sigma$ to be a manifold at the center, we require $N=S^n.$ However, $S^2,\ S^3$ do not admit Ricci flat metrics, whereas this problem is still open in the case $n>3.$ Therefore, we must take $N=S^1$ and $A=2,$ as in Proposition \ref{Kaluza Klein proposition}.
\end{remark}

\subsection{Scale-invariant Vacuum Spacetimes}\label{general self similar solutions}

Motivated by the Lorentzian Hawking--Page solutions, we study general scale-invariant vacuum spacetimes with an $SO(3)\times U(1)$ isometry. We prove that in this case, the Einstein vacuum equations reduce to a two dimensional autonomous system of first order
ODEs, and we describe the phase portrait of this system.

We make the following assumption on the spacetime:

\begin{assumption}\label{assumption}
    The spacetime is globally hyperbolic with Cauchy hypersurface $\Sigma=\mathbb{R}^2\times S^2$. The $SO(3)$ action is free. The set of fixed points of the $U(1)$ action represents a regular timelike center $\Gamma$ with topology $\mathbb{R}\times S^2.$
\end{assumption}

We denote the non-central component of our spacetime by $\mathcal{U}^{(4+1)}.$ We also denote by $\mathcal{U}^{(1+1)}$ the two dimensional Lorentzian quotient manifold. The general form of such a metric on $\mathcal{U}^{(1+1)}\times S^2\times S^1$ is:
\[\Tilde{g}=\frac{1}{f}\cdot g^{(1+1)}+\frac{1}{f}r^2d\sigma_2^2+f^2d\sigma_1^2,\]
where we denoted by $\frac{1}{f}g^{(1+1)}$ the metric induced on $\mathcal{U}^{(1+1)}$. According to our assumption, we have that: \[f=0,\ \frac{r^2}{f}\neq0\text{ on }\Gamma.\]

Using Proposition \ref{Kaluza Klein proposition}, we get that $\big(\mathcal{U}^{(1+1)}\times S^2,g,\phi\big)$ solves the Einstein-Scalar field equations, where:
\[g=g^{(1+1)}+r^2d\sigma_2^2,\ \phi=\pm\frac{\sqrt{3}}{2}\log f.\]
We obtain that this solution is self-similar with $k=\mp\frac{1}{\sqrt{3}}$, as a consequence of the scale invariance of the $(4+1)$-dimensional vacuum metric. Moreover, the center given in the limit as $r\xrightarrow[]{}0$ is singular. We abuse notation and denote it once again by $\Gamma$ (or $\Gamma^{(3+1)}$), since it is always clear from the context whether we refer to the $(3+1)$-dimensional or the $(4+1)$-dimensional solution.

As we pointed out before, such solutions of the  Einstein-Scalar field equations were also considered by Christodoulou in \cite{nakedsingularities}. The difference is that Christodoulou's solutions are regular $(3+1)$-dimensional solutions, whereas the solutions that we consider are singular at the center. However, in the latter case the behavior of the center is dictated by regularity of the corresponding $(4+1)$-dimensional scale-invariant vacuum spacetimes.

In the following, we shall carry out most of our analysis in the context of self-similar spherically symmetric solutions to the Einstein-Scalar field equations, by adapting Christodoulou's method to the exotic boundary conditions at the center. We then use these results to obtain information about the associated $(4+1)$-dimensional vacuum metrics.
\begin{remark}\label{conical singularity remark}
    Solutions to the Einstein-Scalar field equations have the gauge freedom $\phi\xrightarrow[]{}\phi+C$. However, the corresponding vacuum spacetimes obtained through this transformation fail to satisfy the regularity at the center in Assumption \ref{assumption}. This can already be seen for Lorentzian Hawking--Page solutions, which have a conical singularity at the center with respect to the $\tau$ foliation if we take $C\neq0:$
    \[\lim_{\rho\xrightarrow[]{}\rho_+}\frac{\int_{\rho_+}^{\rho}\frac{\tau}{\sqrt{C'V}}}{\frac{b}{2\pi}\tau C'\sqrt{V}}=\frac{1}{C'\sqrt{C'}}\neq1,\]
    where $C'=\exp\big(\pm\frac{2C}{\sqrt{3}}\big)$ and we used the computation in Section \ref{Hawking--Page solutions section}. Moreover, we prove that if we take $C\neq0$, the solutions have a conical singularity at the center with respect to any foliation by regular spacelike hypersurfaces. We consider any regular change of coordinates $\tau(T,R),\ \rho(T,R)$ such that the $T$ slices are spacelike hypersurfaces and we denote the center by $\Gamma=\{(T,R):\ R=R_+(T)\}\times S^2.$ For the coordinates to be smooth at the center, we require that on slices of constant $T$ we have:
    \[\lim_{R\rightarrow R_+(T)}\frac{\partial\tau}{\partial R}=0,\ \lim_{R\rightarrow R_+(T)}\frac{\partial\rho}{\partial R}\neq0.\]
    We repeat the above computation for the metric induced on slices of constant $T,$ and obtain that the solutions have a conical singularity at the center with respect to the $T$ foliation if we take $C\neq0.$
\end{remark}

\subsubsection{Self-similar Bondi gauge}\label{introduction to bondi gauge}

We begin by making the following assumption on the vacuum spacetime:

\begin{assumption}\label{assumption non trapped}
    On the initial data hypersurface $\Sigma,$ there exists a neighborhood of $\Gamma$ that is non-trapped.
\end{assumption}

We compute that the area of an orbit $S^2\times S^1$ of the $SO(3)\times U(1)$ action is $8\pi^2r^2.$ Similarly, the area of the corresponding orbit $S^2$ of the $SO(3)$ action on the $(3+1)$-dimensional spacetime obtained via the Kaluza--Klein reduction is $4\pi r^2.$ Thus, there is a correspondence between trapped codimension 2 surfaces in $(\mathcal{U}^{(4+1)},\Tilde{g})$ and trapped codimension 2 surfaces in $(\mathcal{U}^{(1+1)}\times S^2,g).$ The above assumption implies that in the $(3+1)$-dimensional spacetime there exists a non-trapped neighborhood of the center on the initial data hypersurface $\big(\Sigma\cap\mathcal{U}^{(4+1)}\big)/U(1).$ As a result, we can introduce Bondi coordinates on $\mathcal{U}^{(1+1)}\times S^2$, which allow us to work precisely in the framework of \cite{nakedsingularities}. We point out that we prefer Bondi coordinates to double null coordinates for now because they are more suitable for constructing the solution from the center.

In Bondi coordinates on $\mathcal{U}^{(1+1)}\times S^2$ we have:
\begin{equation}\label{bondi 1}
    g=-e^{2\nu}du^2-2e^{\nu+\lambda}dudr+r^2d\sigma_2^2.
\end{equation} 
In these coordinates, the associated $(4+1)$-dimensional metric on $\mathcal{U}^{(1+1)}\times S^2\times S^1$ has the form:
\[\Tilde{g}=-\frac{1}{f}e^{2\nu}du^2-\frac{2}{f}e^{\nu+\lambda}dudr+\frac{1}{f}r^2d\sigma_2^2+f^2d\sigma_1^2.\]

We also make an assumption on the scale invariance of $\big(\mathcal{U}^{(4+1)},\tilde{g}\big):$

\begin{assumption}\label{assumption scale}
    The spacetime $\big(\mathcal{U}^{(4+1)},\tilde{g}\big)$ is scale-invariant, having the one parameter group of diffeomorphism $\tilde{h}_a$ and scaling vector field $\tilde{S}=\frac{d}{da}\big|_{a=1}\tilde{h}_a.$ The spacetime $\big(\mathcal{U}^{(1+1)}\times S^2,g,\phi\big)$ is self-similar, having the one parameter group of diffeomorphism $h_a$ and scaling vector field $S=\frac{d}{da}\big|_{a=1}h_a=\frac{2}{3}\tilde{S}.$ We assume that $h_a(u,r)=\big(h^1_a(u,r),h^2_a(u,r)\big),$ with $\frac{d}{da}\big|_{a=1}h^1_a>0,\ \frac{d}{da}\big|_{a=1}h^2_a=r.$
\end{assumption}

The self-similar condition for $\big(\mathcal{U}^{(1+1)}\times S^2,g,\phi\big)$ implies that:
\begin{equation}\label{bondi 2}
    Sr=r,\ S\phi=-k,\ \mathcal{L}_Sg_{\mathcal{U}^{(1+1)}}=2g_{\mathcal{U}^{(1+1)}}.
\end{equation}
We assumed that $S=\big(\frac{d}{da}\big|_{a=1}h^1_a\big)\frac{\partial}{\partial u}+r\frac{\partial}{\partial r},$ which is consistent with $Sr=r.$ Also, we notice that the condition $\mathcal{L}_Sg_{rr}=0$ implies that $\frac{d}{da}\big|_{a=1}h^1_a$ is independent of $r,$ so we can make the change of coordinates $u=-\exp{\int1/\big(\frac{d}{da}\big|_{a=1}h^1_a\big)}.$ As a consequence of our assumption we obtain Bondi coordinates with $u<0$ and: 
\begin{equation}\label{bondi 3}
    S=u\partial_u+r\partial_r.
\end{equation}
Following \cite{nakedsingularities}, we say that $\big(\mathcal{U}^{(1+1)}\times S^2,g,\phi\big)$ is in \textit{self-similar Bondi gauge} if there exist such coordinates $(u,r)$ on $\mathcal{U}^{(1+1)}$. The gauge freedom left is:
\begin{equation}\label{gauge freedom}
    u\xrightarrow[]{}c^2u.
\end{equation}
In this gauge, the self-similar condition (\ref{bondi 2}) becomes:
\begin{equation}
    S\nu=0,\ S\lambda=0,\ S\phi=-k.
\end{equation}

We briefly recall here the standard definitions that one makes in Bondi coordinates, as in \cite{nakedsingularities}, \cite{instability}. We define the future null outgoing  and incoming vector fields:
\[l=e^{-\lambda}\partial_r,\ n=2e^{-\nu}\partial_u-e^{-\lambda}\partial_r.\]
We then define the renormalized derivatives of $\phi$ in the outgoing and incoming directions:
\[\theta=r\frac{l\phi}{lr},\ \zeta=r\frac{n\phi}{nr}.\]
The Hawking mass satisfies the relation:
\[1-\frac{2m}{r}=e^{-2\lambda}.\]

We now introduce coordinates adapted to self-similar solutions. We consider:
\[e^s=x=-\frac{r}{u}.\]
In $(u,x)$ coordinates we have $S=u\partial_u.$ Finally, we define:
\[\beta=\frac{1}{\alpha}=1-
\frac{e^{\nu-\lambda}}{2e^s}.\]
According to \cite{nakedsingularities}, the Einstein-Scalar field system reduces to the system:
\begin{equation}\label{system}
    \begin{dcases}
    \frac{d\alpha}{ds}=\alpha\big[(\theta+k)^2+(1-k^2)(1-\alpha)\big] \\
    \frac{d\theta}{ds}=k\alpha\big(k\theta-1\big)+\theta\big[(\theta+k)^2-(1+k^2)\big]
    \end{dcases}
\end{equation}
Moreover, we have:
\begin{equation}\label{zeta}
    \zeta=-\frac{\theta+k\alpha}{\alpha-1},
\end{equation}
\begin{equation}\label{e{2lambda}}
    e^{2\lambda}=1+k^2+\frac{(\theta+k)^2}{\alpha-1}.
\end{equation}

We use this system to obtain more information about general vacuum spacetimes satisfying Assumptions \ref{assumption}, \ref{assumption non trapped} and \ref{assumption scale}, in particular about Lorentzian Hawking--Page spacetimes. We notice we can also start from a solution of (\ref{system}) and construct vacuum spacetimes by imposing the necessary regularity conditions. This latter approach is used in \cite{nakedsingularities}, and we shall also use it in order to extend the  Lorentzian Hawking--Page spacetimes.

\subsubsection{Boundary condition at the center}\label{Boundary condition at the center}

For each orbit of the autonomous system (\ref{system}) such that the expressions corresponding to $e^{2\lambda}$ and $e^{\nu-\lambda}$ are positive, we obtain a solution to the Einstein-Scalar field system satisfying Assumptions \ref{assumption non trapped} and \ref{assumption scale}. We claim that Assumption \ref{assumption} on regularity at the center of the $(4+1)$-dimensional vacuum solution determines the asymptotic behavior of the orbits of (\ref{system}), and can be viewed as a boundary condition at the center.

Indeed, we assumed that $r=0$ on $\Gamma,$ so in $(u,s)$ coordinates the center is given by $s\xrightarrow[]{}-\infty.$ We also assumed that $r^2/f\neq 0$ on $\Gamma,$ so $f\sim r^2$ as $s\xrightarrow[]{}-\infty,$ which is equivalent to  $\phi\mp\sqrt{3}\log r=O(1)$ as $s\xrightarrow[]{}-\infty.$ This implies that: \[\theta\xrightarrow[]{}\pm\sqrt{3}\text{ as }s\xrightarrow[]{}-\infty.\]

We now study the phase portrait of the autonomous system (\ref{system}) and look for critical points satisfying the needed asymptotics. We have the following options:
\begin{itemize}
    \item The origin of the $(\theta,\alpha)$ plane. This point was used in \cite{nakedsingularities} as the starting point for all the orbits, since it is the only critical point satisfying the required regularity at the center for $(3+1)$-dimensional solutions. However, this is the wrong asymptotic condition for $\theta$.
    \item $P_{1}$ is a critical point with $\theta=1.$ This point was used in \cite{nakedsingularities} as the endpoint for the orbits corresponding to solutions with null singular boundary. However, this is the wrong asymptotic condition for $\theta$.
    \item $P_{-1}$ is a critical point with $\theta=-1,$ with similar properties to $P_{1}$. Again, this has the wrong asymptotic condition for $\theta$.
    \item $P_{0}$ is a critical point with $\theta=-k,\ \alpha=1.$ This point was used in \cite{nakedsingularities} as the endpoint for the orbits corresponding to solutions with naked singularities. However, this point is unstable to the past.
    \item $P_{\pm}$ is the critical point with $\theta=-\frac{1}{k}=\pm\sqrt{3}$ and $\alpha=0.$ This satisfies the right asymptotic condition for $\theta$.
     \item $Q_{\pm}$ is the critical point with $\theta=\mp\frac{1}{\sqrt{3}},\ \alpha=0,$ but again this has the wrong asymptotic condition for $\theta$.
     \item The orbit could have the asymptotic line $\theta=\pm\sqrt{3}$. In this case, $(0,\pm\sqrt{3})$ would be a critical point for the $(\beta,\theta)$ system, but that is false.
\end{itemize}

Based on this discussion, we remark in order for the $(4+1)$-dimensional vacuum solution to be regular at $\Gamma$, the orbits must start from $P_{\pm}.$ From now on, we use this as the boundary condition at the center for all our Einstein-Scalar field solutions.

We notice that we could study self-similar Einstein-Scalar field solutions with a general $k.$ In that case, the corresponding vacuum metrics are not necessarily scale-invariant, but they satisfy a similar more general condition. However, the regularity condition at the center $\theta=\pm\sqrt{3}$ implies that $k=\mp\frac{1}{\sqrt{3}}.$ Thus, we recover the scale invariance property.

\subsubsection{Expansion of the solutions near the center}\label{expansion near center section}
We now study the system $(\ref{system})$ near the negatively stable critical point $(\theta,\alpha)=(\pm\sqrt{3},0)=P_{\pm}.$ Using the formulas $(\ref{zeta})$ and $(\ref{e{2lambda}}),$ we have:
\[\zeta\xrightarrow[]{}\pm\sqrt{3},\ e^{2\lambda}\xrightarrow[]{}0,\ \mu\xrightarrow[]{}-\infty\text{ as }s\xrightarrow[]{}-\infty.\]

We compute the expansion of the solutions near the center. We briefly recall the results in Section \ref{Appendix Expansion Subsection}. We define $\mathfrak{a}$ to be the set of equivalence classes of the relation $(a_1,a_2)\sim(Aa_1,A^2a_2),$ for all $A>0.$ For any orbit of $(\ref{system})$ near $P_{\pm}$, there exists a unique pair $(a_1,a_2)\in\mathfrak{a}$ such that:
\[\alpha=a_1e^{2s}-a_1^2e^{4s}+O(e^{6s}),\]
\[\theta=\pm\sqrt{3}\mp\frac{a_1}{\sqrt{3}}e^{2s}+\bigg(\pm\frac{a_1^2}{\sqrt{3}}+a_1^2d^{11}+a_2\bigg)e^{4s}+O(e^{6s}).\]
We also recall that $d^{11}$ is a constant depending on the system $(\ref{system})$. If $a_1=0$ we get that $\alpha\equiv 0,$ so we assume from now on that $a_1\neq0.$ Using $(\ref{zeta}),\ (\ref{e{2lambda}}),$ we compute further:
\[\beta=\frac{1}{a_1}e^{-2s}+1+O(e^{2s}),\]
\[e^{2\lambda}=\bigg(-\frac{a_1^2}{3}\mp a_1^2d^{11}\mp a_2\bigg)e^{4s}+O(e^{6s}),\]
\[e^{\nu-\lambda}=-\frac{2}{a_1}e^{-s}+O(e^{3s}),\]
\[\zeta=\pm\sqrt{3}\pm\frac{a_1}{\sqrt{3}}e^{2s}+\big(a_1^2d^{11}+a_2\big)e^{4s}+O(e^{6s}).\]
We remark that as a consequence of Assumption \ref{assumption non trapped}, we must have:
\[a_1<0\text{ and }\mp d^{11}\mp \frac{a_2}{a_1^2}>\frac{1}{3}.\]
It is convenient to introduce the parameter $X=\mp d^{11}\mp \frac{a_2}{a_1^2}-\frac{1}{3}>0.$ We proved the following:

\begin{proposition}\label{expansion near center proposition}
    Consider any $(4+1)$-dimensional scale-invariant vacuum solution with an $SO(3)\times U(1)$ isometry. Suppose that it also satisfies Assumptions \ref{assumption}, \ref{assumption non trapped} and \ref{assumption scale}. Then $(\theta,\alpha)|_{\Gamma}=(\pm\sqrt{3},0),$ and there exists $X>0$ such that near the center we have $\alpha<0$ and the expansion:
    \begin{equation}\label{theta in terms of alpha}
    \theta=\pm\sqrt{3}\mp\frac{\alpha}{\sqrt{3}}\mp\bigg(X+\frac{1}{3}\bigg)\alpha^2+O(|\alpha|^3).
\end{equation}
Moreover, for every $X>0$ there exists at most one such solution given in self-similar Bondi gauge, up to the gauge freedom (\ref{gauge freedom}).
\end{proposition}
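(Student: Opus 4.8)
The plan is to read the proposition as the assembly of three facts about the orbits of the autonomous system (\ref{system}) near the critical point $P_\pm=(\pm\sqrt{3},0)$: the identification of the center with $P_\pm$, the inversion of the parametric expansions into a single relation $\theta=\theta(\alpha)$, and a uniqueness statement organized through the equivalence classes $\mathfrak{a}$. First I would record the value at the center. The regularity requirement $r^2/f\neq0$ on $\Gamma$ forces $f\sim r^2$, equivalently $\phi\mp\sqrt{3}\log r=O(1)$ as $s\to-\infty$, hence $\theta\to\pm\sqrt{3}$; by the phase-portrait discussion of Section \ref{Boundary condition at the center}, $P_\pm$ is the only critical point of (\ref{system}) compatible with this asymptotic. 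Since $r=0$ on $\Gamma$ corresponds to $s\to-\infty$, the orbit must emanate from $P_\pm$, which gives $(\theta,\alpha)|_\Gamma=(\pm\sqrt{3},0)$.

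Next I would analyze (\ref{system}) near $P_\pm$ with $k=\mp\tfrac{1}{\sqrt{3}}$. The linearization is the lower-triangular matrix with diagonal $(2,4)$ and off-diagonal entry $-2k$, so $P_\pm$ is a negatively stable node whose slow eigendirection is $(1,k)$; this already produces the linear term $\theta=\pm\sqrt{3}+k\alpha+\cdots=\pm\sqrt{3}\mp\tfrac{\alpha}{\sqrt{3}}+\cdots$ of (\ref{theta in terms of alpha}). Invoking the expansion established in Section \ref{Appendix Expansion Subsection}, every orbit approaching $P_\pm$ admits the parametric expansions recalled before the statement, namely $\alpha=a_1e^{2s}-a_1^2e^{4s}+O(e^{6s})$ and $\theta=\pm\sqrt{3}\mp\tfrac{a_1}{\sqrt{3}}e^{2s}+(\pm\tfrac{a_1^2}{\sqrt{3}}+a_1^2d^{11}+a_2)e^{4s}+O(e^{6s})$, where the resonance $4=2\cdot2$ between the two eigenvalues is precisely what produces the $O(e^{4s})$ ambiguity and the equivalence relation $(a_1,a_2)\sim(Aa_1,A^2a_2)$ defining $\mathfrak{a}$.

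The core computation is then to eliminate $s$. Assuming $a_1\neq0$ (the case $a_1=0$ gives $\alpha\equiv0$), I would invert the first expansion to get $e^{2s}=\tfrac{\alpha}{a_1}+\tfrac{\alpha^2}{a_1}+O(\alpha^3)$ and hence $e^{4s}=\tfrac{\alpha^2}{a_1^2}+O(\alpha^3)$, substitute into the $\theta$-expansion, and collect powers of $\alpha$. The terms $\mp\tfrac{a_1}{\sqrt{3}}e^{2s}$ contribute $\mp\tfrac{\alpha}{\sqrt{3}}\mp\tfrac{\alpha^2}{\sqrt{3}}$, the quadratic term contributes $(\pm\tfrac{1}{\sqrt{3}}+d^{11}+\tfrac{a_2}{a_1^2})\alpha^2$, and the two $\tfrac{1}{\sqrt{3}}$ pieces cancel, leaving the coefficient $d^{11}+\tfrac{a_2}{a_1^2}$ at order $\alpha^2$. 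Setting $X:=\mp d^{11}\mp\tfrac{a_2}{a_1^2}-\tfrac{1}{3}$ rewrites this coefficient as $\mp(X+\tfrac{1}{3})$ and yields exactly (\ref{theta in terms of alpha}). The inequalities $\alpha<0$ and $X>0$ are then read off from Assumption \ref{assumption non trapped} via the positivity of $e^{\nu-\lambda}=-\tfrac{2}{a_1}e^{-s}+\cdots$ and of $e^{2\lambda}=(-\tfrac{a_1^2}{3}\mp a_1^2d^{11}\mp a_2)e^{4s}+\cdots$ near the center, which force $a_1<0$ and $\mp d^{11}\mp\tfrac{a_2}{a_1^2}>\tfrac{1}{3}$, exactly as computed before the statement.

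Finally, for uniqueness I would show that $X$ is a well-defined invariant of the solution modulo gauge. Since $X$ depends on $(a_1,a_2)$ only through the ratio $\tfrac{a_2}{a_1^2}$, which is invariant under $(a_1,a_2)\mapsto(Aa_1,A^2a_2)$, it descends to a function on $\mathfrak{a}$; conversely a value $X>0$ fixes $\tfrac{a_2}{a_1^2}$ and the sign $a_1<0$, hence fixes the class in $\mathfrak{a}$. I would then check that the residual gauge freedom (\ref{gauge freedom}), $u\mapsto c^2u$, induces $s\mapsto s-2\log c$ and therefore $(a_1,a_2)\mapsto(c^4a_1,c^8a_2)$, i.e. it acts transitively within a fixed class of $\mathfrak{a}$. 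Combined with the fact that an orbit approaching $P_\pm$ is determined by its class in $\mathfrak{a}$ (the expansion result of Section \ref{Appendix Expansion Subsection}), this gives at most one such solution for each $X>0$ up to (\ref{gauge freedom}). I expect the genuine obstacle to lie entirely in the resonant linearization at $P_\pm$: because the eigenvalues $2$ and $4$ satisfy $4=2\cdot2$, one cannot solve for a formal power series in a single exponential, and it is the normal-form analysis producing $d^{11}$ and the relation $(a_1,a_2)\sim(Aa_1,A^2a_2)$ — together with the verification that the geometric gauge (\ref{gauge freedom}) realizes exactly this algebraic equivalence — that carries the weight of the argument; keeping the $\pm$/$\mp$ bookkeeping consistent throughout is the part most prone to error.
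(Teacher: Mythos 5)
Your proposal is correct and follows essentially the same route as the paper: the center is identified with $P_\pm$ via the phase-portrait discussion of Section \ref{Boundary condition at the center}, the Poincar\'e--Dulac expansions of Section \ref{Appendix Expansion Subsection} are inverted to eliminate $s$ (the $\pm\alpha^2/\sqrt{3}$ terms cancel, leaving $\mp(X+\tfrac{1}{3})=d^{11}+a_2/a_1^2$, exactly the paper's definition of $X$), the constraints $a_1<0$ and $X>0$ are read off from positivity of $e^{\nu-\lambda}$ and $e^{2\lambda}$ under Assumption \ref{assumption non trapped}, and uniqueness follows from the correspondence between classes in $\mathfrak{a}$ and orbits together with the observation that $u\mapsto c^2u$ realizes precisely the equivalence $(a_1,a_2)\sim(Aa_1,A^2a_2)$. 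The only difference is presentational: the paper leaves the inversion of the parametric expansions and the gauge-action check implicit, while you carry them out explicitly.
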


\begin{center}
\begin{tikzpicture}
\draw[very thin] (-5,0) -- (5,0);
\draw[very thin] (0,-3) -- (0,6);
\filldraw[color=black, fill=black] (0,0) circle (2pt);
\filldraw[color=black, fill=black] (-1.73,0) circle (2pt) node[anchor=south east] {$P_-$};
\filldraw[color=black, fill=black] (0.57,0) circle (2pt) node[anchor=south west] {$Q_-$};
\filldraw[color=black, fill=black] (-0.57,1) circle (2pt) node[anchor=south west] {$P_0$};
\filldraw[color=black, fill=black] (-1,1.27) circle (2pt) node[anchor=south east] {$P_{-1}$};
\filldraw[color=black, fill=black] (1,4.73) circle (2pt) node[anchor=south east] {$P_{1}$};
\draw[thick] (-3.5,-3.1) -- (1.4,5.4);
\draw[very thin] (-1.73,0) -- (-0.3,-0.83);
\draw[very thin]   plot[smooth,domain=-2.6:0] ({-1.73+(0.58*\x)+(0.33*(\x)^2)},\x);
\draw[thick]   plot[smooth,domain=-0.6:0] ({-1.73+(0.58*\x)+((1+0.33)*(\x)^2)},\x);
\draw[thick]   plot[smooth,domain=-0.6:0] ({-1.73+(0.58*\x)+((0.5+0.33)*(\x)^2)},\x);
\draw[thick]   plot[smooth,domain=-0.6:0] ({-1.73+(0.58*\x)+((0.25+0.33)*(\x)^2)},\x);
\draw[thick]   plot[smooth,domain=-0.6:0] ({-1.73+(0.58*\x)+((0.75+0.33)*(\x)^2)},\x);
\draw[thick]   plot[smooth,domain=-0.55:0] ({-1.73+(0.58*\x)+((1.5+0.33)*(\x)^2)},\x);
\draw[thick]   plot[smooth,domain=-0.5:0] ({-1.73+(0.58*\x)+((2+0.33)*(\x)^2)},\x);
\draw[thick]   plot[smooth,domain=-0.4:0] ({-1.73+(0.58*\x)+((3+0.33)*(\x)^2)},\x);
\draw[thick]   plot[smooth,domain=-0.3:0] ({-1.73+(0.58*\x)+((5+0.33)*(\x)^2)},\x);
\draw (-2.2,-0.8)  node[anchor=south east] {$X<-\frac{1}{3}$};
\draw (-1.1,-3.1)  node[anchor=south east] {$-\frac{1}{3}<X<0$};
\draw (-0.7,-1.2)  node[anchor=south east] {$X>0$};
\end{tikzpicture}
\end{center}

\subsubsection{Cone interior solutions}\label{interior solutions section}
The general vacuum solutions considered above are defined in a neighbourhood of the center. By scale invariance, they can be extended all the way to the scaling origin along lines of constant $x$. We prove that when restricted to the maximal interval of existence of $(\ref{system}),$ they are cone interior solutions, in the sense that they are defined in the chronological past of the scaling origin. We shall see later that the scaling origin represents the first singularity at the center.

\begin{lemma}
 For any solution to $(\ref{system})$ defined on the maximal interval of existence, we have $\alpha<0.$
\end{lemma}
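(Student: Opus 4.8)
The plan is to exploit the fact that the first equation of the system (\ref{system}) is \emph{linear} in $\alpha$, so that $\alpha$ cannot change sign along an orbit. Writing the right-hand side as $\frac{d\alpha}{ds}=\alpha\,G(s)$, where
\[G(s)=(\theta(s)+k)^2+(1-k^2)\big(1-\alpha(s)\big)\]
is a continuous function of $s$ along any given solution, this is a first-order linear ODE for $\alpha$ with integrating factor $\exp\big(-\int G\big)$. Integrating from a reference time $s_*$ yields
\[\alpha(s)=\alpha(s_*)\exp\bigg(\int_{s_*}^s G(\sigma)\,d\sigma\bigg),\]
and since the exponential factor is strictly positive, $\alpha(s)$ has the same sign as $\alpha(s_*)$ throughout the maximal interval of existence. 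Note that we never need the sign of $G$ itself.

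First I would fix the reference time using the boundary condition at the center. By Proposition \ref{expansion near center proposition}, every solution of (\ref{system}) under consideration satisfies $\alpha<0$ in a neighborhood of the center $s\to-\infty$; in particular there exists some $s_*$ in the maximal interval of existence with $\alpha(s_*)<0$. This is the crucial input, since it provides a foothold at an honest (non-equilibrium) point of the orbit: the orbit itself emanates from the critical point $P_{\pm}=(\pm\sqrt{3},0)$, which lies on the line $\{\alpha=0\}$, so one cannot run the sign argument starting at the center without first knowing on which side of $\{\alpha=0\}$ the orbit departs. Proposition \ref{expansion near center proposition} resolves exactly this, via the expansion (\ref{theta in terms of alpha}) with $a_1\neq 0$.

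Next I would verify that the integral $\int_{s_*}^s G(\sigma)\,d\sigma$ is finite for every $s$ in the maximal interval. This is immediate: for $s$ in the open maximal interval, the solution is defined and continuous on the compact interval with endpoints $s_*$ and $s$, so $\theta$ and $\alpha$, and hence $G$, are bounded there, and the integral converges. Combining this with $\alpha(s_*)<0$ and the positivity of the exponential gives $\alpha(s)<0$ for all $s$ in the maximal interval, as claimed.

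I do not expect a serious obstacle here; the statement is essentially structural, following from the factor of $\alpha$ in the first equation of (\ref{system}). Equivalently, one may phrase the argument as: the line $\{\alpha=0\}$ is invariant under the flow, since $\frac{d\alpha}{ds}=0$ there, so by uniqueness of solutions for the smooth autonomous vector field no trajectory can cross it; the only subtlety worth flagging is the behavior at the equilibrium $P_{\pm}$ lying on $\{\alpha=0\}$, which is precisely the point addressed by Proposition \ref{expansion near center proposition} in showing that the orbit leaves the center into the region $\alpha<0$.
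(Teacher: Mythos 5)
Your proof is correct, and it takes a mildly but genuinely different route from the paper's. The paper's proof is a monotonicity argument: starting from the same foothold ($\alpha\to 0^-$ as $s\to-\infty$, i.e.\ the orbit departs from $P_{\pm}$ into $\{\alpha<0\}$, which is exactly where Proposition \ref{expansion near center proposition} and $a_1<0$ enter), it uses $k^2=\tfrac{1}{3}<1$ to see that the bracket $(\theta+k)^2+(1-k^2)(1-\alpha)$ is strictly positive wherever $\alpha<0$, hence $\frac{d\alpha}{ds}<0$ there; so $\alpha$ is strictly decreasing while negative and can never return to $0$. You instead exploit only the factorization $\frac{d\alpha}{ds}=\alpha\,G(s)$: the integrating-factor identity $\alpha(s)=\alpha(s_*)\exp\big(\int_{s_*}^{s}G\big)$ propagates the sign of $\alpha$ in both time directions with no information about the sign of $G$, the only analytic point being finiteness of the integral on compact subintervals, which you verify. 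What your version buys is robustness and generality: it works verbatim for any value of $k$, and it isolates the statement as purely structural --- invariance of $\{\alpha=0\}$ under the flow plus the non-trivial departure of the orbit from the equilibrium $P_{\pm}$ into the lower half-plane, which is precisely the subtlety you flag and which neither proof can avoid. What the paper's version buys in exchange is the stronger qualitative conclusion that $\alpha$ is strictly decreasing along the whole orbit, not merely negative. Both proofs rest on the same essential input from the expansion near the center, and yours is complete as written.
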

\begin{proof}
We know $\alpha\xrightarrow[]{}0-$ as $s\xrightarrow[]{}-\infty.$ Since $k^2=\frac{1}{3},$ we get from $(\ref{system})$ that $\frac{d\alpha}{ds}<0$ and we conclude.
\end{proof}

\begin{proposition}\label{general interior solution proposition}
    Consider any vacuum spacetime satisfying the hypothesis of Proposition \ref{expansion near center proposition} and restrict it to the maximal interval of existence of $(\ref{system}).$ Then it is a cone interior solution.
\end{proposition}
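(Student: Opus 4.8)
The plan is to show that the scaling vector field $S$ is timelike and \emph{past}-directed throughout the restricted solution, and that its integral curves limit to the scaling origin, so that every point of the spacetime lies in the chronological past of this origin. First I would reduce the causal analysis to the $(1+1)$ quotient. Since $f>0$, in $\tilde g = \frac{1}{f}g^{(1+1)} + \frac{1}{f}r^2 d\sigma_2^2 + f^2 d\sigma_1^2$ the $S^1$ factor is spacelike and the conformal factor $\frac1f$ does not affect causal character; moreover $S=u\partial_u+r\partial_r$ (and $\tilde S=\frac32 S$) is tangent to the quotient. Hence the causal type of $S$ in the $(4+1)$ spacetime coincides with its causal type for $g^{(1+1)}=-e^{2\nu}du^2-2e^{\nu+\lambda}du\,dr$, and I would carry out the computation there.

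The key step is the norm of $S$. Using $x=e^s=-r/u>0$, I would compute
\[ g^{(1+1)}(S,S) = -u^2 e^{2\nu} - 2ur\,e^{\nu+\lambda} = -2ur\,e^{\nu+\lambda}\Big(1 + \frac{u\,e^{\nu-\lambda}}{2r}\Big) = -2ur\,e^{\nu+\lambda}\,\beta, \]
where the last equality uses $\beta = 1 - \frac{e^{\nu-\lambda}}{2e^s}$ together with $e^s=-r/u$. Since $u<0$ and $r>0$ we have $-2ur\,e^{\nu+\lambda}>0$, so the sign of $g^{(1+1)}(S,S)$ equals that of $\beta=1/\alpha$. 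The preceding lemma gives $\alpha<0$ on the maximal interval of existence, hence $\beta<0$ and $g^{(1+1)}(S,S)<0$: the vector $S$ is timelike everywhere on the solution.

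Next I would fix the time orientation. Using $l=e^{-\lambda}\partial_r$, a direct computation gives $g^{(1+1)}(S,l)=-u\,e^{\nu}>0$ for $u<0$. As $l$ is future-directed null, a positive inner product forces the timelike vector $S$ to be past-directed; equivalently $-S$ is future-directed timelike. The integral curves of $S$ are exactly the scaling orbits, i.e.\ the lines of constant $x=e^s$; along such a curve $r=-xu$, and by scale invariance the curve exists for all $u\in(-\infty,0)$, limiting to the scaling origin $u=r=0$ as $u\to 0^-$. Traversed in the $-S$ direction (increasing $u$, decreasing $r$) this is a future-directed timelike curve ending at the scaling origin $O$. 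Therefore every point of the restricted solution is joined to $O$ by a future-directed timelike curve, so the spacetime lies in $I^-(O)$, which is precisely the definition of a cone interior solution.

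The only genuinely delicate points are the two short sign computations — relating $g^{(1+1)}(S,S)$ to $\beta=1/\alpha$ so as to invoke the lemma $\alpha<0$, and pinning down via $g^{(1+1)}(S,l)>0$ that the origin lies to the \emph{future} rather than the past — together with the routine observation that the positive conformal factor $\frac1f$ and the spacelike $S^1$ factor leave causal characters unchanged, which legitimizes the reduction to the $(1+1)$ quotient. I expect the main conceptual obstacle to be the bookkeeping of the time orientation: establishing that $S$ timelike plus $g^{(1+1)}(S,l)>0$ yields the \emph{past}-cone conclusion, consistent with the statement that the scaling origin is the first singularity reached in the future along the center.
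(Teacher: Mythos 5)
Your proof is correct and takes essentially the same route as the paper: both reduce the causal analysis to the quotient, use the preceding lemma $\alpha<0$ to get $g(S,S)\propto\beta<0$ along the scaling lines $\{x=\mathrm{const}\}$ (your identity $g^{(1+1)}(S,S)=-2ur\,e^{\nu+\lambda}\beta$ is exactly the paper's $\tilde g(-S,-S)=2xu^2e^{\nu+\lambda}\beta f^{-1}$ after substituting $r=-xu$ and the conformal factor), and then conclude that these lines are future-directed timelike curves terminating at the scaling origin, so the solution lies in its chronological past. Your explicit time-orientation check $g^{(1+1)}(S,l)=-ue^{\nu}>0$ spells out a point the paper's proof leaves implicit, while the paper instead makes the final causality statement precise by passing to double null coordinates $(u,v)$ where the origin is the point $(0,0)$ of $\mathbb{R}^{1+1}_{(u,v)}$; these are cosmetic differences, not different arguments.
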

\begin{proof}
Take any $x\geq0$, such that the restricted solution is defined for all $(u,r)$ with $x=-r/u$. Since $\beta(x)<0,$ we obtain $\Tilde{g}(-S,-S)=2xu^2e^{\nu+\lambda}\beta f^{-1}<0.$ Thus, the curve $\gamma_x=\{x=\text{const}\}$ is future directed timelike.

Define another null coordinate $v$ which is constant along incoming null cones, such that $v=u$ on $\Gamma$ and $v\in(u,v_X(u))$ on $C^+_u\cap\{x<X\}.$ The causality of the spacetime induced on $\mathcal{U}^{(1+1)}$ is the same as that of $\mathbb{R}^{1+1}_{(u,v)}.$ Thus, each $\gamma_x$ is a future directed timelike curve passing through the scaling origin $(0,0)$ in $\mathbb{R}^{1+1}_{(u,v)}.$ We conclude that all points with $\beta(x)<0$ are in the chronological past of the scaling origin, so the restricted solution is a cone interior solution.
\end{proof}

For solutions that can be extended beyond the cone interior region, we use the above double null coordinates to define $C_0^-$ as the boundary of the causal past of the scaling origin. We obtain:

\begin{corollary}\label{reach C0- condition}
If the spacetime can be extended to $\beta>0,$ then $C_0^-=\{\beta=0\}$.
\end{corollary}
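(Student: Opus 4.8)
The plan is to extend the causal analysis carried out in the proof of Proposition \ref{general interior solution proposition} across the locus where $\beta$ changes sign. I work in the double null coordinates $(u,v)$ introduced there, in which the quotient $\mathcal{U}^{(1+1)}$ inherits the causal structure of $\mathbb{R}^{1+1}_{(u,v)}$, the scaling origin $\mathcal{O}$ sits at $(u,v)=(0,0)$, and each ray $\gamma_x=\{x=\text{const}\}$ is a radial line through $\mathcal{O}$ that is an integral curve of $S$. Orienting the future as the direction of increasing $u$ and $v$, the causal past is $J^-(\mathcal{O})=\{u\le 0,\ v\le 0\}$, whose two boundary generators are $\{u=0,\ v\le 0\}$ and $\{v=0,\ u\le 0\}$; in the relevant region $u<0$ the center $\Gamma$ lies on the diagonal $\{v=u\}$.

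First I recall the identity $\tilde{g}(S,S)=2xu^2e^{\nu+\lambda}\beta f^{-1}$ from that proof. Since $f>0$, $x>0$ and $e^{\nu+\lambda}>0$ away from the center, the causal character of each ray is governed by the sign of $\beta(x)$: the ray $\gamma_x$ is timelike when $\beta(x)<0$, null when $\beta(x)=0$, and spacelike when $\beta(x)>0$. Because $\beta$ is a function of $s=\log x$ alone, each of these three conditions is $S$-invariant, hence a union of rays. As $x$ increases from $0$ at $\Gamma$, the rays $\gamma_x$ sweep monotonically through the $(u,v)$ plane, distinct values of $x$ corresponding to distinct slopes $v/u$. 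In the cone interior region $\{\beta<0\}$ the rays are timelike and terminate at $\mathcal{O}$, so they lie in the chronological past, recovering Proposition \ref{general interior solution proposition}.

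Once the solution extends to $\{\beta>0\}$, the monotone sweeping produces a unique value $x_0$ with $\beta(x_0)=0$: the corresponding ray $\gamma_{x_0}$ is null, reaches $\mathcal{O}$, and in the $(u,v)$ plane must coincide with the generator $\{v=0,\ u<0\}$ of $\partial J^-(\mathcal{O})$, so that $\gamma_{x_0}\subseteq\partial J^-(\mathcal{O})$. For $\beta(x)>0$ the ray $\gamma_x$ lies in $\{v>0\}$, which is disjoint from $J^-(\mathcal{O})$, and since the causality is exactly that of $\mathbb{R}^{1+1}_{(u,v)}$ no such point can be joined to $\mathcal{O}$ by a past-directed causal curve. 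Combining these, $J^-(\mathcal{O})$ consists precisely of the rays with $\beta\le 0$, and its boundary in the extended region is the single null ray $\{\beta=0\}$; therefore $C_0^-=\{\beta=0\}$. The step I expect to be the main obstacle is the monotonicity-and-identification argument: one must confirm that the rays sweep monotonically through the causal sectors so that $\{\beta=0\}$ is a single null generator of $\partial J^-(\mathcal{O})$ rather than an interior or exterior ray, and that the whole region $\{\beta>0\}$ lies outside $J^-(\mathcal{O})$. This is exactly where the global causal structure inherited from $\mathbb{R}^{1+1}_{(u,v)}$, rather than the pointwise sign of $\tilde{g}(S,S)$, is essential.
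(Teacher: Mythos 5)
Your proof is correct and is essentially the paper's argument written out in full: the paper's entire proof is the single observation that $S$ is null on $\{\beta=0\}$, which is precisely your use of the identity $\tilde{g}(S,S)=2xu^2e^{\nu+\lambda}\beta f^{-1}$ from Proposition \ref{general interior solution proposition} to read off the causal character of the rays $\gamma_x$ from the sign of $\beta$. The extra causal bookkeeping you supply (the monotone sweep of the rays, the identification of $\{\beta=0\}$ with the generator $\{v=0,\ u<0\}$ of the boundary of $J^-$ of the scaling origin, and the exclusion of $\{\beta>0\}$ from that causal past) is exactly what the paper leaves implicit, building on the causal structure of $\mathbb{R}^{1+1}_{(u,v)}$ established there.
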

\begin{proof}
Follows because on $\{\beta=0\}$ the scaling vectorfield $S$ is null.
\end{proof}

This result provides a criterion to check whether a given vacuum spacetime satisfying our assumptions is defined beyond the cone interior region. 

Alternatively, we can start with an orbit of (\ref{system}) satisfying the expansions in Section \ref{expansion near center section}, and consider the corresponding vacuum solution. In order for this to extend beyond the cone interior region, we must have $\alpha\xrightarrow[]{}-\infty$ within finite parameter $s.$ Therefore, the goal is to prove that the solution to (\ref{system}) blows up in finite time. In \cite{nakedsingularities}, such a result is proved by a qualitative analysis of the relevant orbit.  In our case, this is achieved by using the explicit Lorentzian Hawking--Page solutions, as we shall see in the next section.

\subsection{Lorentzian Hawking--Page Solutions}\label{Interior Hawking--Page Solutions section}
We recall that we constructed the Lorentzian Hawking--Page solutions on $\mathcal{U}^{(4+1)}=(0,\infty)\times(\rho_+,\infty)\times S^2\times S^1$:
\[g^{(4+1)}_m=-d\tau^2+\tau^2V^{-1}d\rho^2+\tau^2\rho^2d\sigma_2^2+\bigg(\frac{b}{2\pi}\bigg)^2\tau^2Vd\sigma_1^2.\]
Moreover, we proved they extend to a regular solution on $\mathcal{U}^{(4+1)}\cup\Gamma^{(4+1)}$. Also, by the Kaluza--Klein reduction, we constructed the Hawking--Page Einstein-Scalar field solutions on $\mathcal{U}^{(3+1)}=(0,\infty)\times(\rho_+,\infty)\times S^2:$
\begin{equation}\label{HP solutions metric}
    g=-\frac{b}{2\pi}\tau\sqrt{V}d\tau^2+\frac{b}{2\pi}\frac{\tau^3}{\sqrt{V}}d\rho^2+r^2d\sigma_2^2
\end{equation}
\begin{equation}\label{HP solutions rho}
    r=\sqrt{\frac{b}{2\pi}}\tau^{3/2}V^{1/4}\rho
\end{equation}
\begin{equation}\label{HP solutions phi}
    \phi=\pm\frac{\sqrt{3}}{2}\log\bigg(\frac{b}{2\pi}\tau\sqrt{V}\bigg).
\end{equation}

In this section, we write the Lorentzian Hawking--Page solutions in self-similar Bondi gauge, in order to use the general theory developed in Section \ref{general self similar solutions}. In particular, we can view each Lorentzian Hawking--Page solution as an orbit of the autonomous system $(\ref{system})$, with its future Cauchy horizon given by a singular point of the orbit. This will be essential in order to extend the solutions to the cone exterior region. We will also obtain the interesting fact that all $(4+1)$-dimensional scale-invariant vacuum solutions with an $SO(3)\times U(1)$ isometry, satisfying Assumptions \ref{assumption}, \ref{assumption non trapped} and \ref{assumption scale} are given by a Lorentzian Hawking--Page solution in the cone interior region.

\subsubsection{Lorentzian Hawking--Page solutions in self-similar Bondi gauge}\label{Hawking--Page solutions in self-similar Bondi gauge}
We write the Lorentzian Hawking--Page solutions in self-similar Bondi gauge. We define the outgoing null coordinate:
\[U(\tau,\rho)=-\tau\exp\bigg(\int_{\rho_+}^\rho\frac{d\tilde{\rho}}{\sqrt{V(\tilde{\rho})}}\bigg).\]
For notation purposes, it is convenient to introduce the function $F(\rho)$ such that $\tau=-UF.$ We compute: 
\[\partial_U=-F\partial_{\tau},\ \partial_{\rho}=\partial_{\rho}+\frac{UF}{\sqrt{V}}\partial_{\tau}.\]
Therefore, in $(U,\rho)$ coordinates we have:
\[S=\frac{2}{3}U\partial_U,\ r=\sqrt{\frac{b}{2\pi}}(-U)^{3/2}F^{3/2}V^{1/4}\rho.\]
\[g=\frac{b}{2\pi}UF^3\sqrt{V}dU^2-2\frac{b}{2\pi}U^2F^3dUd\rho+r^2d\sigma_2^2.\]
We notice that $\partial_U$ is a future directed timelike vectorfield. We still have the gauge freedom $U=U(u)$, so we make the change of coordinates:
\[-u=(-U)^{3/2}.\]
In $(u,\rho)$ coordinates we have:
\[S=u\partial_u,\ r=-\sqrt{\frac{b}{2\pi}}uF^{3/2}V^{1/4}\rho,\]
\[g=-\frac{b}{2\pi}\frac{4}{9}F^3\sqrt{V}du^2+2\frac{b}{2\pi}\frac{2}{3}uF^3dud\rho+r^2d\sigma_2^2.\]
The desired self-similar Bondi coordinates on $\mathcal{U}^{(3+1)}$ are $(u,r),$ since we have:
\[S=u\partial_u+r\partial_{r}.\]
The region $\mathcal{U}^{(3+1)}$ is covered by the coordinate patch $(\tau,\rho,\theta,\varphi),$ so we have the explicit formulas:
\begin{equation}\label{HP in Bondi coordinates}
    g=-\bigg(\frac{b}{2\pi}\frac{4}{9}F^3\sqrt{V}+\frac{b}{2\pi}\frac{4}{3}\frac{1}{A}F^3F^{3/2}V^{1/4}\rho\bigg)du^2-2\bigg(\sqrt{\frac{b}{2\pi}}\frac{2}{3}F^3\frac{1}{A}\bigg)dudr+r^2d\sigma_2^2
\end{equation}
\begin{equation}
    A=-F^{3/2}V^{-1/4}\rho\bigg(\frac{3}{2}-\frac{\sqrt{V}}{\rho}-\frac{\dot{V}}{4\sqrt{V}}\bigg)
\end{equation}
\begin{equation}\label{interior HP phi formula}
    \phi=\pm\frac{\sqrt{3}}{2}\log\bigg(\frac{b}{2\pi}F\sqrt{V}\bigg)-k\log(-u).
\end{equation}
Using the definition of $V$, we can readily compute that $A>0$ for $\rho\in(\rho_+,\infty).$ We can now identify $e^{2\nu}$ and $e^{\nu+\lambda}$ in the above formula, so we also compute:
\begin{equation}\label{HP beta formula}
    \beta=\frac{1}{3}\bigg(\frac{3}{2}-\frac{\sqrt{V}}{\rho}-\frac{\dot{V}}{4\sqrt{V}}\bigg)
\end{equation}
\begin{equation}\label{HP theta formula}
    \theta=\frac{\pm\sqrt{3}\rho(\dot{V}-2\sqrt{V})}{4V+\dot{V}\rho-6\sqrt{V}\rho}
\end{equation}
The second equation follows because:
\[\theta=r\partial_{r}\phi=\frac{F^{3/2}V^{1/4}\rho}{A}\bigg(\partial_{\rho}\phi-\frac{\tau}{\sqrt{V}}\partial_{\tau}\phi\bigg).\]

\subsubsection{Cone interior solutions}

We expressed the Lorentzian Hawking--Page solutions in self-similar Bondi gauge, so from now on we identify them with the corresponding orbits of the autonomous system $(\ref{system})$. We begin by computing the radius function along outgoing null cones when the coordinates $(\tau,\rho)$ become singular. Since $V\sim \rho^2$ as $\rho\xrightarrow[]{}\infty,$ we have:
\[\lim_{\rho\xrightarrow[]{}\infty}\rho\exp\bigg(-\int_{\rho_+}^\rho\frac{d\tilde{\rho}}{\sqrt{V(\tilde{\rho})}}\bigg)=C_1(m),\]
for some constant $C_1$ depending on $m$. Thus, we get that on the outgoing null cone of constant $u$ we have:
\[r_*(u):=\lim_{\rho\xrightarrow[]{}\infty}r(u,\rho)=-\sqrt{\frac{b}{2\pi}}u\big(C_1(m)\big)^{\frac{3}{2}}.\]
As in the general setting, we introduce the coordinate:
\[e^s=-\frac{r}{u},\]
and denote $s_*=s(r_*(u),u).$ We remark that $r\xrightarrow[]{}0$ as $\rho\xrightarrow[]{}\rho_+,$ so in $(u,s)$ coordinates the center $\Gamma^{(4+1)}$ is given by $s\xrightarrow[]{}-\infty$. Moreover, the Lorentzian Hawking--Page solutions on $\mathcal{U}^{(4+1)}$ are defined for $s\in(-\infty,s_*),$ so $\{s=s_*\}$ represents their future Cauchy horizon.

\begin{proposition}
    The Lorentzian Hawking--Page solutions on $\mathcal{U}^{(4+1)}\cup\Gamma^{(4+1)}$ are cone interior solutions. Moreover, in $(u,s)$ coordinates we have $\beta\xrightarrow[]{}0$ and $\theta\xrightarrow[]{}\mp\sqrt{3}=\frac{1}{k}$ as $s\xrightarrow[]{}s_*.$
\end{proposition}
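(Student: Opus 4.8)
The statement has two parts: the cone interior property, and the limiting values of $\beta$ and $\theta$ at the future Cauchy horizon $s \to s_*$. I would dispatch the \emph{moreover} part first, since it is a self-contained computation that also feeds into the cone interior argument. Recall from Section~\ref{Hawking--Page solutions in self-similar Bondi gauge} that the Lorentzian Hawking--Page solution is realized as an orbit of the system~(\ref{system}) through the explicit formulas~(\ref{HP beta formula}) and~(\ref{HP theta formula}), in which $\beta$ and $\theta$ are functions of $\rho$ alone. Since $r$ is a Bondi radial coordinate, it increases monotonically along each outgoing cone of constant $u$ from $0$ at the center to the finite limit $r_*(u)$, so $s \to s_*$ corresponds to $\rho \to \infty$. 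I would then substitute $V = \rho^2 + 1 - \tfrac{2m}{\rho}$, $\dot V = 2\rho + \tfrac{2m}{\rho^2}$ and the expansion $\sqrt V = \rho + \tfrac{1}{2\rho} + O(\rho^{-2})$ into~(\ref{HP beta formula}) and~(\ref{HP theta formula}).

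The delicate point is that both limits are governed entirely by subleading terms. In $\beta$, the three contributions $\tfrac32$, $-\tfrac{\sqrt V}{\rho}$ and $-\tfrac{\dot V}{4\sqrt V}$ each tend to the constants $\tfrac32$, $-1$, $-\tfrac12$ but combine at $O(\rho^{-2})$, giving $\beta \to 0^-$. In $\theta$, both the numerator $\pm\sqrt 3\,\rho(\dot V - 2\sqrt V)$ and the denominator $4V + \dot V \rho - 6\sqrt V \rho$ lose their naive leading powers to cancellation (the $2\rho$ in $\dot V - 2\sqrt V$ cancels, and the $\rho^2$ and $\rho^{-1}$ terms in the denominator cancel), leaving the finite ratio $\theta \to \mp\sqrt 3 = \tfrac{1}{k}$. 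This careful bookkeeping of the order of expansion is the crux of the computation.

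For the cone interior property, I would evaluate the same formulas in the opposite limit $\rho \to \rho_+^+$, which is the center $s \to -\infty$. Using $V(\rho_+) = 0$ together with $\dot V(\rho_+) > 0$ (immediate from $\dot V = 2\rho + \tfrac{2m}{\rho^2} > 0$) one finds $(\theta,\alpha) \to (\pm\sqrt 3, 0) = P_{\pm}$, so the orbit meets the boundary condition at the center of Section~\ref{Boundary condition at the center}. Consequently the lemma of Section~\ref{interior solutions section} applies and gives $\alpha < 0$, equivalently $\beta < 0$, throughout the maximal interval of existence. Replaying the argument in the proof of Proposition~\ref{general interior solution proposition}, the condition $\beta < 0$ forces the scaling vector field $S$ to be timelike along each curve of constant $x$, placing every point of $\mathcal{U}^{(4+1)} \cup \Gamma^{(4+1)}$ in the chronological past of the scaling origin, which is exactly the cone interior property. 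Finally, the limit $\beta \to 0^-$ above yields $\alpha \to -\infty$ as $s \to s_*$ with $s_*$ finite, so $s_*$ is precisely the blow-up time of~(\ref{system}); this identifies the chart $\mathcal{U}^{(4+1)} = \{s < s_*\}$ with the maximal interval and, via Corollary~\ref{reach C0- condition}, identifies $\{s = s_*\}$ with the boundary $C_0^-$.

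The individual steps are routine; the genuine obstacle is the bookkeeping of the cancellations in the $\rho \to \infty$ expansions, since the answers depend on terms several orders below the leading behavior, and ensuring that the degeneration of the $(\tau,\rho)$ chart as $\rho \to \infty$ is correctly matched to the ODE picture, namely the finite-time blow-up $\alpha \to -\infty$ at $s_*$.
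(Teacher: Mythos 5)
Your proof is correct, and your \emph{moreover} part coincides with the paper's: the paper likewise expands $\sqrt{V}=\rho+\frac{1}{2\rho}-\frac{m}{\rho^2}+O(\rho^{-3})$ and substitutes into (\ref{HP beta formula}) and (\ref{HP theta formula}); your bookkeeping of the cancellations (the $O(\rho^{-2})$ cancellation in $\beta$, and the loss of the leading powers in the numerator and denominator of $\theta$) matches what the paper leaves implicit. Where you genuinely diverge is the cone interior part. The paper simply verifies $\beta(\rho)<0$ on all of $(\rho_+,\infty)$ directly from the explicit formula (\ref{HP beta formula}) — an algebraic inequality which, after clearing denominators, is the same computation as $A>0$ in Section \ref{Hawking--Page solutions in self-similar Bondi gauge}, since $A=-3F^{3/2}V^{-1/4}\rho\,\beta$ — and then cites Proposition \ref{general interior solution proposition}. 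You instead compute the limit at the center ($V(\rho_+)=0$, $\dot V(\rho_+)>0$ give $\beta\to-\infty$, hence $\alpha\to 0^-$ and $(\theta,\alpha)\to P_\pm$), invoke the monotonicity lemma of Section \ref{interior solutions section} (whose proof uses $k^2=\tfrac13$ to get $\frac{d\alpha}{ds}<0$) to propagate $\alpha<0$, and then must additionally identify the chart $\{s<s_*\}$ with the maximal interval of existence via the finite-time blow-up $\alpha\to-\infty$, which you correctly supply. Both routes are sound and both ultimately replay Proposition \ref{general interior solution proposition}; the paper's buys brevity from the explicitness of the solution, while yours is the argument that would survive without an explicit formula, using only the boundary condition at the center and the structure of the autonomous system. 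One small remark: your use of monotonicity of $r$ along outgoing cones to equate $s\to s_*$ with $\rho\to\infty$ is legitimate only because $A>0$ was already established when the Bondi gauge was set up; since $A>0$ is literally equivalent to $\beta<0$, that prior computation already contains the global sign of $\beta$, which is why the paper's shortcut costs nothing.
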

\begin{proof}
    It is immediate to show that $\beta(\rho)<0$ on $(\rho_+,\infty)$, so by Proposition \ref{general interior solution proposition} we obtain that they are cone interior solutions. We also compute the expansion as $\rho\xrightarrow[]{}\infty$:
    \[\sqrt{V}=\rho+\frac{1}{2\rho}-\frac{m}{\rho^2}+O(\rho^{-3}).\]
    By formulas (\ref{HP beta formula}) and (\ref{HP theta formula}), we get that $\lim_{\rho\xrightarrow[]{}\infty}\beta(\rho)=0$ and $\lim_{\rho\xrightarrow[]{}\infty}\theta(\rho)=\mp\sqrt{3}.$
\end{proof}

\begin{corollary}\label{HP solutions interior behaviour}
Consider any Lorentzian Hawking--Page solution. In the $(\theta,\beta)$ plane the corresponding orbit approaches $(\frac{1}{k},0)$ as $s\xrightarrow[]{}s_*$. In the $(\theta,\alpha)$ plane the orbit starts at $(\pm\sqrt{3},0)$ as $s\xrightarrow[]{}-\infty$ and has a vertical asymptote at $\theta=\frac{1}{k}$ in the lower half plane as $s\xrightarrow[]{}s_*$.
\end{corollary}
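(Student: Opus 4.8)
The plan is to treat this as a repackaging of facts already in hand, so the work reduces to correctly matching limits and tracking signs rather than producing new estimates. The Proposition immediately preceding the corollary already establishes that $\beta \to 0$ and $\theta \to \mp\sqrt{3}$ as $s \to s_*$. Since the regularity condition at the center forced $k = \mp\frac{1}{\sqrt{3}}$, we have $\frac{1}{k} = \mp\sqrt{3}$, and hence $\theta \to \frac{1}{k}$. Together these give the first assertion directly: in the $(\theta,\beta)$ plane the orbit approaches $(\frac{1}{k}, 0)$ as $s \to s_*$.

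For the behavior near the center in the $(\theta,\alpha)$ plane, I would simply invoke Proposition \ref{expansion near center proposition}, which shows $(\theta,\alpha)|_\Gamma = (\pm\sqrt{3}, 0)$; since the center $\Gamma^{(4+1)}$ corresponds to $s \to -\infty$, this is exactly the claim that the orbit starts at $(\pm\sqrt{3}, 0)$ as $s \to -\infty$.

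The only point that needs care is passing from the $\beta$ limit to the $\alpha$ picture near $s_*$. Using $\beta = 1/\alpha$, the limit $\beta \to 0$ forces $|\alpha| \to \infty$, so the orbit escapes every compact set as $s \to s_*$. To place the asymptote in the lower half plane I would use the Lemma that $\alpha < 0$ on the maximal interval of existence (equivalently $\beta < 0$, which is checked directly on $(\rho_+, \infty)$ in the preceding Proposition). Thus $\beta \to 0^-$ and $\alpha \to -\infty$ while $\theta \to \frac{1}{k}$, which is precisely a vertical asymptote at $\theta = \frac{1}{k}$ in the lower half plane.

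I do not expect a genuine obstacle here: the content is bookkeeping rather than analysis. The one thing to get right is the consistency of the $\pm$ and $\mp$ conventions carried over from the Kaluza--Klein reduction, so that $\frac{1}{k} = \mp\sqrt{3}$ is paired with the correct branch and the divergence of $\alpha$ is to $-\infty$ rather than $+\infty$. Once these signs are pinned down the corollary is immediate from the cited results with no further estimates.
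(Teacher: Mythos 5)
Your proposal is correct and follows essentially the same route as the paper: the corollary is immediate bookkeeping from the preceding Proposition ($\beta\to 0$, $\theta\to\mp\sqrt{3}=\tfrac{1}{k}$ as $s\to s_*$), Proposition \ref{expansion near center proposition} for the behavior at the center, and the negativity of $\beta$ (equivalently $\alpha<0$) to convert $\beta\to 0^-$ into $\alpha\to-\infty$, placing the asymptote in the lower half plane. The sign conventions you flag are handled exactly as in the paper, so nothing further is needed.
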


\begin{center}
\begin{tikzpicture}[decoration={markings, mark=at position 0.6 with {\arrow{latex}}}] 
\draw[dashed] (1.73,0) -- (1.73,-5);
\draw[thick][postaction={decorate}] (-1.73,0) .. controls (-2.7,-2) and (1.6,-2.5) ..(1.6,-4.95);
\draw[very thin] (-5,0) -- (5,0);
\draw[very thin] (0,-5) -- (0,0);
\filldraw[color=black, fill=black] (0,0) circle (2pt);
\filldraw[color=black, fill=black] (-1.73,0) circle (2pt) node[anchor=south east] {$P_-$};
\filldraw[color=black, fill=black] (0.57,0) circle (2pt) node[anchor=south west] {$Q_-$};
\draw[thick][postaction={decorate}] (-1.73,0) -- (-4.6,-4.96);
\draw[very thin] (-1.73,0) -- (-0.3,-0.83);
\draw[very thin]   plot[smooth,domain=-2.6:0] ({-1.73+(0.58*\x)+(0.33*(\x)^2)},\x);
\end{tikzpicture}
\end{center}

We proved that the orbits of (\ref{system}) corresponding to the Lorentzian Hawking--Page solutions blow up in finite time, with $\alpha\xrightarrow[]{}-\infty.$ Based on our discussion at the end of Section \ref{interior solutions section}, this implies that each solution is defined everywhere in the chronological past of the scaling origin. Thus, the future Cauchy horizon of each Lorentzian Hawking--Page solution corresponds to the singular point of its orbit. In the next section, we continue the orbits beyond the singular point, so we extend the solutions to the cone exterior region.

We conclude the section by showing that the Lorentzian Hawking--Page solutions are the only possible interior solutions to the general problem considered in Section \ref{general self similar solutions}:

\begin{proposition}\label{interior all solutions}
     Consider a $(4+1)$-dimensional scale-invariant vacuum solution with an $SO(3)\times U(1)$ isometry, which satisfies Assumptions \ref{assumption}, \ref{assumption non trapped} and \ref{assumption scale}. Let $X$ be the corresponding parameter from Proposition \ref{expansion near center proposition}. Then, up to the gauge freedom (\ref{gauge freedom}), the solution is the Lorentzian Hawking--Page solution with:
    \[9\rho_+^2=\frac{1}{X}.\]
\end{proposition}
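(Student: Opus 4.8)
The plan is to reduce the statement to the uniqueness clause of Proposition \ref{expansion near center proposition}. That proposition tells us two things: first, that any vacuum spacetime satisfying Assumptions \ref{assumption}, \ref{assumption non trapped} and \ref{assumption scale} carries a well-defined parameter $X>0$ governing the expansion of its orbit at the center; and second, that for each $X>0$ there is \emph{at most one} such solution in self-similar Bondi gauge, up to the gauge freedom (\ref{gauge freedom}). Since each Lorentzian Hawking--Page solution has already been shown to satisfy all three assumptions and to start at $P_\pm$, it is one of these solutions and therefore has its own parameter $X=X(\rho_+)$. The strategy is thus: (1) compute $X(\rho_+)$ explicitly, showing $X(\rho_+)=\frac{1}{9\rho_+^2}$; (2) observe that $\rho_+\mapsto X(\rho_+)$ is a bijection of $(0,\infty)$ onto $(0,\infty)$; and (3) invoke uniqueness to conclude that an arbitrary solution with parameter $X$ must coincide, up to gauge, with the Hawking--Page solution for which $9\rho_+^2=1/X$.

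For step (1), the cleanest route is to extract $X$ from the behavior of $e^{2\lambda}$ rather than from (\ref{theta in terms of alpha}) directly. Recall from Section \ref{expansion near center section} that near the center $e^{2\lambda}=X\alpha^2+O(|\alpha|^3)$, so that $X=\lim_{s\to-\infty}e^{2\lambda}/\alpha^2$. I would compute both quantities for the Hawking--Page solution from the explicit expressions. Setting $\epsilon=\rho-\rho_+$ and using that $\rho_+$ is a simple root of $V$, so that $V=\dot V_+\epsilon+O(\epsilon^2)$ with $\dot V_+=\frac{3\rho_+^2+1}{\rho_+}$, the formula (\ref{HP beta formula}) gives the leading behavior $\beta\sim-\frac{\sqrt{\dot V_+}}{12}\epsilon^{-1/2}$, hence $\alpha=1/\beta\sim-\frac{12}{\sqrt{\dot V_+}}\epsilon^{1/2}$ and $\alpha^2\sim\frac{144}{\dot V_+}\epsilon$. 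For $e^{2\lambda}$ I would use either (\ref{e{2lambda}}) together with (\ref{HP theta formula}), or more transparently the Hawking mass relation $e^{-2\lambda}=|\nabla r|_g^2$ computed directly from the diagonal metric (\ref{HP solutions metric}) and the area radius (\ref{HP solutions rho}); this yields $e^{-2\lambda}=\frac{1}{V}\big(\tfrac14\dot V\rho+V\big)^2-\tfrac94\rho^2$, whose reciprocal behaves like $e^{2\lambda}\sim\frac{16}{\dot V_+\rho_+^2}\epsilon$ near $\rho_+$. Taking the ratio gives $X=\frac{16/(\dot V_+\rho_+^2)}{144/\dot V_+}=\frac{1}{9\rho_+^2}$, as claimed.

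For step (2), I note that $V(\rho)=\rho^2+1-\frac{2m}{\rho}$ satisfies $\rho V(\rho)=\rho^3+\rho-2m$, which is strictly increasing in $\rho$, so for each $m>0$ there is a unique positive root $\rho_+$, determined by $2m=\rho_+^3+\rho_+$; as $m$ ranges over $(0,\infty)$, $\rho_+$ ranges over all of $(0,\infty)$, and hence $X=\frac{1}{9\rho_+^2}$ sweeps out all of $(0,\infty)$. Combining this with step (1), for any admissible $X>0$ there exists a Hawking--Page solution realizing it. Finally, given an arbitrary solution satisfying the hypotheses with parameter $X$, the Hawking--Page solution with $9\rho_+^2=1/X$ has the same parameter and also satisfies the hypotheses; by the uniqueness part of Proposition \ref{expansion near center proposition} the two agree up to the gauge freedom (\ref{gauge freedom}), which is precisely the assertion.

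The main obstacle I anticipate is the bookkeeping in step (1): because $V$ vanishes to first order at $\rho_+$, all expansions near the center are in powers of $\sqrt\epsilon$ rather than $\epsilon$, so one must be careful to track the half-integer orders and to verify that the leading coefficients of $e^{2\lambda}$ and $\alpha^2$ are computed to the consistent order in $\epsilon$. Extracting $X$ from the $\theta$--$\alpha$ expansion (\ref{theta in terms of alpha}) directly is error-prone for this reason, which is why routing the computation through the manifestly geometric quantity $e^{-2\lambda}=|\nabla r|_g^2$ is preferable; everything else is a routine verification.
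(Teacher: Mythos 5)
Your proposal is correct and follows essentially the same route as the paper's proof: both extract the parameter via $X=\lim_{s\to-\infty}e^{2\lambda}/\alpha^2$, compute this limit for the Hawking--Page solutions to obtain $\frac{1}{9\rho_+^2}$, observe that $m\mapsto\rho_+$ sweeps out $(0,\infty)$ so every $X>0$ is realized, and conclude by the uniqueness clause of Proposition \ref{expansion near center proposition}. The only differences are cosmetic: you evaluate $e^{-2\lambda}=|\nabla r|_g^2$ directly from the diagonal $(\tau,\rho)$ form of the metric, whereas the paper computes the equivalent Bondi-gauge ratio $\lim_{\rho\to\rho_+}\frac{(e^{\nu+\lambda})^2}{e^{2\nu}}\beta^2$ (both give $\frac{1}{9\rho_+^2}$), and the paper additionally cites Remark \ref{conical singularity remark} to emphasize that the regularity at the center fixes the residual freedom $\phi\to\phi+C$, a point already subsumed in the uniqueness statement you invoke.
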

\begin{proof}
    By the expansions near the center in Section \ref{expansion near center section}, we know that:
    \[X=\lim_{s\xrightarrow[]{}-\infty}\frac{e^{2\lambda}}{\alpha^2}.\]
    For Lorentzian Hawking--Page solutions, we compute:
    \[\lim_{\rho\xrightarrow[]{}\rho_+}\frac{e^{2\lambda}}{\alpha^2}=\lim_{\rho\xrightarrow[]{}\rho_+}\frac{(e^{\nu+\lambda})^2}{e^{2\nu}}\cdot\beta^2=\frac{1}{9\rho_+^2}.\]
    We notice that $\rho_+\xrightarrow[]{}0$ as $m\xrightarrow[]{}0$ and $\rho_+\xrightarrow[]{}\infty$ as $m\xrightarrow[]{}\infty,$ so for every $X>0$ there exists $m>0$ such that $9\rho_+^2=1/X.$ By the uniqueness result in Proposition \ref{expansion near center proposition} and the regularity in Remark \ref{conical singularity remark}, we conclude that the solution is given by the Lorentzian Hawking--Page solution with $9\rho_+^2=1/X$, up to the gauge freedom (\ref{gauge freedom}).
\end{proof}
\begin{remark}
    Given an orbit of (\ref{system}) satisfying the expansions in Section \ref{expansion near center section}, we obtain a $(4+1)$-dimensional scale-invariant vacuum solution with an $SO(3)\times U(1)$ isometry, which satisfies Assumptions \ref{assumption non trapped} and \ref{assumption scale}. According to the above result, there exists a transformation $\phi\rightarrow\phi+C$ such that the spacetime is precisely a Lorentzian Hawking--Page solution, up to the gauge freedom (\ref{gauge freedom}). In particular, the spacetime is regular at the center and it also satisfies Assumption \ref{assumption}. We notice that we obtained regularity at the center by classifying all the possible interior solutions.
\end{remark}

\section{Extensions of Lorentzian Hawking--Page Solutions}\label{second big section}

In this section we prove the main result of the paper by constructing all the maximal extensions of the Lorentzian Hawking--Page solutions in the class of $(4+1)$-dimensional scale-invariant vacuum spacetimes with an $SO(3)\times U(1)$ isometry. We begin with an analysis of the singular point of the orbits of (\ref{system}), which corresponds to the future Cauchy horizon of the Lorentzian Hawking--Page solutions. We prove that each interior Lorentzian Hawking--Page solution bifurcates to a one-parameter family of exterior solutions with a loss of regularity across the cone. We then consider the extensions in the regular non-trapped region. In the case of extensions to spacetimes with a null curvature singularity, the regular region represents the maximal extension of the interior Lorentzian Hawking--Page solutions. However, for the other extensions we need to study the trapped region in order to construct spacetimes with a spacelike singularity and spacetimes with a null Cauchy horizon of Taub--NUT type.

\subsection{Extending Beyond the Cone Interior Region}\label{Extending Beyond the Interior Region section}

We consider a fixed Lorentzian Hawking--Page spacetime. According to Corollary \ref{HP solutions interior behaviour}, the corresponding orbit of (\ref{system}) blows up with $\theta\rightarrow\pm\sqrt{3},\ \alpha\rightarrow-\infty$ at the future Cauchy horizon $C_0^-=\{s=s_*\}$. We can continue the orbit beyond this singular point by using the methods of Christodoulou in \cite{nakedsingularities}. This corresponds to extending the original spacetime into the cone exterior region. 

We recall that in the $(\theta,\beta)$ plane the autonomous system takes the form:

\begin{equation}\label{system theta beta}
    \begin{dcases}
    \frac{d\theta}{ds}=\frac{k}{\beta}\big(k\theta-1\big)+\theta\big[(\theta+k)^2-(1+k^2)\big] \\
    \frac{d\beta}{ds}=1-k^2-\beta\big[(\theta+k)^2+(1-k^2)\big]
    \end{dcases}
\end{equation}

The solution blows up at the point $(\frac{1}{k},0)$ as $s\xrightarrow[]{}s_*$. To deal with this issue, we define the new variable $t$ by:
\[\frac{ds}{dt}=\beta.\]
Using (\ref{system theta beta}), we note that: \[\frac{\beta}{s-s_*}\xrightarrow[]{}1-k^2\text{ as }s\xrightarrow[]{}s_*.\] This implies that $t\xrightarrow[]{}-\infty$ as $s\xrightarrow[]{}s_*$. In terms of the $t$ variable, the system becomes:
\begin{equation}\label{system in t variable}
    \begin{dcases}
    \frac{d\theta}{dt}=k\big(k\theta-1\big)+\beta\theta\big[(\theta+k)^2-(1+k^2)\big] \\
    \frac{d\beta}{dt}=\beta(1-k^2)-\beta^2\big[(\theta+k)^2+(1-k^2)\big]
    \end{dcases}
\end{equation}

We remark that the point $(\frac{1}{k},0)$, which was a singular point for (\ref{system theta beta}), is a negatively stable critical point for (\ref{system in t variable}). Moreover, there exists an orbit that corresponds to the interior Lorentzian Hawking--Page solution considered above. We could repeat the argument using the Poincaré-Dulac theorem as in Appendix \ref{Appendix Expansion Subsection} to describe the behavior of the solutions in a neighborhood of the critical point. However, this was already carried out in \cite{nakedsingularities}, so we quote directly their result. There exists a parameter $a_1^{(i)},$ such that for $s<s_*$ the interior Lorentzian Hawking--Page solution has the expansion:
\[\beta=\frac{2}{3}(s-s_*)+O(|s-s_*|^2),\]
\[\theta-\frac{1}{k}=\frac{12}{k}\beta+a^{(i)}_1\big(s-s_*\big)^{\frac{1}{2}}+O(|s-s_*|).\]
We point out that according to \cite{nakedsingularities}, the parameter $a_1^{(i)}=0$ corresponds to an explicit analytic solution of the Einstein-Scalar field equations. However, the associated orbit of (\ref{system}) starts at the origin as $s\rightarrow-\infty.$ By our results in Section \ref{Boundary condition at the center}, we have $a_1^{(i)}\neq0$ for any Lorentzian Hawking--Page solution.

In order to extend the interior Lorentzian Hawking--Page solution, we are free to choose any orbit of (\ref{system in t variable}) which converges to $(\frac{1}{k},0)$ as $t\rightarrow-\infty$ and satisfies $\beta>0$. As a result, the cone interior solution bifurcates to a 1-parameter family of solutions. For any parameter $a_1^{(e)}$, we can extend to a solution in the region $s>s_*$ with expansion:
\[\beta=\frac{2}{3}(s-s_*)+O(|s-s_*|^2),\]
\[\theta-\frac{1}{k}=\frac{12}{k}\beta+a^{(e)}_1\big(s_*-s\big)^{\frac{1}{2}}+O(|s-s_*|).\]
This determines the metric $g$ in self-similar Bondi coordinates on $s>s_*.$ To compute $\phi$ starting from $\theta,$ we use the boundary values induced by (\ref{interior HP phi formula}) at $s=s_*.$

We remark that according to Corollary \ref{reach C0- condition}, in the extended solution we have $C_0^-=\{\beta=0\}=\{s=s_*\}$. Thus, the original Lorentzian Hawking--Page solution defined in Section \ref{Interior Hawking--Page Solutions section} represents the cone interior solution in the extended spacetime. 

The solution is smooth away from the cone $C_0^-.$ The regularity at $s=s_*$ is determined by the above expansions. In Bondi coordinates we have that $\phi\in C^{1,1/2}$ along outgoing cones of constant $u$. Thus, we also have $f\in C^{1,1/2},$ which represents the area radius of the spheres $S^1$, so the vacuum metric has the same regularity. We cannot expect to have better regularity, since $a_1^{(i)}\neq0$ for our solutions. 

\begin{remark}
    The problem of extending Lorentzian cone metrics to the cone exterior region was also addressed by Anderson in \cite{Andersoncone} and \cite{Anderson}. The argument used the Fefferman-Graham expansion of \cite{Fefferman-Graham} and a form of Wick rotation. We consider the geodesic compactification of the Hawking--Page solutions $g_m^{(4)}$ to be $\tilde{g}_m^{(4)}=R^2g_m^{(4)},$ where $R(p)$ is the distance to the boundary with respect to the conformal metric. We have:
    \[\tilde{g}_m^{(4)}=dR^2+g_R,\]
    where $g_{R}$ are metrics on the conformal boundary $S^2\times S^1.$ We compute the Fefferman-Graham expansion:
    \[g_R=\big(C_1(m)\big)^2\cdot\bigg(1-\frac{1}{2}R^2+\frac{2m}{3}R^3+\cdots\bigg)d\sigma_2^2+\bigg(C_1(m)\frac{b}{2\pi}\bigg)^2\cdot\bigg(1+\frac{1}{2}R^2-\frac{4m}{3}R^3+\cdots\bigg)d\sigma_1^2.\]
    We recall that the Lorentzian Hawking--Page solutions are given by:
    \[g_m^{(4+1)}=-d\tau^2+\frac{\tau^2}{R^2}\big(dR^2+g_R\big).\]
    According to \cite{Anderson}, each solution has a regular Fefferman-Graham expansion, but the metrics are not smooth across the cone $C_0^-$ in this case because of the presence of the cubic terms in the expansions. The regularity obtained by this argument for the extended metric across the cone $C_0^-$ is $C^{1,1/2}$, as in our above proof.
\end{remark}

We briefly argue that our extended solutions are uniquely determined by characteristic initial data induced on an outgoing null cone $C_0^+.$ The cone interior solution has smooth initial data, so it is uniquely determined. We fix some $s_0<s_*$ such that $m>0$ on the timelike curve $\{s=s_0\}.$ For any $s_1\in(s_0,s_*),$ we denote by $C_{[s_0,s_1]}^-$ the incoming null cone starting at $C_0^+\cap\{s=s_1\},$ restricted to $s\in[s_0,s_1].$ We consider the initial data for the Einstein-Scalar field equations induced on $C_{[s_0,s_1]}^-\cup C_0^+.$ We recall the extension criterion of \cite{kommemi}, according to which we have $r=0$ at a first singularity away from the center. We point out that we can apply the extension criterion even though we have $\theta\in C^{0,1/2}$ on $C_0^+\big|_{s\geq s_0}$, because of the BV theory developed in \cite{BV}. Using this we get that our scale-invariant solution is uniquely determined, as long as $r\neq 0$ on $C_0^+.$ The same extension criterion can be used to argue in the case of non-characteristic initial data. 

We summarize the results proved so far in this section:

\begin{proposition}
    Each interior Lorentzian Hawking--Page solution can be extended in a neighborhood of $C_0^-$ in the cone exterior region to a 1-parameter family of scale-invariant globally hyperbolic vacuum solutions with an $SO(3)\times U(1)$ isometry. The solutions are smooth away from $C_0^-$, while across $C_0^-$ the regularity is $C^{1,1/2}$. Moreover, they are uniquely determined by characteristic initial data induced on an outgoing null cone.
\end{proposition}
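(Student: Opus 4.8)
The plan is to assemble the pieces established throughout Section \ref{Extending Beyond the Interior Region section} into a single statement. First I would recall from Corollary \ref{HP solutions interior behaviour} that a fixed interior Lorentzian Hawking--Page solution corresponds to an orbit of (\ref{system theta beta}) that blows up at the point $(\frac{1}{k},0)$ as $s\to s_*$, where $\{s=s_*\}$ is its future Cauchy horizon. The central idea is to desingularize this point: introducing the reparametrization $\frac{ds}{dt}=\beta$ transforms (\ref{system theta beta}) into the system (\ref{system in t variable}), for which $(\frac{1}{k},0)$ becomes a genuine critical point rather than a blow-up point. Since $\frac{\beta}{s-s_*}\to 1-k^2$, we have $t\to-\infty$ as $s\to s_*$, so the interior solution is the orbit of (\ref{system in t variable}) converging to $(\frac{1}{k},0)$ as $t\to-\infty$.

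Next I would analyze the flow near this critical point. As $(\frac{1}{k},0)$ is negatively stable for (\ref{system in t variable}), one obtains a one-parameter family of orbits approaching it as $t\to-\infty$ with $\beta>0$, and this is exactly where the bifurcation occurs. Rather than redo the normal form computation, I would quote the Poincaré--Dulac analysis of \cite{nakedsingularities} (the analogue of the argument in Appendix \ref{Appendix Expansion Subsection}) to extract the expansions parametrized by $a_1^{(e)}$. Each admissible choice of $a_1^{(e)}$ yields a solution of (\ref{system}) on $s>s_*$ with $\beta>0$, hence --- by Corollary \ref{reach C0- condition}, since $\{\beta=0\}$ is precisely $C_0^-$ --- an extension into the cone exterior region. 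This produces the claimed one-parameter family. The metric $g$ is then reconstructed in self-similar Bondi coordinates from $(\theta,\beta)$, and $\phi$ is recovered by integrating $\theta$ from the boundary values (\ref{interior HP phi formula}) at $s=s_*$; via the Kaluza--Klein reduction of Proposition \ref{Kaluza Klein proposition}, this $(3+1)$-dimensional data lifts to a $(4+1)$-dimensional scale-invariant vacuum metric with the required $SO(3)\times U(1)$ isometry.

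For the regularity claim I would read it off directly from the half-integer power in the expansion: the leading correction $a_1^{(e)}|s-s_*|^{1/2}$ forces $\phi\in C^{1,1/2}$ along outgoing cones across $C_0^-$, whereas away from $C_0^-$ the orbit stays in the smooth regime so the solution is smooth there. Since $f=\exp(\pm\frac{2}{\sqrt 3}\phi)$ inherits this regularity and is the area radius of the $S^1$ fibers, the reconstructed vacuum metric is $C^{1,1/2}$ across the cone and no better --- the obstruction to smoothness being precisely that $a_1^{(i)}\neq 0$ for every Hawking--Page solution, as established in Section \ref{Boundary condition at the center}.

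The part I expect to require the most care is global hyperbolicity together with uniqueness from characteristic data, because the data induced on $C_0^+$ has only $\theta\in C^{0,1/2}$, below the usual threshold for well-posedness. Here I would invoke the extension criterion of \cite{kommemi} --- that a first singularity away from the center must satisfy $r=0$ --- combined with the BV theory of \cite{BV} to justify applying it at this reduced regularity. This shows the scale-invariant solution is uniquely determined by the characteristic data on an outgoing cone as long as $r\neq 0$ there, and the same criterion handles non-characteristic Cauchy data, yielding global hyperbolicity of the extension.
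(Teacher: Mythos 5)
Your proposal is correct and follows essentially the same route as the paper: desingularizing the singular point via $\frac{ds}{dt}=\beta$ so that $(\frac{1}{k},0)$ becomes a negatively stable critical point of the system in the $t$ variable, quoting the Poincar\'e--Dulac expansions of \cite{nakedsingularities} to obtain the one-parameter family of exterior orbits parametrized by $a_1^{(e)}$, reading the $C^{1,1/2}$ regularity off the square-root term with $a_1^{(i)}\neq 0$ as the obstruction to anything better, and establishing uniqueness via the extension criterion of \cite{kommemi} combined with the BV theory of \cite{BV} to handle the $C^{0,1/2}$ regularity of $\theta$ on the data cone. The only difference is cosmetic: the paper's uniqueness step is slightly more explicit, fixing $s_0<s_*$ with $m>0$ on $\{s=s_0\}$ and posing the characteristic problem on $C^-_{[s_0,s_1]}\cup C_0^+$, but the substance matches yours.
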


We denote the cone exterior region by $\mathcal{E}^{(4+1)}.$ So far, this is only defined in a neighborhood to the future of $C_0^-$. We denote the entire manifold by $\mathcal{M}^{(4+1)}=\Gamma^{(4+1)}\cup\mathcal{U}^{(4+1)}\cup\mathcal{E}^{(4+1)},$ and the two dimensional Lorentzian quotient manifold by $\mathcal{Q}^{(1+1)}=\big(\mathcal{U}^{(4+1)}\cup\mathcal{E}^{(4+1)}\big)/(SO(3)\times U(1)).$ We recall that $\big(\mathcal{Q}^{(1+1)}\times S^2,g,\phi\big)$ is the corresponding solution of the Einstein-Scalar field equations.

Finally, we prove that the vacuum solutions cannot be extended beyond $C_0^-$ near the center. As a result, the scaling origin represents the first singularity at the center:
\begin{proposition}
    For any extension of the Lorentzian Hawking--Page vacuum solution, the sectional curvature blows up along $C_0^-$.
\end{proposition}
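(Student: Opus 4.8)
The plan is to show that a single sectional curvature of $\tilde g$, namely that of a $2$-plane containing the Kaluza--Klein $S^1$-direction, already diverges as one approaches $C_0^-$ from the cone interior, and that this divergence is forced by the nonvanishing of the parameter $a_1^{(i)}$. Since the cone interior solution is common to every extension, it suffices to establish the blow-up from the interior side, i.e.\ as $s\to s_*^-$; this then yields the statement for every extension simultaneously.

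First I would organize the computation using scale invariance. Because $\tilde h_a^*\tilde g=a^2\tilde g$, any sectional curvature $K$ of $\tilde g$ evaluated on a plane built from the self-similar frame has the form $K=u^{-2}\kappa(s)$ for a function $\kappa$ of the self-similar variable $s$ alone (by the $SO(3)\times U(1)$ symmetry it is independent of the angular coordinates); equivalently a scalar invariant such as the Kretschmann scalar takes the form $u^{-4}J(s)$. Thus the metric is manifestly $C^\infty$ off $C_0^-$, and the only possible singular behavior along $C_0^-$ is encoded in the dependence of $\kappa$ on $s$ near $s_*$.

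Next I would isolate which terms can be singular. From \eqref{zeta}, the expression for $e^{2\lambda}$, and \eqref{HP beta formula}--\eqref{HP theta formula} one checks that at $C_0^-$ one has $\beta\to 0,\ \alpha\to-\infty,\ \theta\to 1/k$, while $e^{2\lambda}\to 1+k^2$ and $e^{\nu-\lambda}\to 2e^{s_*}$ remain finite and nonzero; hence $\nu,\lambda,\phi$ and their first $s$-derivatives (expressible through $\theta,\zeta,\beta$) stay bounded, consistent with the metric being $C^1$. The curvature, however, involves second derivatives, and through the Kaluza--Klein relation of Proposition \ref{Kaluza Klein proposition} the sectional curvatures of planes through the $S^1$-factor contain the Hessian $\nabla_i\nabla_j\log f=\pm\tfrac{2}{\sqrt3}\nabla_i\nabla_j\phi$. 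The outgoing--outgoing component of this Hessian equals $\tfrac{d\theta}{ds}$ up to bounded factors and lower-order terms, and it is precisely the one combination of second derivatives of $\phi$ \emph{not} controlled by the wave equation $\square_g\phi=0$. I would therefore compute this one sectional curvature explicitly in self-similar Bondi gauge and exhibit it as $(\text{bounded terms})+c\,\tfrac{d\theta}{ds}$ with $c\neq 0$.

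Finally I would substitute the near-$C_0^-$ expansions of Section \ref{Extending Beyond the Interior Region section}. From $\beta=\tfrac23(s-s_*)+O(|s-s_*|^2)$ and $\theta-\tfrac1k=\tfrac{12}{k}\beta+a_1^{(i)}(s_*-s)^{1/2}+O(|s-s_*|)$ one obtains $\tfrac{d\theta}{ds}=\tfrac8k-\tfrac{a_1^{(i)}}{2}(s_*-s)^{-1/2}+\cdots$, which is unbounded precisely because $a_1^{(i)}\neq 0$ for every Lorentzian Hawking--Page solution, as established in Section \ref{Boundary condition at the center}. Hence $\kappa(s)\to\pm\infty$ as $s\to s_*^-$, so the chosen sectional curvature diverges approaching every point of $C_0^-$, which proves the claim for any extension. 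The main obstacle is the explicit curvature computation in the third step: one must verify that the coefficient $c$ of $\tfrac{d\theta}{ds}$ is genuinely nonzero, i.e.\ that this singular term is not cancelled by the remaining second-derivative contributions, and one must choose the $2$-plane (e.g.\ spanned by the $S^1$-generator and an appropriate radial vector) so that its area element stays bounded away from zero even as $S$ becomes null on $C_0^-$.
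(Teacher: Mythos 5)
Your proposal proves a different statement from the one this proposition is actually making, and its crux is left unexecuted. The paper's proof does not use the transversal $|s-s_*|^{-1/2}$ blow-up coming from $a_1^{(i)}\neq 0$ at all: it evaluates sectional curvatures built entirely out of first-order quantities ($\theta,\zeta,\mu,f,r$), which are \emph{continuous across} $C_0^-$ for every extension, namely (by Appendix \ref{Appendix sectional curvature}) $K^{(4+1)}_{S^2}=\frac{f}{r^2}\big(1-(1-\mu)(k\theta+1)(k\zeta+1)\big)=0$ and $K^{(4+1)}_{\mathcal{Q}^{(1+1)}}=-\frac{f^2}{r^2}$ on $C_0^-$, and then shows the latter diverges as $r\to 0$ \emph{along} $C_0^-$: since $\zeta=-k$ on the incoming cone $C_0^-$, one gets $\phi=-k\log r+O(1)$, hence $f\sim r^{2/3}$, hence $K_{\mathcal{Q}}\to-\infty$ at the vertex. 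So the blow-up the paper establishes is located at the scaling origin $b_\Gamma$ and is driven by the anomalous rate of the $S^1$ radius there, not by the failure of $C^2$ regularity across the cone. This is exactly what the surrounding text requires ("the vacuum solutions cannot be extended beyond $C_0^-$ near the center"; "the scaling origin represents the first singularity at the center"): being phrased in quantities continuous across $C_0^-$, it exhibits a singularity at $b_\Gamma$ that persists even in the $C^{1,1/2}$ well-posedness class in which the paper's extensions live, and it singles out the vertex, since away from the vertex these sectional curvatures are finite on $C_0^-$.

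Your blow-up, by contrast, occurs transversally at \emph{every} point of $C_0^-$ and is nothing more than a quantitative restatement of the fact, already recorded in Section \ref{Extending Beyond the Interior Region section}, that the metric is $C^{1,1/2}$ and no better across the cone because $a_1^{(i)}\neq0$. The mechanism itself is correct and your worry about cancellation can be settled: for the warped product $\tilde g=\frac1f g+f^2d\sigma_1^2$, a plane containing the $S^1$ direction sees only the Hessian of the warping function and no base-curvature contribution (one finds, up to sign conventions, $\tilde R(l,\partial_\psi,l,\partial_\psi)=-f^2\big[\nabla_l\nabla_l\log f+2(l\log f)^2\big]$, and $\nabla_l\nabla_l\log f$ contains $e^{-2\lambda}r^{-2}\frac{d\theta}{ds}$ with nonzero coefficient while all remaining terms are first-order controlled), so the coefficient $c$ is indeed nonzero and the denominator issue is handled by taking, e.g., the plane spanned by $\partial_\psi$ and $l+n$. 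But you defer precisely this computation as "the main obstacle," so the argument is incomplete where it matters; and even when completed it does not yield the proposition's content, because it neither isolates the scaling origin nor produces a singularity that survives in the weak solution class — it is the mild, integrable curvature divergence that the paper deliberately tolerates when it extends across $C_0^-$. Two smaller corrections: by scale invariance the relevant scalings are $K=|u|^{-4/3}\kappa(s)$ and Kretschmann $=|u|^{-8/3}J(s)$, not $u^{-2}$ and $u^{-4}$, since the Bondi scaling $(u,r)\to(au,ar)$ rescales $\tilde g$ by $a^{4/3}$ (as $f\to a^{2/3}f$); this does not affect your argument, but the stated powers are wrong.
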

\begin{proof}
    We notice that $\theta,\ \zeta,\ \mu$ are continuous across $C_0^-,$ so we have:
    \[\theta=\frac{1}{k},\ \zeta=-k,\ \frac{1}{1-\mu}=1+k^2,\ \text{ on }C_0^-.\]
    According to the computations in Appendix \ref{Appendix sectional curvature}, the sectional curvature of the 2-spheres is:
    \[K^{(4+1)}_{S^2(u,s_*)}=\frac{f}{r^2}\big(1-(1-\mu)(k\theta+1)(k\zeta+1)\big)=0.\]
    We compute the sectional curvature of $\mathcal{Q}^{(1+1)}$:
    \[K^{(4+1)}_{\mathcal{Q}^{(1+1)}}(u,s_*)=\frac{f}{r^2}\bigg[1-(1-\mu)(k\theta+1)(k\zeta+1)\bigg]+(1+k^2)\frac{f^2}{r^2}(1-\mu)\theta\zeta=-\frac{f^2}{r^2}.\]
    Since $\zeta=-k$ on $C_0^-,$ we get $\phi+k\log r=O(1)$ as $r\xrightarrow[]{}0.$ Thus, we have $f\sim r^{2/3}$ as $r\xrightarrow[]{}0.$ We conclude that $K^{(4+1)}_{\mathcal{Q}}(u,s_*)\xrightarrow[]{}-\infty$ as $r\xrightarrow[]{}0.$
\end{proof}

\subsubsection{Hawking--Page solutions and non-uniqueness}\label{non uniqueness}
According to \cite{Anderson}, the Hawking--Page solutions are examples of non-uniqueness to the problem of finding Einstein metrics with a given conformal type at infinity. We present this result in the context of Hawking--Page solutions and Lorentzian Hawking--Page solutions. The current section is independent of the rest of our paper and mostly follows \cite{Anderson}, but we include it to point out this interesting property of the Hawking--Page solutions.

We recall that the Hawking--Page metrics on $(\rho_+,\infty)\times S^2\times S^1$ are given by:
\[g^{(4)}_m=V^{-1}d\rho^2+\rho^2d\sigma_2^2+\bigg(\frac{b}{2\pi}\bigg)^2Vd\sigma_1^2,\]
where $m>0,\ V(\rho)=\rho^2+1-\frac{2m}{\rho},$ $\rho_+(m)$ is the largest root of $V$ and $b(m)=\frac{4\pi \rho_+}{3\rho_+^2+1}.$ 

The Hawking--Page metrics have conformal type at infinity:
\[d\sigma_2^2+\bigg(\frac{b}{2\pi}\bigg)^2d\sigma_1^2.\]
We define $m_0=\frac{2}{3\sqrt{3}}$ and compute $b(m_0)=\frac{1}{\sqrt{3}}.$ By studying the equation $3b\rho^2-2\rho+b=0,$ we get that for any $m\neq m_0,$ there exists $\tilde{m}\neq m$ such that $b(m)=b(\tilde{m}).$ In particular, the metrics $g^{(4)}_m$ and $g^{(4)}_{\tilde{m}}$ have the same conformal type at infinity. 

We consider the corresponding Lorentzian Hawking--Page solutions in self-similar Bondi gauge as in Section \ref{Hawking--Page solutions in self-similar Bondi gauge}. We compute on $C_0^-:$
\[r_m(u)=-u\sqrt{\frac{b(m)}{2\pi}}\big(C_1(m)\big)^{\frac{3}{2}},\ \phi_m(u)=-k\log\bigg(-u\bigg(\frac{b(m)}{2\pi}C_1(m)\bigg)^{\frac{3}{2}}\bigg).\]
Using the gauge freedom $u\rightarrow c^2u,$ we get $r_m=r_{\tilde{m}}$ and $\phi_m=\phi_{\tilde{m}}$ on $C_0^-.$ We also have $\beta(m)=\beta(\tilde{m}),\ \theta(m)=\theta(\tilde{m})=\frac{1}{k},\ e^{2\lambda}(m)=e^{2\lambda}(\tilde{m})=\frac{4}{3}$ on $C_0^-,$ so we get that the metrics $g^{(4+1)}_m$ and $g^{(4+1)}_{\tilde{m}}$ agree to first order on $C_0^-.$

\subsection{Exterior Solution in the Regular Region}

We define the regular region $\mathcal{R}^{(4+1)}=\{\mu<1\}\subset\mathcal{M}^{(4+1)}$ to be the region where the orbits $S^2\times S^1$ are not trapped with respect to $g^{(4+1)}$. Similarly, we introduce the notions of apparent horizon $\mathcal{A}^{(4+1)}=\{\mu=1\}\subset\mathcal{M}^{(4+1)}$ and trapped region $\mathcal{T}^{(4+1)}=\{\mu>1\}\subset\mathcal{M}^{(4+1)}.$  We notice that the regular region, apparent horizon and trapped region of $\big(\mathcal{Q}^{(1+1)}\times S^2,g\big)$ are then given by  $\mathcal{R}^{(3+1)}=\big(\mathcal{R}^{(4+1)}-\Gamma\big)/U(1),\ \mathcal{A}^{(3+1)}=\mathcal{A}^{(4+1)}/U(1),\ \mathcal{T}^{(3+1)}=\mathcal{T}^{(4+1)}/U(1).$ We remark that Bondi coordinates are well-defined in the regular region.

The exterior solutions $\big(\mathcal{E}^{(4+1)},g^{(4+1)}\big)$ are only defined in a neighborhood to the future of $C_0^-$ so far. In this section, we describe the maximal extension of the exterior solutions in the regular region. It turns out that the corresponding solutions of the Einstein-Scalar field equations coincide with the ones of \cite{nakedsingularities} in the cone exterior region. The $(3+1)$-dimensional spacetimes have a null curvature singularity or they extend up to the apparent horizon, which we interpret in the context of the $(4+1)$-dimensional vacuum solutions.

We study one of the exterior solutions obtained above in the $(\theta,\alpha)$ plane. We recall that it satisfies the autonomous system:

\begin{equation}\label{system alpha theta}
    \begin{dcases}
    \frac{d\theta}{ds}=k\alpha\big(k\theta-1\big)+\theta\big[(\theta+k)^2-(1+k^2)\big] \\
    \frac{d\alpha}{ds}=\alpha\big[(\theta+k)^2+(1-k^2)(1-\alpha)\big]
    \end{dcases}
\end{equation}
We assume $k=+\frac{1}{\sqrt{3}}$, and note the results in the other case are similar. The system has: the critical points stable to the future $P_0=(-k,1);$ the critical points unstable to the future $P_-=(-\sqrt{3},0),\ Q_-=(k,0);$ and the saddle critical points $(0,0),\ P_1=\big(1,\frac{2}{1-k}\big),\ P_{-1}=\big(-1,\frac{2}{1+k}\big).$ We also remark that the line $\alpha=\frac{\theta}{k}+3$ is an exact solution of the system, except at the critical points $P_{-},\ P_{-1}$ and $P_1.$ This property holds because of the parameter $k^2=\frac{1}{3}$.

Since $\alpha>0$ for $s>s_*,$ the orbit corresponding to our solution has a vertical asymptote at $\theta=\frac{1}{k}$ in the upper half plane as $s\xrightarrow[]{}s_*$. We cite the following result of Christodoulou from \cite{nakedsingularities}:

\begin{proposition}\label{Christodoulou phase portrait analysis}
    Consider orbits of (\ref{system alpha theta}) with an asymptote at $\theta=\frac{1}{k}$ in the upper half plane as $s\xrightarrow[]{}s_*+:$ 
    \begin{enumerate}
        \item  The upper branch $S_1^+$ of the stable manifold of $P_1$ represents such an orbit.
        \item For every orbit to the left of $S_1^+$ there exist $s_{\mathcal{A}}>s_*,\ \sigma_{\mathcal{A}}<0,$ such that the orbit blows up as $s\xrightarrow[]{}s_{\mathcal{A}}$ with $\theta\xrightarrow[]{}-\infty,\ \alpha\xrightarrow[]{}\infty,\  \alpha/\theta\xrightarrow[]{}\sigma_{\mathcal{A}},\ \alpha\sqrt{s_{\mathcal{A}}-s}\xrightarrow[]{}-\sigma_{\mathcal{A}}/\sqrt{2}$.
        \item For every orbit to the right of $S_1^+$ there exist $s_{\mathcal{A}}>s_*,\ \sigma_{\mathcal{A}}>0,$ such that the orbit blows up as $s\xrightarrow[]{}s_{\mathcal{A}}$ with $\theta\xrightarrow[]{}\infty,\ \alpha\xrightarrow[]{}\infty,\  \alpha/\theta\xrightarrow[]{}\sigma_{\mathcal{A}},\ \alpha\sqrt{s_{\mathcal{A}}-s}\xrightarrow[]{}\sigma_{\mathcal{A}}/\sqrt{2}$.
    \end{enumerate}
\end{proposition}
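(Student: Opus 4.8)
The plan is to treat this as a global phase-plane analysis of the autonomous system (\ref{system alpha theta}), adapting the shooting/separatrix argument of \cite{nakedsingularities}. First I would linearize the system at the saddle point $P_1=\big(1,\frac{2}{1-k}\big)$ to confirm its hyperbolic saddle structure and to extract its one-dimensional stable manifold; the branch lying in the upper half plane is $S_1^+$. To establish (1), I would trace $S_1^+$ backward in $s$ from $P_1$ and show it cannot accumulate at any finite equilibrium, so it must escape toward the boundary of the region. Using the change of variable $\frac{ds}{dt}=\beta$ from Section \ref{Extending Beyond the Interior Region section}, the locus ``$\theta=\frac{1}{k},\ \alpha=+\infty$'' corresponds to the negatively stable critical point $(\frac1k,0)$ of the $(\theta,\beta)$-system in the $t$-variable, from which a one-parameter family of orbits emanates. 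I would then identify $S_1^+$ as the unique member of this family whose forward evolution lands on the stable manifold of $P_1$, so that it is both asymptotic to $\theta=\frac1k$ in the upper half plane and convergent to $P_1$.

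For (2) and (3) the key tool is the invariant line $L:\alpha=\frac{\theta}{k}+3$, which (as one checks directly, using $k^2=\frac13$) passes through $P_-,P_{-1},P_1$ and is an exact orbit on each segment between them; since no other orbit can cross it, it acts as a lower barrier confining the incoming asymptotic orbits. Note that the $\alpha$-nullcline is the parabola $\alpha=1+\frac{(\theta+k)^2}{1-k^2}$ with vertex exactly at $P_0=(-k,1)$, while the $\theta$-nullcline is a cubic curve; together with $L$ these partition the upper half plane into regions of fixed sign for $\dot\theta$ and $\dot\alpha$. I would use them to build forward-invariant trapping regions: orbits entering to the left of $S_1^+$ are funneled into a region where $\theta$ is eventually strictly decreasing and $\alpha$ eventually strictly increasing, forcing $\theta\to-\infty,\ \alpha\to+\infty$, whereas orbits to the right are trapped in the mirror region with $\theta$ strictly increasing. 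Because solutions cannot cross $S_1^+$, the stable manifold of the saddle is precisely the separatrix between these two behaviors, which yields the trichotomy. One must also exclude convergence to the future-stable node $P_0$; here the barrier $L$ is decisive, since the asymptotic orbits enter above $L$ and therefore stay strictly above it, while $P_0$ sits below $L$.

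Once an orbit is known to reach the regime $\alpha\to+\infty,\ |\theta|\to\infty$, the blow-up profile is more routine. Passing to $\theta$ as the independent variable and keeping leading terms, the equation reduces to $\frac{d\alpha}{d\theta}=\frac{\alpha}{\theta}$ to leading order, so $\alpha/\theta$ converges to a finite constant $\sigma_{\mathcal{A}}$ whose sign is that of $\theta$ at blow-up: negative in case (2), positive in case (3), since $\alpha>0$ throughout. Substituting $\theta\sim\alpha/\sigma_{\mathcal{A}}$ into the $\alpha$-equation gives the leading balance $\frac{d\alpha}{ds}\sim\frac{\alpha^3}{\sigma_{\mathcal{A}}^{\,2}}$, which integrates to finite-time blow-up at some $s_{\mathcal{A}}>s_*$ with $\alpha^2(s_{\mathcal{A}}-s)\to\frac{\sigma_{\mathcal{A}}^{\,2}}{2}$, that is $\alpha\sqrt{s_{\mathcal{A}}-s}\to\frac{|\sigma_{\mathcal{A}}|}{\sqrt2}$, matching the signed forms $-\sigma_{\mathcal{A}}/\sqrt2$ and $\sigma_{\mathcal{A}}/\sqrt2$ stated in the two cases.

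I expect the main obstacle to be the global step of closing the trapping regions and excluding every alternative $\omega$-limit behavior, namely convergence to $P_0$, capture by the stable manifolds of $P_{\pm1}$, or oscillation. The delicate point is controlling the orbits precisely as they cross between the nullclines near $P_1$ and $P_{-1}$, where the invariant line $L$ and the stable manifolds of the saddles must be used in tandem to pin down the separatrix $S_1^+$. By contrast, the finite-time blow-up estimate follows from a straightforward differential inequality once this global picture is in place.
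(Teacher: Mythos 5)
The first thing to say is that the paper does not prove this proposition at all: it is imported as a black box, quoted as Christodoulou's result from \cite{nakedsingularities}. So your attempt is really a reconstruction of that phase-portrait analysis, specialized to $k^2=\tfrac13$, and there is no paper-internal proof to compare it to. Your overall architecture is the right one, and you have correctly isolated the ingredient that is indispensable precisely at $k^2=\tfrac13$: the invariant line $L:\alpha=\tfrac{\theta}{k}+3$. This is not a convenience but a necessity: for $k^2<\tfrac13$ the trichotomy as stated is \emph{false}, since orbits with the vertical asymptote can converge to $P_0$ (these are exactly Christodoulou's naked-singularity examples), and the paper itself only invokes $L$ later (Lemma \ref{side of line orbit} and the remark after Theorem \ref{solutions apparent horizon}) to make this point. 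Your final blow-up asymptotics, including the signed forms of $\alpha\sqrt{s_{\mathcal{A}}-s}$, are also essentially correct.

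That said, as written there are concrete gaps. First, your argument for part (1) is circular: you ``identify $S_1^+$ as the unique member of the family emanating from $(\tfrac1k,0)$ whose forward evolution lands on the stable manifold of $P_1$,'' which presupposes what is to be shown. The fix is a backward-trapping argument in the strip $\{1<\theta<\tfrac1k\}\cap\{\alpha>\tfrac{\theta}{k}+3\}$: at $\theta=1$ above $P_1$ one has $\frac{d\theta}{ds}=k[\alpha(k-1)+2]<0$ and at $\theta=\tfrac1k$ one has $\frac{d\theta}{ds}=4\sqrt{3}>0$, so the backward flow enters through both vertical edges, $L$ cannot be crossed, $\alpha$ increases backward once it is large, and the term $k\alpha(k\theta-1)$ then forces $\theta\to\tfrac1k$ backward; since the stable eigendirection at $P_1$ is much steeper than $L$, the upper stable branch enters this strip and is captured. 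Second, you list ``capture by the stable manifolds of $P_{\pm1}$'' and oscillation as behaviors to exclude but give no mechanism. The missing key fact is the flow along $L$ itself: $\frac{d\theta}{ds}\big|_{L}=(\theta+\sqrt{3})(\theta-1)(\theta+1)$, which shows that \emph{both} branches of the stable manifold of $P_{-1}$ (and both branches of the unstable manifold of $P_1$) are segments of $L$; hence no orbit strictly above $L$ can converge to $P_{-1}$. Combined with the facts that $P_-$, $Q_-$ are future repellers, that the open region strictly above $L$ contains no equilibria and is simply connected (so Poincar\'e--Bendixson forbids periodic orbits and graphics there), this leaves exactly two options for an asymptotic orbit: it is $S_1^+$, or it is forward-unbounded; one must then still check that orbits in the component to the left of $S_1^+$ cannot reach large positive $\theta$ (they would have to cross $S_1^+$ or $L$), and that $\alpha\to\infty$ forces $|\theta|\to\infty$ from the $\alpha$-equation. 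Third, ``the equation reduces to $\frac{d\alpha}{d\theta}=\frac{\alpha}{\theta}$ to leading order, so $\alpha/\theta$ converges'' is a consistency check, not a proof of convergence; it becomes one via the exact identity $\frac{d\sigma}{ds}=\sigma\,(2+k\sigma-\alpha)$ for $\sigma=\alpha/\theta$, which in the blow-up regime gives $\big|\frac{d\sigma}{d\alpha}\big|\lesssim|\sigma|^3/\alpha^2$, hence convergence of $\sigma$ to a finite nonzero limit $\sigma_{\mathcal{A}}$, after which your finite-time blow-up integration and the rate $\alpha\sqrt{s_{\mathcal{A}}-s}\to|\sigma_{\mathcal{A}}|/\sqrt{2}$ go through.
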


\begin{center}
\begin{tikzpicture}[decoration={markings, mark=at position 0.6 with {\arrow{latex}}}, scale=1] 
\draw[very thin] (-1.73,0) -- (-5,0);
\draw[very thin] (-1.73,0) -- (0,0);
\draw[very thin] (0.57,0) -- (0,0);
\draw[very thin] (0.57,0) -- (5,0);
\draw[very thin] (0,0) -- (0,9);
\filldraw[color=black, fill=black] (0,0) circle (2pt);
\filldraw[color=black, fill=black] (-1.73,0) circle (2pt) node[anchor=south east] {$P_-$};
\filldraw[color=black, fill=black] (0.57,0) circle (2pt) node[anchor=south west] {$Q_-$};
\filldraw[color=black, fill=black] (-0.57,1) circle (2pt) node[anchor=south west] {$P_0$};
\filldraw[color=black, fill=black] (-1,1.27) circle (2pt) node[anchor=south east] {$P_{-1}$};
\filldraw[color=black, fill=black] (1,4.73) circle (2pt) node[anchor=south east] {$P_{1}$};
\draw[thick][postaction={decorate}] (-1.73,0) -- (-1,1.27);
\draw[thick][postaction={decorate}] (1,4.73) -- (-1,1.27);
\draw[thick][postaction={decorate}] (1,4.73) -- (3.4,8.9);
\draw[dashed] (1.73,0) -- (1.73,9);
\draw[thick][postaction={decorate}] (1.6,9) .. controls (1.6,6) ..(1,4.73);
\draw[thick][postaction={decorate}] (1.4,9) .. controls (0,2) ..(-5,9);
\draw[thick][postaction={decorate}] (1.68,9) .. controls (1.6,6) ..(3.2,8.9);
\draw (0.9,6) node[anchor=south west] {$S_{1}^+$};
\end{tikzpicture}
\end{center}

We firstly consider the extensions for which the exterior solution corresponds to the orbit $S_1^+:$

\begin{theorem}\label{solutions singular null boundary}
    Every interior Lorentzian Hawking--Page solution can be extended to a globally hyperbolic scale-invariant vacuum spacetime with an $SO(3)\times U(1)$ isometry that has a singular null boundary. The spacetime is globally covered by self-similar Bondi coordinates. The exterior solution is defined on $\mathcal{E}^{(4+1)}=\{(u,r,\theta,\varphi,\psi)|\ u<0,\ r\geq r_*(u)\}.$ 
    The null boundary $\mathcal{B}_{null}$ is reached along incoming null cones in the limit $u\rightarrow0-$ and represents a curvature singularity. We represent the Penrose diagram of the quotient of the spacetime by the $SO(3)\times U(1)$ action:
\end{theorem}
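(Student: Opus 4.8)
The plan is to single out, among the one-parameter family of cone-exterior extensions constructed in Section \ref{Extending Beyond the Interior Region section}, the unique one whose orbit in the $(\theta,\alpha)$ plane is the stable-manifold branch $S_1^+$ of Proposition \ref{Christodoulou phase portrait analysis}, and then to read off every assertion of the theorem from the behaviour of this orbit at its two ends.

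First I would recall that the continuation of a fixed interior Lorentzian Hawking--Page solution across $C_0^-$ is governed by an orbit of (\ref{system alpha theta}) with a vertical asymptote at $\theta=\tfrac1k$ in the upper half plane as $s\to s_*+$, and choose the orbit $S_1^+$, which by Proposition \ref{Christodoulou phase portrait analysis}(1) is exactly such an orbit. Being the stable manifold of $P_1=(1,\tfrac{2}{1-k})$, it is defined for all $s\in(s_*,\infty)$ and converges to $P_1$ as $s\to\infty$. I would then check, using the phase portrait, that $\alpha>1$ along the whole orbit, so that $e^{2\lambda}=1+k^2+\tfrac{(\theta+k)^2}{\alpha-1}\in(1+k^2,\infty)$ and hence $\mu<1$: the orbit stays in the regular region $\mathcal{R}^{(4+1)}$, where self-similar Bondi coordinates are valid. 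Reconstructing $\nu,\lambda$ from the orbit, recovering $\phi$ from $\theta$ through the self-similarity relations with boundary value fixed by (\ref{interior HP phi formula}) at $s=s_*$, and setting $f=\exp(-\tfrac{2}{\sqrt3}\phi)$ (I take $k=+\tfrac1{\sqrt3}$, the opposite sign being analogous), produces the $(4+1)$-dimensional vacuum metric; it is smooth for $s>s_*$ and, by the expansions of Section \ref{Extending Beyond the Interior Region section}, $C^{1,1/2}$ across $C_0^-$. As $s$ ranges over $(s_*,\infty)$ and $u$ over $(-\infty,0)$, the radius $r=-ue^{s}$ ranges over $[r_*(u),\infty)$, which identifies the domain $\mathcal{E}^{(4+1)}=\{u<0,\ r\ge r_*(u)\}$ and shows the spacetime is globally covered by self-similar Bondi coordinates. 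Global hyperbolicity I would obtain by exhibiting a Cauchy surface threading $\mathcal{E}^{(4+1)}$, using that the only future boundaries are $\mathcal{B}_{null}$ and $\mathcal{I}^+$, so no Cauchy horizon appears in the exterior.

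The core of the proof is the analysis of $\mathcal{B}_{null}$. I would identify it with the outgoing cone $\{u=0\}$, reached exactly in the limit $s\to\infty$, i.e.\ at the critical point $P_1$; it is null as a level set of the null coordinate $u$, and along any incoming cone $\{v=\mathrm{const}\}$ it is approached precisely as $u\to0-$. For the curvature I would first integrate $S\phi=-k$ to get $\phi=-k\log(-u)+\Phi(s)$ with $\Phi'(s)=\theta$, hence $\Phi(s)=s+O(1)$ and $\phi=\log r-(1+k)\log(-u)+O(1)$ near $P_1$. Tracking $r$ along an incoming cone via $\tfrac{dr}{du}\big|_v=\tfrac{r}{u}(1-\beta)$ with $\beta|_{P_1}=\tfrac{1-k}{2}$ gives $r\sim|u|^{(1+k)/2}\to0$, and therefore $\phi\sim-\tfrac{1+k}{2}\log|u|\to+\infty$ and $f\sim|u|^{(1+k)/\sqrt3}\to0$. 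Substituting the limiting values $\theta=1,\ \zeta=-1,\ \mu=\tfrac12$ (the latter via (\ref{zeta}) and the expression for $e^{2\lambda}$) into the sectional-curvature formula of Appendix \ref{Appendix sectional curvature} gives $K^{(4+1)}_{S^2}=\tfrac{2f}{3r^2}$, and the asymptotics above yield $f/r^2\sim|u|^{k^2-1}=|u|^{-2/3}\to\infty$, with the same conclusion for $K^{(4+1)}_{\mathcal{Q}^{(1+1)}}$. Thus the curvature diverges as $u\to0-$, so $\mathcal{B}_{null}$ is a genuine curvature singularity, and a short estimate showing it is reached at finite affine length along the incoming null geodesics confirms that it is a true boundary rather than lying at infinity.

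The main obstacle I anticipate is this curvature computation, and in particular the fact that $r\to0$ rather than remaining positive as one approaches a point of $\mathcal{B}_{null}$ along an incoming cone: since both $f$ and $r^2$ tend to $0$, the singularity only becomes visible after checking that their ratio $f/r^2$ diverges, which rests on the precise exponents $\tfrac{1+k}{2}$ and $\tfrac{1+k}{\sqrt3}$ and on the identity $k^2=\tfrac13$. A secondary point is verifying that $S_1^+$ stays in $\{\alpha>1\}$ and matches the interior asymptote at $s_*$; for this I would rely on the qualitative analysis of Proposition \ref{Christodoulou phase portrait analysis} together with the exact invariant line $\alpha=\tfrac\theta k+3$ of the system.
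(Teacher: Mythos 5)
Your proposal is correct and takes essentially the same approach as the paper: extend along the orbit $S_1^+$, identify $\mathcal{E}^{(4+1)}=\{s_*\le s<\infty\}$ with limiting values $\theta\to1,\ \zeta\to-1,\ \mu\to\tfrac12$ at $P_1$, derive $r\sim(-u)^{(1+k)/2}$ and $f\sim r^{2/\sqrt{3}}$ along incoming cones, and conclude that $K^{(4+1)}_{S^2}\sim f/r^2\sim(-u)^{-2/3}$ blows up, so $\mathcal{B}_{null}$ is a curvature singularity. The only step the paper makes explicit that you leave implicit is the convergence rate $\alpha=\tfrac{2}{1-k}+O(e^{-2s})$ near $P_1$, which is what justifies integrating along incoming cones to obtain the exponent $\tfrac{1+k}{2}$; your closing caveat about the "precise exponents" points at exactly this.
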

\begin{center}
\begin{tikzpicture}
\draw[dashed] (0,0) -- (3,3);
\draw (0,0) -- (0,4);
\draw[very thin] (0,4) -- (2,2);
\draw[dashed] (3,3) -- (1,5);
\draw[thick, dashdotted] (0,4) -- (1,5);
\filldraw[color=black, fill=white] (0,4) circle (2pt) node[anchor=east] {$b_{\Gamma}$};
\draw (0,2.5)  node[anchor=east] {$\Gamma$};
\draw (0.8,3.25)  node[anchor=west] {$C_0^-$};
\draw (1.5,1.5)  node[anchor=north west] {$\mathcal{I}^-$};
\draw (1.75,4.25)  node[anchor=south west] {$\mathcal{I}^+$};
\draw (0.5,4.5)  node[anchor=south east] {$\mathcal{B}_{null}$};
\end{tikzpicture}
\end{center}
\begin{proof}
    We extend the interior Lorentzian Hawking--Page solution by choosing the orbit corresponding to the exterior solution to be $S_1^+.$ The orbit ends at the critical point $P_1,$ so the exterior solution is defined on $\mathcal{E}^{(4+1)}=\{s_*\leq s<\infty\}.$ As $s\xrightarrow[]{}\infty,$ we have the limits:
    \[\theta\xrightarrow[]{}1,\ \zeta\xrightarrow[]{}-1,\ e^{2\lambda}\xrightarrow[]{}2,\ \mu\rightarrow\frac{1}{2}.\]
    
    We define another null coordinate $v$, which is constant along incoming null cones and increases along outgoing null cones, such that $v=u$ on $\Gamma$. The causality of $\mathcal{Q}^{(1+1)}$ is the same as that of $\mathbb{R}^{1+1}_{(u,v)}.$ Moreover, in this gauge we have $C_0^-=\{v=0\}.$
    
    We consider the null boundary $\mathcal{B}_{null}=\{v\geq u,\ u=0\}$. We follow the argument of \cite{nakedsingularities} to show this is indeed singular. Along the incoming null cone which passes through $(u_0,s_0)$ in the cone exterior region, we have:
    \[-\log(-u)=-\log(-u_0)+\int_{s_0}^s\alpha(s')ds'.\]  
    We rewrite this as follows:
    \[-\log(-u)-\frac{2s}{1-k}=-\log(-u_0)-\frac{2s_0}{1-k}+\int_{s_0}^s\alpha(s')-\frac{2}{1-k}ds'.\]
    A similar analysis of (\ref{system alpha theta}) near $P_1$ as in Appendix \ref{Appendix Expansion Subsection} shows that $\alpha=\frac{2}{1-k}+O(e^{-2s})$ as $s\xrightarrow[]{}\infty$. Thus, the right hand side converges to a finite limit as $s\xrightarrow[]{}\infty$. Using $s=\log\big(\frac{r}{-u}\big),$ we get that:
    \[r(u,v)\sim(-u)^{\frac{1+k}{2}}\text{ as }u\xrightarrow[]{}0.\]
    From $\zeta(u,v)\xrightarrow[]{}-1$ as $u\xrightarrow[]{}0$, we get:
    \[\phi(u,v)=-\log r(u,v)+O(1)\text{ as }u\xrightarrow[]{}0.\]
    As a result, we note that for any $v>v_*$:
    \[f(u,v)=\exp\bigg(-\frac{2}{\sqrt{3}}\phi\bigg)\sim r(u,v)^{\frac{2}{\sqrt{3}}}\text{ as }u\xrightarrow[]{}0.\]
    
    We compute the sectional curvature of the spheres according to Appendix \ref{Appendix sectional curvature}:
    \[K^{(4+1)}_{S^2(u,v)}=\frac{f}{r^2}\big(1-e^{-2\lambda}(k\theta+1)(k\zeta+1)\big)\sim (-u)^{-\frac{2}{3}}\text{ as }u\xrightarrow[]{}0,\]
    so $K^{(4+1)}_{S^2(u,v)}$ blows up along any incoming null cone originating in the cone exterior region. We conclude that the null boundary $\mathcal{B}_{null}$ is a curvature singularity.
\end{proof}

\begin{remark}
    For the corresponding $(3+1)$-dimensional solution $\big(\mathcal{Q}^{(1+1)}\times S^2,g,\phi\big),$ the sectional curvature of the spheres also blows up along any incoming null cone originating in the cone exterior region:
    \[K_{S^2(u,v)}=\frac{\mu}{r^2}\xrightarrow[]{}\infty\text{ as }u\xrightarrow[]{}0.\]
    Based on this, Christodoulou concluded in \cite{nakedsingularities} that for exterior solutions of the Einstein-Scalar field equations corresponding to the orbit $S_1^+,$ the null boundary $\mathcal{B}_{null}$ is a curvature singularity.
\end{remark}

\begin{remark}
    In Section \ref{asymptotically flat spacetimes}, we shall use the above solutions to construct extensions of the interior Lorentzian Hawking--Page solutions to asymptotically flat solutions with locally naked singularities.
\end{remark}
 
We now describe the maximal extension in the regular region of the exterior solutions whose orbits are different from $S_1^+:$

\begin{theorem}\label{solutions apparent horizon}
    For every interior Lorentzian Hawking--Page solution there exist extensions to exterior solutions which correspond to the orbits to the left and right of $S_1^+$. In self-similar Bondi gauge, each exterior solution in the regular region is defined on $\mathcal{E}^{(4+1)}\cap\mathcal{R}^{(4+1)}=\{(u,r,\theta,\varphi,\psi)|\ u<0,\ r_*(u)\leq r\leq r_{\mathcal{A}}(u)\},$ where $e^{s_{\mathcal{A}}}=-r_{\mathcal{A}}(u)/u.$ The set $\mathcal{A}^{(4+1)}=\{s=s_{\mathcal{A}}\}$ represents the apparent horizon.
\end{theorem}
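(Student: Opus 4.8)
The plan is to convert the phase-portrait statement of Proposition~\ref{Christodoulou phase portrait analysis} into geometric information about the exterior vacuum spacetime, the decisive point being the identification of the blow-up of the orbit with the apparent horizon. For the existence part, I would invoke the bifurcation argument of Section~\ref{Extending Beyond the Interior Region section}, which extends each interior Lorentzian Hawking--Page solution to a one-parameter family of exterior solutions parametrized by $a_1^{(e)}\in\mathbb{R}$, each corresponding to an orbit of (\ref{system alpha theta}) emanating from the asymptote $\theta=\frac{1}{k},\ \alpha\to+\infty$ as $s\to s_*+$. The value of $a_1^{(e)}$ giving the stable manifold $S_1^+$ is the one of Theorem~\ref{solutions singular null boundary}; as $a_1^{(e)}$ varies it sweeps out a neighborhood of $S_1^+$ in the phase plane, so values on either side yield orbits lying strictly to the left and to the right of $S_1^+$, to which Proposition~\ref{Christodoulou phase portrait analysis} applies and furnishes a finite blow-up time $s_{\mathcal{A}}>s_*$ with $\alpha\to+\infty,\ |\theta|\to\infty,\ \alpha/\theta\to\sigma_{\mathcal{A}}$.

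The central step is to show that the whole orbit segment $s\in(s_*,s_{\mathcal{A}})$ lies in the regular region and that the apparent horizon is reached exactly at $s_{\mathcal{A}}$. Since Bondi coordinates obey $1-\mu=e^{-2\lambda}$, the regular region $\{\mu<1\}$ is the locus where $e^{2\lambda}$ is finite, and (\ref{e{2lambda}}) gives $e^{2\lambda}=1+k^2+\frac{(\theta+k)^2}{\alpha-1}$; hence $\{\mu<1\}$ corresponds to $\alpha>1$, where $e^{2\lambda}\ge 1+k^2>0$ is finite, while $\mu\to1$ forces $e^{2\lambda}\to\infty$. I would then prove that $\{\alpha>1\}$ is forward-invariant for (\ref{system alpha theta}): on the line $\{\alpha=1\}$ the second equation reduces to $\frac{d\alpha}{ds}=(\theta+k)^2\ge0$, vanishing only at $P_0=(-k,1)$, so away from $P_0$ the flow crosses $\{\alpha=1\}$ strictly upward and cannot exit $\{\alpha>1\}$ from above. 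Because the orbit begins at $\alpha=+\infty$ just after $s_*$ and, by Proposition~\ref{Christodoulou phase portrait analysis}, blows up at the finite time $s_{\mathcal{A}}$ rather than converging to the critical point $P_0$, it remains in $\{\alpha>1\}$ throughout $(s_*,s_{\mathcal{A}})$. Thus $\mu<1$ on this interval, and the computation $\frac{(\theta+k)^2}{\alpha-1}\sim\theta/\sigma_{\mathcal{A}}\to+\infty$ at blow-up shows $e^{2\lambda}\to\infty$, i.e. $\mu\to1$, precisely as $s\to s_{\mathcal{A}}$; this identifies $\mathcal{A}^{(4+1)}=\{s=s_{\mathcal{A}}\}$.

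Finally I would record the coordinate domain. By self-similarity the blow-up value $s_{\mathcal{A}}$ is a constant independent of $u$, so using $e^{s}=-r/u$ the apparent horizon $\{s=s_{\mathcal{A}}\}$ is the ray $r=r_{\mathcal{A}}(u):=-u e^{s_{\mathcal{A}}}$, while $C_0^-=\{s=s_*\}$ is $r=r_*(u)$. Combining this with the previous step, the regular exterior solution is covered by self-similar Bondi coordinates exactly on $\{u<0,\ r_*(u)\le r\le r_{\mathcal{A}}(u)\}$, which is the stated description of $\mathcal{E}^{(4+1)}\cap\mathcal{R}^{(4+1)}$.

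I expect the main obstacle to be the forward-invariance step, and more precisely ruling out that $\alpha$ returns to $1$ at an intermediate parameter value: the system (\ref{system alpha theta}) is perfectly regular at $\alpha=1$, so the orbit itself carries no signal of the apparent horizon, and one must read off the degeneration from the derived quantity $e^{2\lambda}$ and the sign of $\frac{d\alpha}{ds}$ on $\{\alpha=1\}$, with the finite-time blow-up of Proposition~\ref{Christodoulou phase portrait analysis} used to exclude convergence to $P_0$. Some care is also needed to treat the left and right orbits uniformly, since they differ in the sign of $\theta$ and of $\sigma_{\mathcal{A}}$ at blow-up but yield the same conclusion $e^{2\lambda}\to+\infty$, hence the same apparent horizon.
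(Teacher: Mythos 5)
Your existence step and your analysis of where the orbit lies are essentially sound, and in fact slightly more explicit than the paper: the bifurcation parameter $a_1^{(e)}$ does produce orbits on both sides of $S_1^+$; the forward-invariance of $\{\alpha>1\}$ (strict upward crossing on $\{\alpha=1\}\setminus\{P_0\}$, impossibility of reaching the critical point $P_0$ at finite parameter) correctly keeps the orbit in the regular region, a point the paper leaves implicit by citing Christodoulou; and the computation $\frac{(\theta+k)^2}{\alpha-1}\sim\theta/\sigma_{\mathcal{A}}\to+\infty$ correctly gives $e^{2\lambda}\to\infty$, i.e. $\mu\to1$, as $s\to s_{\mathcal{A}}$, uniformly for the left and right orbits. (One minor imprecision: $\{\mu<1\}$ does not coincide with $\{\alpha>1\}$ — the cone interior has $\alpha<0$ and $\mu<1$ — but you only use the implication $\alpha>1\Rightarrow\mu<1$, which is valid.)

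The genuine gap is in your last paragraph, where the coordinate domain is treated as pure bookkeeping. Knowing that the orbit blows up at a finite parameter value $s_{\mathcal{A}}$ with $\mu\to1$ does not yet tell you that the locus $\{s=s_{\mathcal{A}}\}$ is attained by the spacetime at finite retarded time; a priori it could be an ideal boundary pushed off to $u=0$, exactly as happens for the orbit $S_1^+$ in Theorem \ref{solutions singular null boundary}, where the future boundary is the null set $\{u=0\}$ reached only in the limit $u\to0-$. The paper's proof supplies the missing quantitative step: along the incoming null cone through $(u_0,s_0)$ one has $\log(-u_0)-\log(-u)=\int_{s_0}^{s}\alpha(s')\,ds'$, and the blow-up rate $\alpha\sqrt{s_{\mathcal{A}}-s}\to\sigma_{\mathcal{A}}/\sqrt{2}$ from Proposition \ref{Christodoulou phase portrait analysis} makes $\int_{s_0}^{s_{\mathcal{A}}}\alpha\,ds'$ converge, so each incoming null cone crosses $\{s=s_{\mathcal{A}}\}$ at some finite $u<0$. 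This is precisely what makes $\mathcal{A}^{(4+1)}=\{s=s_{\mathcal{A}}\}=\{(u,r_{\mathcal{A}}(u)):\,u<0\}$ a genuine apparent horizon bounding the Bondi patch — and across which the solution is subsequently extended into the trapped region — rather than an asymptotic null boundary. A blow-up with a non-integrable rate, which your argument never rules out or even needs to mention, would produce a completely different causal picture; so the proof must use the rate $\alpha\sim\sigma_{\mathcal{A}}/\sqrt{2(s_{\mathcal{A}}-s)}$, not merely the fact of finite-time blow-up.
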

\begin{proof}
    The first two statements follow by Proposition \ref{Christodoulou phase portrait analysis}. To prove the last statement, we repeat the argument of \cite{nakedsingularities}. Along the incoming null cone passing through $(u_0,s_0)$ in the cone exterior region, we have:
    \[\log(-u_0)-\log(-u)=\int_{s_0}^s\alpha(s')ds'.\]
    Since $\alpha\sqrt{s_{\mathcal{A}}-s}\xrightarrow[]{}\sigma_{\mathcal{A}}/\sqrt{2}$ as $s\xrightarrow[]{}s_{\mathcal{A}}$, we get that $\int_{s_0}^{s_{\mathcal{A}}}\alpha(s')ds'$ converges. This implies that the incoming null cone reaches $s=s_{\mathcal{A}}$ at some $u<0.$ We also get that $e^{2\lambda}\xrightarrow[]{}\infty$ as $s\xrightarrow[]{}s_{\mathcal{A}}$, so $\mu\xrightarrow[]{}1$ as $s\xrightarrow[]{}s_{\mathcal{A}}$. Thus, $\mathcal{A}^{(4+1)}=\{s=s_{\mathcal{A}}\}$ is the apparent horizon.
\end{proof}

\begin{remark}
    In \cite{nakedsingularities}, Christodoulou constructs examples of $(3+1)$-dimensional spacetimes with a naked singularity using orbits of $(\ref{system alpha theta})$ that end at $P_0$.  However, in our case $k^2=\frac{1}{3}$ implies that the line $\alpha=\frac{\theta}{k}+3$ separates the point $P_0$ from the orbits which have a vertical asymptote at $\theta=\frac{1}{k}$ in the upper half plane.
\end{remark}

\subsection{Extension Beyond the Regular Region}

In the previous section we extended the Lorentzian Hawking--Page solutions to the cone exterior region. The solutions obtained in Theorem \ref{solutions singular null boundary} are defined in the maximal region as globally hyperbolic solutions. However, the solutions from Theorem \ref{solutions apparent horizon} are only defined in the regular region so far, and can be extended further. 

It is well-known that solutions to the Einstein-Scalar field  system in spherical symmetry with $\mathcal{A}^{(3+1)}\neq 0$ can be extended beyond the apparent horizon into the trapped region $\mathcal{T}^{(3+1)},$ where all causal curves reach a spacelike singularity at the boundary $\mathcal{B},$ with $r=0.$ Here $\mathcal{B}$ represents the canonical boundary that we can attach to the quotient manifold $\mathcal{Q}^{(1+1)}.$ For the corresponding $(4+1)$-dimensional vacuum solution we have that either $\frac{r^2}{f}$ (the area radius of the spheres $S^2$) goes to zero at $\mathcal{B}$, or $f$ (the area radius of the spheres $S^1$) goes to zero at $\mathcal{B}$, or both. However, this argument is not sufficient to understand the nature of the boundary $\mathcal{B}$ for the $(4+1)$-dimensional vacuum solution.

In this section, we study the extension beyond the regular region $\mathcal{R}^{(4+1)}$ in detail, in order to describe the behavior of the solutions near the boundary $\mathcal{B}$. In one case, we obtain that $\mathcal{B}$ represents a spacelike curvature singularity. In the other case, we have that $\mathcal{B}$ represents a null Cauchy horizon of Taub--NUT type, beyond which the solutions have non-unique analytic extensions with closed timelike curves. Moreover, in this later case we obtain that the solutions in the cone exterior region are given explicitly, corresponding to a Wick rotation of the interior Lorentzian Hawking--Page solutions. This completes our goal to classify all the maximal extensions of Lorentzian Hawking--Page solutions in the class of $(4+1)$-dimensional scale-invariant vacuum spacetimes with an $SO(3)\times U(1)$ isometry. 

We remark that the analysis in \cite{nakedsingularities} is restricted to the regular region. Thus, this section is also of interest because it describes the behavior of the Einstein-Scalar field solutions of \cite{nakedsingularities} in the case $k^2=\frac{1}{3}$ in the trapped region. In particular, we compute explicitly the blow-up rate of the scalar field $\phi=c\log r+O(1)$ near $\mathcal{B},$ where $c=\pm\sqrt{3},\ \mp\frac{1}{\sqrt{3}}$.

\subsubsection{Bondi to double null coordinates}\label{Bondi to double null coordinates}
We consider the Einstein-Scalar field solution associated to our $(4+1)$-dimensional vacuum spacetime. Bondi coordinates cover the regular region $\mathcal{R}^{(3+1)}$, but they become singular at the apparent horizon $\mathcal{A}^{(3+1)}$. In order to extend the solution to the trapped region $\mathcal{T}^{(3+1)}$, we introduce suitable double null coordinates in the cone exterior region. Using the Kaluza--Klein reduction, we obtain double null coordinates on the $(4+1)$-dimensional spacetime and a system equivalent to the vacuum equations.

We firstly define an incoming null coordinate $V$ as follows: fix $u_0<0$, and define $V(u_0,\cdot)$ to be the affine parameter along $C_{u_0}^+$ with $lV=1,\ V(u_0,r_*(u_0))=0$; then extend $V$ off $C_{u_0}^+$ by asking that $V$ is constant along incoming null cones. In particular, we have $V=0$ on $C_0^-.$ 

We recall that in the self-similar Bondi coordinates introduced in Section \ref{introduction to bondi gauge}, we have:
\[g=-e^{2\nu}du^2-2e^{\nu+\lambda}dudr+r^2d\sigma_2^2,\ S=u\partial_u+r\partial_{r}.\] 
With respect to the double null coordinates $(u,V),$ this becomes:
\[g=-\Bar{\Omega}^2dudV+r^2d\sigma_2^2,\ S=u\partial_u+\frac{r\beta}{\partial_Vr}\partial_V,\]
where $\Bar{\Omega}^2=2e^{\nu+\lambda}\partial_Vr.$ We also compute:
\[\partial_ur=-\frac{1}{2}e^{\nu-\lambda},\ \partial_u=\frac{1}{2}e^{\nu}n,\ \partial_V=\big(e^{\lambda}\partial_Vr\big) l.\]

We notice that the self-similarity condition $(\mathcal{L}_Sg)_{VV}=0$ implies that $\frac{r\beta}{\partial_Vr}$ is independent of $u.$ Thus, we can renormalize the incoming null coordinate by defining for $V>0$:
\[v(u,V)=v(u_0,V)=V_{\mathcal{A}}\exp\bigg(-\int_V^{V_{\mathcal{A}}(u_0)}\frac{\partial_Vr}{r\beta}d\Tilde{V}\bigg)=V_{\mathcal{A}}\exp\bigg(-\int_{s(u_0,V)}^{s_{\mathcal{A}}}\frac{ds}{\beta}\bigg).\]

This is well defined, because of the expansion of $\beta$ near $\mathcal{A}^{(3+1)}$ from Proposition \ref{Christodoulou phase portrait analysis}. From Section \ref{Extending Beyond the Interior Region section} we also know that $\beta=\frac{2}{3}(s-s_*)+O(|s-s_*|^2)$, so $\lim_{V\rightarrow0+}v(V)=0$ and $C_0^-=\{v=0\}.$ The region $\text{int}\big(\mathcal{R}^{(3+1)}\cap\mathcal{E}^{(3+1)}\big)$ is covered by the double null coordinates $(u,v)$, such that:
\[g=-\Omega^2dudv+r^2d\sigma_2^2,\ \Bar{\Omega}^2=\Omega^2\frac{dv}{dV},\ S=u\partial_u+v\partial_v.\]

In general, we say that a solution to the Einstein-Scalar field equations is in \textit{self-similar double null gauge} if there exist double null coordinates such that $S=u\partial_u+v\partial_v.$ The gauge freedom left is:
\begin{equation}\label{gauge freedom double null}
    u\rightarrow c_1^2u,\ v\rightarrow c_2^2v.
\end{equation}
In order to extend the solution beyond $\mathcal{A}^{(3+1)}$, we need to write the Einstein-Scalar field system in self-similar double null gauge. In these coordinates, the self-similar condition becomes:
\[S\Omega=0,\ Sr=r,\ S\phi=-k.\]
As in Section \ref{introduction to bondi gauge}, we introduce the self-similar coordinate:
\[y=-\frac{v}{u}.\]
The change of coordinates is given by:
\begin{equation}\label{change of coordinates}
    \partial_v=-\frac{1}{u}\partial_y,\ 
    \partial_u=\frac{v}{u^2}\partial_y+\partial_u.
\end{equation}
Thus, in $(u,y)$ coordinates we have $S=u\partial_u.$ We introduce the notation:
\[\chi=\phi+k\log(-u),\ R=-\frac{r}{u}.\]
The self-similar condition is equivalent to:
\[\Omega=\Omega(y),\ R=R(y),\ \chi=\chi(y).\]
Using (\ref{change of coordinates}), we compute:
\[\beta=\frac{v\partial_vr}{r}=\frac{y\partial_yR}{R},\]
\[\zeta=\frac{\theta\beta+k}{\beta-1}.\]

According to the computations in Appendix \ref{Appendix Double Null}, the Einstein-Scalar field system is equivalent to:
\[\frac{1}{1-\mu}=\beta\theta^2-\frac{(\theta\beta+k)^2}{\beta-1}+1\]
\begin{equation}\label{double null systemmm}
    \begin{dcases}
    y\partial_y\theta=k\big(k\theta-1\big)+\beta\theta\big[(\theta+k)^2-(1+k^2)\big] \\
    y\partial_y\beta=\beta(1-k^2)-\beta^2\big[(\theta+k)^2+(1-k^2)\big]
    \end{dcases}
\end{equation}

We define the variable $y=e^{t}$ and we remark that in terms of $t$ we recover the system (\ref{system in t variable}). Similarly to $s$ in self-similar Bondi coordinates, we can also define the self-similar coordinate:
\[\tau(t)=\tau_*+\int_{-\infty}^{t}\beta d\tilde{t}.\]
This is well defined by the analysis of the negatively stable critical point $(\frac{1}{k},0)$ of (\ref{system in t variable}) in Section \ref{Extending Beyond the Interior Region section}. We remark that in the variable $\tau$ we recover the system (\ref{system theta beta}) for $(\theta,\beta)$. We also have that $C_0^-=\{\tau=\tau_*\}.$

\subsubsection{Extending beyond the apparent horizon}

We consider an exterior solution obtained in Theorem \ref{solutions apparent horizon}, which is defined on $\mathcal{E}^{(4+1)}\cap\mathcal{R}^{(4+1)}=\{t\in(-\infty,t_{\mathcal{A}})\}.$ In this section, we extend the solution beyond $\mathcal{A}^{(4+1)}$. According to the above section, it satisfies the system:
\begin{equation}\label{System in t variablee}
    \begin{dcases}
    \frac{d\theta}{dt}=k\big(k\theta-1\big)+\beta\theta\big[(\theta+k)^2-(1+k^2)\big] \\
    \frac{d\beta}{dt}=\beta(1-k^2)-\beta^2\big[(\theta+k)^2+(1-k^2)\big]
    \end{dcases}
\end{equation}

According to Proposition \ref{Christodoulou phase portrait analysis}, we have that there exists $\sigma_{\mathcal{A}}\neq0$ such that $\beta\xrightarrow[]{}0,\ \beta\theta\xrightarrow[]{}\eta_{\mathcal{A}}:=1/\sigma_{\mathcal{A}}$ as $t\rightarrow t_{\mathcal{A}}$. We introduce the function $\eta=\beta\theta$ and rewrite the system as:
\begin{equation}\label{eta beta system}
    \begin{dcases}
    \frac{d\eta}{dt}=-2\eta\beta-k\beta+\eta \\
    \frac{d\beta}{dt}=\beta\big(1-k^2-\beta\big)-2k\eta\beta-\eta^2
    \end{dcases}
\end{equation}
Since $(\eta_{\mathcal{A}},0)$ is not a critical point, we can solve uniquely for $t\geq t_{\mathcal{A}}$ and we obtain in a neighborhood of $\mathcal{A}^{(4+1)}:$
\[\eta(t)=\eta_{\mathcal{A}}+\eta_{\mathcal{A}}(t-t_{\mathcal{A}})+O(|t-t_{\mathcal{A}}^2|),\]
\[\beta(t)=-\eta_{\mathcal{A}}^2(t-t_{\mathcal{A}})+O(|t-t_{\mathcal{A}}^2|).\]
We remark that the following identities hold:
\[v\partial_v\phi=\eta,\]
\[\frac{v}{r}\cdot\frac{\partial_vr}{1-\mu}=\beta+\eta^2+\frac{\beta}{1-\beta}(\eta+k)^2.\]
It suffices to solve the system (\ref{eta beta system}), because the quantities on the left hand side of the above identities determine the solution to the Einstein vacuum equations, with boundary values given by $r$ and $\phi$ on $C_0^-.$ Therefore, we extended the vacuum solutions of Theorem \ref{solutions apparent horizon} for $t>t_{\mathcal{A}}.$

We consider the extended solutions in the $(\theta,\alpha)$ plane. They satisfy the system:
\begin{equation}\label{theta alpha dt system}
    \begin{dcases}
    \frac{d\theta}{dt}=k\big(k\theta-1\big)+\frac{\theta}{\alpha}\big[(\theta+k)^2-(1+k^2)\big] \\
    \frac{d\alpha}{dt}=(\theta+k)^2+(1-k^2)(1-\alpha)
    \end{dcases}
\end{equation}
The above analysis of (\ref{eta beta system}) shows that $\beta$ changes sign, whereas the sign of $\eta$ remains unchanged at $t=t_{\mathcal{A}}$. This allows us to describe the behavior of orbits corresponding to the extended solutions near $\mathcal{A}^{(4+1)}:$
\begin{itemize}
    \item The orbits to the left of $S_1^+$ blow up as $t\xrightarrow[]{}t_{\mathcal{A}}-$ with $\theta\xrightarrow[]{}-\infty,\ \alpha\xrightarrow[]{}\infty,$ and $\alpha/\theta\xrightarrow[]{}1/\eta_{\mathcal{A}}.$ They can be continued for $t>t_{\mathcal{A}}$ and they blow up as $t\xrightarrow[]{}t_{\mathcal{A}}+$ with $\theta\xrightarrow[]{}\infty,\ \alpha\xrightarrow[]{}-\infty,$ and $\alpha/\theta\xrightarrow[]{}1/\eta_{\mathcal{A}}.$
    \item The orbits to the right of $S_1^+$ blow up as $t\xrightarrow[]{}t_{\mathcal{A}}-$ with $\theta\xrightarrow[]{}\infty,\ \alpha\xrightarrow[]{}\infty,$ and $\alpha/\theta\xrightarrow[]{}1/\eta_{\mathcal{A}}.$ They can be continued for $t>t_{\mathcal{A}}$ and they blow up as $t\xrightarrow[]{}t_{\mathcal{A}}+$ with $\theta\xrightarrow[]{}-\infty,\ \alpha\xrightarrow[]{}-\infty,$ and $\alpha/\theta\xrightarrow[]{}1/\eta_{\mathcal{A}}.$
\end{itemize}

Since we are in the case $k^2=\frac{1}{3},$ we have the following result which is useful in the phase portrait analysis:

\begin{lemma}\label{side of line orbit}
    In the $(\theta,\alpha)$ plane, the line $\alpha=\frac{\theta}{k}+3$ contains the critical points $P_{-},\ P_{-1},\ P_1,$ and is an exact solution to (\ref{theta alpha dt system}) for $\theta<-1,\ -1<\theta<1,\ 1<\theta.$ The orbits corresponding to the extensions of solutions in Theorem \ref{solutions apparent horizon} satisfy:
    \[\theta<k\alpha-\sqrt{3}\text{ for }t<t_{\mathcal{A}}\]
    \[\theta>k\alpha-\sqrt{3}\text{ for }t>t_{\mathcal{A}}\]
\end{lemma}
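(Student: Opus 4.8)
The plan is to split the lemma into its two assertions and to handle the second one (the position of the orbit relative to the line) by producing a single scalar quantity whose evolution is governed by a \emph{linear homogeneous} ODE; its sign is then automatically preserved along the orbit and can be pinned down from the behavior at the center.

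For the first assertion I would simply substitute the three points $P_-=(-\sqrt3,0)$, $P_{-1}=(-1,\frac{2}{1+k})$, $P_1=(1,\frac{2}{1-k})$ into $\alpha=\frac{\theta}{k}+3$ and check they lie on it, which uses $k^2=\tfrac13$ (e.g. $\frac{2}{1\mp k}=3\pm\sqrt3$). To see that the line is an exact orbit away from these points, I would verify that the vector field of (\ref{theta alpha dt system}) is tangent to it, i.e. that $\frac{d\alpha}{dt}=\frac{1}{k}\frac{d\theta}{dt}$ holds identically on $\{\alpha=\theta/k+3\}$; this is a direct algebraic identity which again collapses precisely because $k^2=\tfrac13$, as already remarked after (\ref{system alpha theta}).

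For the inequalities, the key object I would introduce is the signed quantity $L=\theta-k\alpha+\sqrt3$ measuring position relative to the line ($L<0$ is the claim for $t<t_{\mathcal A}$ and $L>0$ the claim for $t>t_{\mathcal A}$), together with its $\beta$-rescaled version
\[M=\beta L=\eta+\sqrt3\,\beta-k,\]
where I use $\alpha\beta=1$ and $\eta=\beta\theta$. The point of passing to $M$ and to the variables $(\eta,\beta)$ is that the orbit is \emph{regular} across the apparent horizon in these variables, with $\eta\to\eta_{\mathcal A}$, $\beta\to0$ at $t_{\mathcal A}$, whereas it blows up in $(\theta,\alpha)$. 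Differentiating $M$ along (\ref{eta beta system}) and using $\sqrt3=1/k=3k$, the resulting quadratic factors as $\eta^2+4k\eta\beta+\beta^2=(\eta+\sqrt3\beta)(\eta+k\beta)$ and its linear part as $k(\eta+k\beta)$, which together collapse everything into the clean relation
\[\frac{dM}{dt}=-\sqrt3\,(\eta+k\beta)\,M.\]
This factorization is the heart of the argument and is exactly where the value $k^2=\tfrac13$ is essential; I expect isolating it to be the main obstacle, since a priori $\frac{dM}{dt}$ is a generic quadratic in $(\eta,\beta)$ with no reason to be proportional to $M$.

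Given the linear ODE, $M$ retains a constant sign along the orbit on any interval where $\eta,\beta$ remain finite, and its coefficient $-\sqrt3(\eta+k\beta)$ is in particular bounded across and beyond $t_{\mathcal A}$ by the regularity of $(\eta,\beta)$ there. To fix that sign I would examine $t\to-\infty$, where the exterior orbit approaches the critical point $(\theta,\beta)=(\frac1k,0)$ of (\ref{system in t variable}) established in Section \ref{Extending Beyond the Interior Region section}; there $\eta=\beta\theta\to0$ and $\beta\to0$, so $M\to-k<0$, whence $M<0$ throughout. Finally I would translate back through $M=\beta L$: in the regular region $\beta>0$ (for $t<t_{\mathcal A}$) this forces $L<0$, i.e. $\theta<k\alpha-\sqrt3$, while in the trapped region $\beta<0$ (for $t>t_{\mathcal A}$) it forces $L>0$, i.e. $\theta>k\alpha-\sqrt3$, as claimed. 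The only points needing care are that the orbits in question are genuinely distinct from the line (so $M\not\equiv0$) and that $(\eta,\beta)$ stays finite on the relevant $t$-interval, both of which follow from the blow-up description of the orbits preceding the lemma.
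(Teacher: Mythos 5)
Your proof is correct, and I checked the identity it hinges on: with $k^2=\tfrac13$, differentiating $M=\eta+\sqrt{3}\beta-k$ along (\ref{eta beta system}) does give $\frac{dM}{dt}=-\sqrt{3}\,(\eta+k\beta)\,M$, and your evaluation $M\to-k<0$ at the critical point $(\theta,\beta)=\big(\tfrac1k,0\big)$ of (\ref{system in t variable}) correctly pins the sign. Your skeleton coincides with the paper's: the paper also disposes of the first assertion by direct computation, also passes to the $(\eta,\beta)$ variables to cross the apparent horizon (where the line becomes $\eta=k-\sqrt{3}\beta$), and also concludes by dividing by $\beta$, whose sign change at $t_{\mathcal{A}}$ flips the inequality. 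Where you genuinely differ is the propagation mechanism: the paper argues qualitatively that, the line being an exact orbit, ODE uniqueness forbids the extension's orbit from crossing it, and reads off the side for $t<t_{\mathcal{A}}$ from the vertical asymptote $\theta=\tfrac1k$, $\alpha\to+\infty$ in the $(\theta,\alpha)$ plane; you instead propagate the sign of the single scalar $M$ through its linear homogeneous ODE. Your route buys a self-contained argument that handles both regions and the passage through $t_{\mathcal{A}}$ in one stroke, and it subsumes the invariance of the line (the level set $\{M=0\}$ is preserved by your equation), whereas the paper's route requires essentially no computation beyond the exactness check but leaves the no-crossing step implicit. Two points to keep explicit in a write-up: the coefficient $-\sqrt{3}(\eta+k\beta)$ is only locally bounded, since both $\eta$ and $\beta$ blow up as $t\to t_{\mathcal{B}}$ where $\alpha\to0$, but sign preservation only needs boundedness on compact subintervals, as you note; and the signs $\beta>0$ on $(-\infty,t_{\mathcal{A}})$ and $\beta<0$ on $(t_{\mathcal{A}},t_{\mathcal{B}})$ invoked at the end are exactly those supplied by Theorem \ref{solutions apparent horizon} and the extension construction preceding the lemma.
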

\begin{proof}
    The first statement follows by a direct computation. Since our orbits have a vertical asymptote at $\theta=\frac{1}{k}$ in the upper half plane as $t\xrightarrow[]{}-\infty$, we get $\theta<k\alpha-\sqrt{3}\text{ for }t<t_{\mathcal{A}}$. In the $(\eta,\beta)$ plane the exact orbit is given by $\eta=k-\beta\sqrt{3}.$ For $t<t_{\mathcal{A}}$ we know that $\eta<k-\beta\sqrt{3},$ which implies this holds for $t>t_{\mathcal{A}}$ as well. Using that $\alpha<0$ for $t>t_{\mathcal{A}}$, we conclude that $\theta>k\alpha-\sqrt{3}\text{ for }t>t_{\mathcal{A}}.$
\end{proof}

\subsubsection{The trapped region}
In this section, we study of the extensions of the solutions from Theorem \ref{solutions apparent horizon} in the trapped region $\mathcal{T}^{(4+1)}$. We begin with the following simple observations:
\begin{lemma}
    The region $\{t>t_{\mathcal{A}}\}$ is the trapped region $\mathcal{T}^{(4+1)}$.
\end{lemma}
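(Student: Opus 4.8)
The plan is to reduce the statement to a sign computation for the Hawking mass ratio $\mu$. By definition $\mathcal{T}^{(4+1)}=\{\mu>1\}$, and since trapped codimension-$2$ surfaces correspond between $(\mathcal{U}^{(4+1)},\tilde g)$ and $(\mathcal{U}^{(1+1)}\times S^2,g)$, it suffices to show that for the associated Einstein--Scalar field solution one has $\mu>1$ precisely when $t>t_{\mathcal{A}}$. I would work in the self-similar double null coordinates of Section \ref{Bondi to double null coordinates}, where $g=-\Omega^2dudv+r^2d\sigma_2^2$ with $\Omega^2>0$, and use the standard spherically symmetric identity $1-\mu=-\tfrac{4}{\Omega^2}\partial_ur\,\partial_vr$, whose right-hand side has sign opposite to that of the product $\partial_ur\,\partial_vr$.

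First I would compute the two null derivatives of $r$ from $r=-uR(y)$ and $y=-v/u$. A direct differentiation gives $\partial_vr=R'(y)$ and $\partial_ur=-R(y)\big(1-\beta\big)$, where $\beta=\tfrac{y\,\partial_yR}{R}=\tfrac{v\,\partial_vr}{r}$ as in Section \ref{Bondi to double null coordinates}; in particular the sign of $\partial_vr$ equals that of $\beta$. Since $\beta<1$ throughout the cone exterior region (this already holds in the regular region, where $\beta=1-\tfrac{e^{\nu-\lambda}}{2e^s}$ with $e^{\nu-\lambda}>0$, and $\beta\le 0<1$ from the apparent horizon onward), we get $\partial_ur=-R(1-\beta)<0$ everywhere. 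Plugging into the identity for $1-\mu$ shows that the sign of $1-\mu$ coincides with the sign of $\beta$, so that $\mu>1$ if and only if $\beta<0$.

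It then remains to pin down the sign of $\beta$ as a function of $t$. For $t<t_{\mathcal{A}}$ the solution lies in the regular region, where we imposed $\beta>0$ when extending past $C_0^-$, and at $t=t_{\mathcal{A}}$ we have $\beta=0$ by the definition of $\mathcal{A}^{(4+1)}$. To treat $t>t_{\mathcal{A}}$ I would argue that $\{\beta\le 0\}$ is forward invariant under (\ref{eta beta system}): on the line $\{\beta=0\}$ the second equation gives $\tfrac{d\beta}{dt}=-\eta^2\le 0$, with equality only at the origin $\eta=0$. Since the orbit reaches the apparent horizon with $\beta=0$ and $\eta=\eta_{\mathcal{A}}=1/\sigma_{\mathcal{A}}\neq0$, it crosses transversally into $\{\beta<0\}$, consistent with the expansion $\beta=-\eta_{\mathcal{A}}^2(t-t_{\mathcal{A}})+O(|t-t_{\mathcal{A}}|^2)$, and forward invariance prevents $\beta$ from returning to $0$ except asymptotically at the critical point. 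Hence $\beta<0$ for all $t>t_{\mathcal{A}}$, which together with the previous paragraph yields $\mu>1$ exactly on $\{t>t_{\mathcal{A}}\}$.

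The only genuinely non-formal point is this last forward-invariance argument, namely showing that $\beta$ stays negative on the whole region $\{t>t_{\mathcal{A}}\}$ rather than merely near $\mathcal{A}^{(4+1)}$; everything else is a direct computation. The invariance of $\{\beta\le 0\}$ combined with transversal entry at $t_{\mathcal{A}}$ settles it, and I would note in passing that $\beta<0$ is equivalent to $\alpha=1/\beta<0$, recovering the fact ``$\alpha<0$ for $t>t_{\mathcal{A}}$'' already used in Lemma \ref{side of line orbit}.
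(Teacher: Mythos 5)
Your proof is correct, and while it rests on the same reduction as the paper --- everything comes down to the sign of $\beta$ on $\{t>t_{\mathcal{A}}\}$ --- the way you execute the key steps is genuinely different and more robust. The paper's printed proof is two lines: it asserts the sign of $\beta$ for $t>t_{\mathcal{A}}$ and then quotes the identity $\frac{1}{1-\mu}=1+\beta\theta^2+\frac{(\theta\beta+k)^2}{1-\beta}$ with the inequality ``$>1$''; as printed this is not coherent (the assertion ``$\beta>0$ so $\partial_vr<0$'' is self-contradictory, since $\beta=v\partial_vr/r$ with $v,r>0$, and $\frac{1}{1-\mu}>1$ would characterize $0<\mu<1$ rather than trapping, which needs $\frac{1}{1-\mu}<0$ --- a sign that is not manifest from that formula when $\beta<0$). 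You avoid both issues by computing the two null expansions directly, $\partial_vr=R'$ and $\partial_ur=-R(1-\beta)<0$, and reading off the sign of $1-\mu=-4\Omega^{-2}\partial_ur\,\partial_vr$, which makes the equivalence $\mu>1\Leftrightarrow\beta<0$ transparent in the cone exterior region. Your forward-invariance argument for $\{\beta\le 0\}$ under (\ref{eta beta system}) is the other substantive addition: the paper's preceding analysis only gives $\beta<0$ near $t_{\mathcal{A}}$ from the local expansion $\beta=-\eta_{\mathcal{A}}^2(t-t_{\mathcal{A}})+O(|t-t_{\mathcal{A}}|^2)$, and the global sign (used again, e.g., in Lemma \ref{side of line orbit}) is not spelled out; your observation that $\frac{d\beta}{dt}=-\eta^2\le 0$ on $\{\beta=0\}$, with equality only at the critical point $(0,0)$, which cannot be reached in finite time, closes that gap. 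Two minor remarks: your second paragraph invokes ``$\beta\le 0$ from the apparent horizon onward'' before you establish it in the third, which is only an ordering issue since the invariance argument does not use $\partial_ur<0$; and the orbit in the trapped region does not in fact approach $(\eta,\beta)=(0,0)$ asymptotically --- it reaches $\beta\to-\infty$ at the finite time $t_{\mathcal{B}}$ introduced in the next lemma --- though nothing in your argument depends on this aside.
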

\begin{proof}
    For $t>t_{\mathcal{A}}$ we have $\beta>0$ so $\partial_vr<0.$ We also have:
    \[\frac{1}{1-\mu}=1+\beta\theta^2+\frac{(\theta\beta+k)^2}{1-\beta}>1.\]
\end{proof}

\begin{lemma}\label{orbits spacelike singularity}
    The solutions reach $\alpha=0$ at some $t_{\mathcal{B}}<\infty.$ We define $\mathcal{B}=\{t=t_{\mathcal{B}}\}.$ Then we have $r=0$ on $\mathcal{B}.$
\end{lemma}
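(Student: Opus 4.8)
The plan is to work throughout in the trapped region $\{t>t_{\mathcal{A}}\}$, where $\beta<0$ (equivalently $\partial_v r<0$, and hence $\alpha=1/\beta<0$), as is visible from the expansion $\beta=-\eta_{\mathcal{A}}^2(t-t_{\mathcal{A}})+O(|t-t_{\mathcal{A}}|^2)$ of the solution past $\mathcal{A}^{(4+1)}$. The statement has two parts: that $\alpha\to0$ (i.e. $\beta\to-\infty$) at a \emph{finite} parameter $t_{\mathcal{B}}$, and that $r\to0$ along the constant-$u$ curves as $t\to t_{\mathcal{B}}$. The first part follows from the Riccati structure of the $\beta$-equation in (\ref{System in t variablee}). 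Using $k^2=\tfrac13$, $(\theta+k)^2\ge0$ and $\beta<0$, one has
\[
\frac{d\beta}{dt}=\tfrac{2}{3}\beta-\beta^2\big[(\theta+k)^2+\tfrac{2}{3}\big]<-\tfrac{2}{3}\beta^2,
\]
so $w:=-\beta>0$ satisfies $\frac{dw}{dt}>\tfrac{2}{3}w^2$. Comparison with the explicitly solvable $\dot z=\tfrac23 z^2$ forces $w\to+\infty$ no later than $t_1+\tfrac{3}{2w(t_1)}$ for any $t_1>t_{\mathcal{A}}$. Since $\frac{d\beta}{dt}<0$ throughout, $\beta$ decreases monotonically, so defining $t_{\mathcal{B}}$ as the supremum of the existence interval gives $t_{\mathcal{B}}<\infty$ and $\beta\to-\infty$, i.e. $\alpha\to0^-$ on $\mathcal{B}$. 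Note that this part needs no control on $\theta$.

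For $r=0$ on $\mathcal{B}$, I would use that along a curve of constant $u$ one has $\frac{d\log R}{dt}=\beta$ with $R=-r/u$, so $R(t)=R(t_0)\exp\!\big(\int_{t_0}^{t}\beta\,d\tilde t\big)$; thus $r\to0$ is equivalent to $\int_{t_0}^{t_{\mathcal{B}}}\beta\,dt=-\infty$. Because $\beta$ is finite and continuous away from $t_{\mathcal{B}}$, the divergence must be extracted from the behaviour as $t\to t_{\mathcal{B}}$. Rewriting the growth of $\alpha$ via (\ref{theta alpha dt system}),
\[
\frac{d\alpha}{dt}=(\theta+k)^2+\tfrac{2}{3}(1-\alpha),
\]
one has $\tfrac23\le\frac{d\alpha}{dt}\le(\theta+k)^2+\tfrac23$ for $\alpha<0$. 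If $\theta$ stays bounded near $t_{\mathcal{B}}$, then $\frac{d\alpha}{dt}$ is pinched between two positive constants, so integrating from $t$ to $t_{\mathcal{B}}$ (where $\alpha(t_{\mathcal{B}})=0$) gives $|\alpha(t)|\asymp(t_{\mathcal{B}}-t)$, hence $\beta=1/\alpha\asymp-(t_{\mathcal{B}}-t)^{-1}$ and $\int^{t_{\mathcal{B}}}\beta\,dt=-\infty$, yielding $r=0$ on $\mathcal{B}$.

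The main obstacle is therefore the a priori bound on $\theta$ as $t\to t_{\mathcal{B}}$. The lower bound is immediate from Lemma \ref{side of line orbit}: since $\theta>k\alpha-\sqrt3$ for $t>t_{\mathcal{A}}$ and $\alpha\to0^-$, we get $\theta\gtrsim-\sqrt3$ near $\mathcal{B}$. For the upper bound I would argue by a barrier in the $\theta$-equation of (\ref{theta alpha dt system}): for $\theta$ positive and bounded away from $0$ the term $\tfrac{\theta}{\alpha}\big[(\theta+k)^2-\tfrac43\big]$ is negative (as $\alpha<0$) of size $\sim\theta^3/|\alpha|$, which dominates the linear term $k(k\theta-1)$ once $|\alpha|$ is small. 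Concretely one fixes a threshold $\Theta_0$ (say $\Theta_0=2$) and then $\delta>0$ so that $\frac{d\theta}{dt}<0$ whenever $\theta\ge\Theta_0$ and $0<|\alpha|\le\delta$; since $|\alpha|$ decreases monotonically to $0$, past the time when $|\alpha|=\delta$ the value $\theta$ cannot cross $\Theta_0$ upward, so $\theta\le\max(\Theta_0,\theta_\delta)$ up to $t_{\mathcal{B}}$, where $\theta_\delta$ is the (finite) value of $\theta$ at $|\alpha|=\delta$. One should note that $\theta$ \emph{does} blow up at the other end $t_{\mathcal{A}}+$, coming in from $\pm\infty$ as described after Proposition \ref{Christodoulou phase portrait analysis}; this is harmless, since the integral $\int\beta$ only diverges near $t_{\mathcal{B}}$. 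Combining the two bounds pins $\theta$ between constants near $\mathcal{B}$ and completes the argument. I expect the only delicate point to be making the barrier estimate uniform in the limit $|\alpha|\to0$, which is precisely where the special value $k^2=\tfrac13$ enters, through the exact line $\eta=k-\sqrt3\,\beta$ underlying Lemma \ref{side of line orbit}.
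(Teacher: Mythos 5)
Your proposal is correct in substance, and it splits into a part that mirrors the paper and a part that is genuinely different. For the finite-time breakdown, your Riccati inequality $\frac{d\beta}{dt}<-\frac{2}{3}\beta^2$ is literally the paper's observation $\frac{d\alpha}{dt}>1-k^2$ rewritten in the variable $\beta=1/\alpha$, and your cubic barrier for the upper bound on $\theta$ is the same mechanism as the paper's contradiction $\frac{d\theta}{dt}<-\frac{1}{1-A}\cdot\theta^3$ for an orbit with horizontal asymptote $\alpha=A\leq0$; the lower bound via Lemma \ref{side of line orbit} is common to both. One inaccuracy, though: your claim that the first part ``needs no control on $\theta$'' is wrong. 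The comparison argument only bounds the maximal existence time of the smooth system (\ref{System in t variablee}); a priori the breakdown could be $\theta\rightarrow+\infty$ at bounded $\beta$ (i.e. $\alpha$ bounded away from $0$), and exactly this scenario must be excluded before one may identify the breakdown with $\beta\rightarrow-\infty$, i.e. $\alpha\rightarrow0^-$. Your own barrier does this — and note it works for any bound $|\alpha|\leq|\alpha(t_1)|$, with threshold $\Theta$ depending on that bound, not only for small $|\alpha|$, which also removes the apparent circularity of invoking ``$|\alpha|$ decreases to $0$'' before that is known — so the proof closes, but the logical order should be rearranged: bound $\theta$ on $[t_1,T)$ first, then conclude $\alpha\rightarrow0^-$ at $t_{\mathcal{B}}=T$.

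Where you genuinely diverge from the paper is the conclusion $r=0$ on $\mathcal{B}$. The paper observes that $\alpha=\frac{r}{v\partial_vr}\rightarrow0$ makes each point of $\mathcal{B}$ a first singularity away from the center for the associated Einstein-Scalar field solution, and then cites the extension criterion of \cite{kommemi} to get $r=0$ there. You instead integrate $d\log r=\beta\,dt$ along constant $u$ and extract the divergence from the pinching $\frac{2}{3}\leq\frac{d\alpha}{dt}\leq C$ near $t_{\mathcal{B}}$ (valid once $\theta$ is bounded), giving $|\alpha|\asymp(t_{\mathcal{B}}-t)$, hence $\beta\asymp-(t_{\mathcal{B}}-t)^{-1}$ and a logarithmically divergent integral. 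Your route is self-contained ODE analysis and yields quantitative rates; the paper's is shorter and softer but rests on an external PDE result. Finally, your closing worry about uniformity of the barrier as $|\alpha|\rightarrow0$ is unfounded: the stabilizing term of size $\theta^3/|\alpha|$ only strengthens in that limit, and the special value $k^2=\frac{1}{3}$ enters solely through the exact line underlying Lemma \ref{side of line orbit}.
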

\begin{proof}
    The system (\ref{theta alpha dt system}) implies that $\frac{d\alpha}{dt}>1-k^2.$ Since $\alpha<0$ for $t>t_{\mathcal{A}},$ there exists $t_{\mathcal{A}}<t_{\mathcal{B}}<\infty$ such that the solution blows up. If we assume that the orbit has the horizontal asymptote $\alpha=A\leq0$ as $\theta\xrightarrow[]{}\infty,$ we get the contradiction $\frac{d\theta}{dt}<-\frac{1}{1-A}\cdot\theta^3$ for all $\theta$ large enough. Therefore, we obtain that
    $\alpha=0$ at $\{t=t_{\mathcal{B}}\}.$ Since $\alpha=\frac{r}{v\partial_vr}$, we obtain that each point of $\mathcal{B}$ is a first singularity away from the center for the corresponding Einstein-Scalar field solution. Using the extension criterion of \cite{kommemi}, we conclude that $r=0$ on $\mathcal{B}.$
\end{proof}

This result also confirms the fact that the corresponding solutions to the Einstein-Scalar field system in spherical symmetry can be extended into the trapped region up to a spacelike singular boundary $\mathcal{B}$, with $r = 0$. As argued above, we cannot obtain the same conclusion directly for the $(4+1)$-dimensional vacuum spacetime. We further study the behavior of the vacuum solution near $\mathcal{B}$.

We recall that we defined the variable $\tau$ in the region $\mathcal{E}^{(4+1)}\cap\mathcal{R}^{(4+1)},$ which increases along future null cones. In the region $\mathcal{T}^{(4+1)}$ we define:
\[\tau(t)=\tau_{\mathcal{A}}-\int_{t_{\mathcal{A}}}^{t}\frac{1}{\alpha} d\tilde{t}.\]
We remark that $\alpha<0$ in $\mathcal{T}^{(4+1)}$ implies that $\tau$ increases with respect to $t$. In the trapped region, the system (\ref{theta alpha dt system}) is equivalent to:
\begin{equation}\label{system alpha theta trapped}
    \begin{dcases}
    \frac{d\theta}{d\tau}=-k\alpha\big(k\theta-1\big)-\theta\big[(\theta+k)^2-(1+k^2)\big] \\
    \frac{d\alpha}{d\tau}=-\alpha\big[(\theta+k)^2+(1-k^2)(1-\alpha)\big]
    \end{dcases}
\end{equation}

We study the phase portrait of (\ref{system alpha theta trapped}), in order to understand the solutions in the region $\mathcal{T}^{(4+1)}$. The critical points with $\alpha\leq0$ are:
\begin{itemize}
    \item Critical points stable to the future: $P_-=(-\sqrt{3},0),\ Q_-=(k,0).$
    \item Saddle critical points: $(0,0).$
\end{itemize}
\begin{center}
\begin{tikzpicture}[decoration={markings, mark=at position 0.7 with {\arrow{latex}}}] 
\draw[dashed] (1.73,0) -- (1.73,-5);
\draw[thick][postaction={decorate}] (1.6,-4.95) .. controls (1.6,-2.5) and (-2.7,-2) ..(-1.73,0);
\draw[thick][postaction={decorate}] (-5,0) -- (-1.73,0);
\draw[thick][postaction={decorate}] (0,0) -- (-1.73,0);
\draw[thick][postaction={decorate}] (0,0) -- (0.57,0);
\draw[thick][postaction={decorate}] (5,0) -- (0.57,0);
\draw[very thin] (0,-5) -- (0,0);
\filldraw[color=black, fill=black] (0,0) circle (2pt);
\filldraw[color=black, fill=black] (-1.73,0) circle (2pt) node[anchor=south east] {$P_-$};
\filldraw[color=black, fill=black] (0.57,0) circle (2pt) node[anchor=south west] {$Q_-$};
\draw[thick][postaction={decorate}] (-4.6,-4.96) -- (-1.73,0);
\draw[very thin] (-1.73,0) -- (-0.3,-0.83);
\draw[very thin]   plot[smooth,domain=-2.6:0] ({-1.73+(0.58*\x)+(0.33*(\x)^2)},\x);
\end{tikzpicture}
\end{center}
Moreover, we note that $\{\alpha=0\}$ is an orbit of (\ref{system alpha theta trapped}) on $\theta<-\sqrt{3},\ -\sqrt{3}<\theta<k,\ \theta>k$. Thus, all orbits starting at $\alpha\xrightarrow[]{}-\infty$ remain in the lower half plane and end at $P_-,\ Q_-$ or $(0,0).$ We also recall that the line $\alpha=\frac{\theta}{k}+3$ is an orbit of the system. Finally, the orbits corresponding to interior Lorentzian Hawking--Page solutions are also orbits of (\ref{system alpha theta trapped}), but with reverse orientation.

\begin{lemma}
    The orbits corresponding to solutions in $\mathcal{T}^{(4+1)}$ do not reach the critical point $(0,0).$
\end{lemma}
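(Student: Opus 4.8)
The plan is to exploit that $(0,0)$ is a saddle point of (\ref{system alpha theta trapped}): by the stable manifold theorem, an orbit can converge to $(0,0)$ as $\tau\to+\infty$ (equivalently, as the corresponding solution approaches $\mathcal{B}$) only if it lies on the one–dimensional stable manifold $W^s$ of $(0,0)$, which in the half–plane $\{\alpha<0\}$ reduces to a single orbit $\gamma_0$. Hence it suffices to show that none of the orbits arising from the extensions of Theorem \ref{solutions apparent horizon} coincides with $\gamma_0$. I would distinguish $\gamma_0$ from our orbits by comparing their behavior at the apparent horizon $\mathcal{A}^{(4+1)}=\{\beta=0\}$, after passing to the variables $(\eta,\beta)$ with $\eta=\beta\theta$.

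First I would record the local structure at the saddle. The linearization of (\ref{system alpha theta trapped}) at $(0,0)$ has eigenvalues $\pm1$, with stable eigendirection $(k,-2)$, i.e. slope $\tfrac{d\alpha}{d\theta}=-2\sqrt3$. Thus $\gamma_0$ approaches $(0,0)$ through the quadrant $\{\theta>0,\ \alpha<0\}$ tangentially to this direction, so that $\alpha/\theta\to-2\sqrt3$ and $\eta=\beta\theta=\theta/\alpha\to-\tfrac{k}{2}$ while $\beta=1/\alpha\to-\infty$ along $\gamma_0$. (I note in passing that any orbit remaining in $\{\theta<0\}$ cannot converge to $(0,0)$ at all, since $W^s$ does not enter the third quadrant; because $\tfrac{d\theta}{d\tau}=k\alpha<0$ on $\{\theta=0,\ \alpha<0\}$, an orbit starting with $\theta<0$ stays there, which already disposes of the orbits emanating from $\mathcal{A}$ with $\theta\to-\infty$.)

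Next I would analyze (\ref{eta beta system}) near $\{\beta=0\}$. The line $\{\beta=0\}$ is invariant, the restricted flow being $\tfrac{d\eta}{dt}=\eta$, whose only critical point in the relevant range is $(\eta,\beta)=(0,0)$; linearizing (\ref{eta beta system}) there yields eigenvalues $1$ and $1-k^2=\tfrac23$, so $(\eta,\beta)=(0,0)$ is a source. Since $\beta=1/\alpha$ is monotone along orbits in $\mathcal{T}^{(4+1)}$ (the $\alpha$–nullcline of (\ref{system alpha theta trapped}) lies in the upper half–plane, so $\tfrac{d\alpha}{d\tau}>0$ and $\alpha$ increases from $-\infty$ to $0^-$), tracing $\gamma_0$ backward forces $\beta\to0^-$ with its $\alpha$–limit set contained in $\{\beta=0\}$. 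As the backward flow on $\{\beta=0\}$ is attracted only to $\eta=0$, the orbit $\gamma_0$ must limit backward to the source $(\eta,\beta)=(0,0)$; in particular $\eta\to0$ as $\beta\to0^-$ along $\gamma_0$, and $\gamma_0$ reaches $\{\beta=0\}$ only asymptotically, never at a finite value of the parameter.

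Finally I would conclude by contrast. By Proposition \ref{Christodoulou phase portrait analysis}, every orbit coming from the extensions of Theorem \ref{solutions apparent horizon} crosses $\mathcal{A}^{(4+1)}$ at a finite parameter value $t_{\mathcal{A}}$ with $\eta=\beta\theta\to\eta_{\mathcal{A}}=1/\sigma_{\mathcal{A}}\neq0$; that is, it meets $\{\beta=0\}$ transversally at the point $(\eta_{\mathcal{A}},0)\neq(0,0)$. By uniqueness of orbits of the autonomous system (\ref{eta beta system}), such an orbit cannot be $\gamma_0$, which approaches $\{\beta=0\}$ only through the source $(0,0)$. Hence no orbit in our family lies on the stable manifold of $(0,0)$, and so none converges to $(0,0)$. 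The \textbf{main obstacle} is precisely the passage between the $(\theta,\alpha)$ and $(\eta,\beta)$ descriptions across the degenerate locus $\beta=0$: one must show carefully that the backward limit of $\gamma_0$ is exactly the source $(\eta,\beta)=(0,0)$, ruling out an escape $\eta\to\pm\infty$ as $\beta\to0^-$, since it is this qualitative difference at $\mathcal{A}^{(4+1)}$ that separates the stable–manifold orbit from the genuine members of the bifurcating family.
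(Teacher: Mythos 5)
Your opening step coincides with the paper's: $(0,0)$ is a hyperbolic saddle of (\ref{system alpha theta trapped}), so the only orbit in $\{\alpha<0\}$ that can converge to it is the stable-manifold branch $\gamma_0$, tangent to $(k,-2)$ and hence contained in $\{\theta>0\}$; your parenthetical eliminating the orbits that enter $\mathcal{T}^{(4+1)}$ with $\theta\to-\infty$ (they stay in $\{\theta<0\}$ since $\frac{d\theta}{d\tau}=k\alpha<0$ on the negative $\alpha$-axis) is also correct. The gap is in the treatment of the remaining orbits, i.e.\ in your claim that $\gamma_0$, traced backward, converges to the source $(\eta,\beta)=(0,0)$ of (\ref{eta beta system}). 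The stated mechanism is false: $\{\beta=0\}$ is \emph{not} invariant for (\ref{eta beta system}), because $\frac{d\beta}{dt}\big|_{\beta=0}=-\eta^{2}<0$ whenever $\eta\neq0$; this non-invariance is exactly what allows orbits to cross the apparent horizon. More seriously, every fact you actually use (position of the $\alpha$-nullcline, $\frac{d\alpha}{d\tau}>0$, hence $\beta\to0^{-}$ backward) holds equally for the genuine orbits of Theorem \ref{solutions apparent horizon}, for which your conclusion fails: by Proposition \ref{Christodoulou phase portrait analysis} they reach $\{\beta=0\}$ at a \emph{finite} backward parameter, transversally at $(\eta_{\mathcal{A}},0)$ with $\eta_{\mathcal{A}}\neq0$, so their trapped portion is not defined for all backward time and no limit-set argument can be run. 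Your reasoning therefore cannot distinguish $\gamma_0$ from a genuine orbit; ruling out that $\gamma_0$ itself hits $\{\beta=0\}$ transversally at some $\eta_{*}\neq0$ (which is the dangerous alternative, not the escape $\eta\to\pm\infty$ you flag as the ``main obstacle'') is precisely the content of the lemma, and it is nowhere established.

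The ingredient you are missing is the Hawking-mass constraint, on which the paper's two-line proof rests. Along any orbit one has $\frac{1}{1-\mu}=1+k^{2}+\frac{(\theta+k)^{2}}{\alpha-1}$, and for an orbit converging to $(0,0)$ this tends to $1+k^{2}-k^{2}=1>0$, i.e.\ $\mu\to0$; but orbits corresponding to solutions in $\mathcal{T}^{(4+1)}$ satisfy $\mu>1$, i.e.\ $\frac{1}{1-\mu}<0$, at every point --- an immediate contradiction, with no analysis of $\gamma_0$ needed. The same constraint is also what would close your route: $\mu>1$ confines the genuine orbits to $\{(\theta+k)^{2}>(1+k^{2})(1-\alpha)\}$, whose closure misses $(0,0)$, and this set is forward invariant (by the monotonicity of the Hawking mass in the trapped region), so $\gamma_0$ never enters it; on the complement one has $\theta=O(|\alpha|^{1/2})$, which forces $\eta=\theta/\alpha\to0$ and $\beta\to0^{-}$ backward along $\gamma_0$. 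So your key claim about $\gamma_0$ is in fact true, but the only visible proof of it already passes through the $\mu$-argument, at which point the paper's direct contradiction renders the detour through the $(\eta,\beta)$ plane unnecessary.
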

\begin{proof}
    Since $(0,0)$ is a saddle point, there could be at most one such orbit with $\alpha<0$. Along this orbit we have:
    \[\frac{1}{1-\mu}=1+k^2+\frac{(\theta+k)^2}{\alpha-1}\xrightarrow[]{}1,\]
    which contradicts the fact that $\mu>1$ in $\mathcal{T}^{(4+1)}$.
\end{proof}

\begin{lemma}
The orbits corresponding to solutions in $\mathcal{T}^{(4+1)}$ that blow up as $\tau\xrightarrow[]{}\tau_{\mathcal{A}}+$ with $\theta\xrightarrow[]{}\infty,\ \alpha\xrightarrow[]{}-\infty,$ and $\alpha/\theta\xrightarrow[]{}1/\eta_{\mathcal{A}}$ reach $Q_-$ as $\tau\rightarrow\infty.$  
\end{lemma}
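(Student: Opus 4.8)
The plan is to exhibit $\alpha$ as a monotone quantity that forces convergence to a critical point on $\{\alpha=0\}$, to identify the basin of $Q_-$ as the side of the saddle separatrix on which large $\theta$ lives, and then to place the incoming orbit inside that basin by comparing its large-$\theta$ behaviour with the separatrix. First I would record that throughout $\mathcal{T}^{(4+1)}$ (where $\alpha<0$) the second equation of (\ref{system alpha theta trapped}) gives $\frac{d\alpha}{d\tau}=-\alpha\big[(\theta+k)^2+(1-k^2)(1-\alpha)\big]>0$, so $\alpha$ increases strictly to $0$. Hence the $\omega$-limit of the orbit lies on $\{\alpha=0\}$ and, as already observed, must be one of $P_-$, $Q_-$, $(0,0)$; the preceding lemma excludes $(0,0)$, so it remains only to rule out $P_-$.

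Next, using $k=\tfrac{1}{\sqrt3}$ I would factor $(\theta+k)^2-(1+k^2)=(\theta-k)(\theta+\sqrt3)$ and $k\theta-1=k(\theta-\sqrt3)$, so that $\frac{d\theta}{d\tau}=-\tfrac13\alpha(\theta-\sqrt3)-\theta(\theta-k)(\theta+\sqrt3)$. On $\{\theta=\sqrt3\}$ the $\alpha$-term drops out and $\frac{d\theta}{d\tau}=-\sqrt3(\sqrt3-k)(2\sqrt3)<0$, so $\{\theta<\sqrt3\}$ is forward invariant; since our orbit enters with $\theta\to+\infty$ and $\frac{d\theta}{d\tau}\sim-\theta^3<0$, it decreases across $\theta=\sqrt3$ and remains in this strip thereafter. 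A sign analysis just below the axis (orbits with $0<\theta<k$ move toward $Q_-$, those with $-\sqrt3<\theta<0$ toward $P_-$) identifies the stable manifold $W^s$ of the saddle $(0,0)$—which enters the origin tangent to $(k/2,-1)$ with $\theta>0$—as the separatrix between the two basins, with $B(Q_-)$ lying to its right.

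The decisive step, which I expect to be the main obstacle, is to locate the incoming orbit on the $Q_-$ side of $W^s$. I would trace $W^s$ backward: $\alpha\to-\infty$ monotonically, and since the flow on $\{\theta=0,\ \alpha<0\}$ points into $\{\theta<0\}$, the separatrix cannot cross $\theta=0$ and stays in $\theta>0$. I would then show that any orbit with $\theta$ bounded as $\alpha\to-\infty$ must satisfy $\theta\to\sqrt3$: for bounded $\theta$ one has $\frac{d\theta}{d\alpha}\sim\frac{\sqrt3-\theta}{2\alpha}$, and integrating forces the limiting value to be $\sqrt3$ (any other value grows logarithmically). Consequently $W^s$ has the vertical asymptote $\theta=\sqrt3$, lying in the same bounded-$\theta$ family as the interior Hawking--Page orbit of Corollary \ref{HP solutions interior behaviour}. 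By Proposition \ref{Christodoulou phase portrait analysis} our orbit instead has $\theta\to+\infty$ (indeed $\theta\sim\eta_{\mathcal{A}}^{-1}(-\alpha)$), so for $\alpha$ sufficiently negative it lies strictly to the right of $W^s$; by uniqueness it cannot cross $W^s$, remains in $B(Q_-)$, and therefore converges to $Q_-$ as $\tau\to\infty$.

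The genuinely delicate point is excluding the possibility that $W^s$ itself escapes to $\theta=+\infty$, for then only orbits with large enough entry slope would reach $Q_-$. I would close this using the $\alpha\to-\infty$ classification above—only $\theta\to\sqrt3$ is admissible for the bounded branch, and $W^s$ separates the $\theta\to-\infty$ entry sector (which flows to $P_-$) from the $\theta\to+\infty$ entry sector (which flows to $Q_-$)—so that $W^s$ falls in the bounded branch and every orbit in the $\theta\to+\infty$ sector lies in $B(Q_-)$. As a back-up I would instead trap the orbit between the invariant line $\alpha=\sqrt3\theta+3$ of Lemma \ref{side of line orbit} and the interior Hawking--Page orbit, and rule out a transversal crossing of $\{\theta=k,\ \alpha<0\}$ (which by the sign of $\frac{d\theta}{d\tau}$ in $\{0<\theta<k\}$ would send the orbit to $\theta=0$ and hence to $P_-$), thereby confining it to $\{k<\theta<\sqrt3\}$ up to $\alpha=0$ and forcing the limit $Q_-$; combined with Lemma \ref{orbits spacelike singularity} this also records that $r=0$ and $\theta\to k$ at the endpoint.
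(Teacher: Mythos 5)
Your overall frame---$\alpha$ strictly increasing to $0$ in $\mathcal{T}^{(4+1)}$, so the $\omega$-limit is one of $P_-$, $Q_-$, $(0,0)$, with $(0,0)$ excluded by the preceding lemma---is the paper's frame as well, and your preliminary computations (the factorization using $3k^2=1$, forward invariance of $\{\theta<\sqrt3\}$, the stable eigendirection $(k/2,-1)$ at the saddle, the fact that $W^s$ traced backward cannot cross $\{\theta=0,\alpha<0\}$) are correct. But the step you yourself call decisive is genuinely missing, and it is exactly where the paper injects information beyond the planar system. The paper rules out $P_-$ via the Hawking-mass constraint: along any solution orbit, $\frac{1}{1-\mu}=1+k^2+\frac{(\theta+k)^2}{\alpha-1}$, so at every point of $\{\theta=0,\ \alpha<0\}$ this forces $\mu<1$, whereas solutions in $\mathcal{T}^{(4+1)}$ have $\mu>1$; hence the orbit can never cross $\{\theta=0,\alpha<0\}$, and since it also cannot cross the fixed interior Hawking--Page orbit ending at $P_-$, its limit must be $Q_-$. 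Your main route never uses $\mu$, and for that reason it cannot close: at the level of the ODE (\ref{system alpha theta trapped}) alone, the flow on $\{\theta=0,\alpha<0\}$ points into $\{\theta<0\}$, so nothing prevents an orbit entering from $\theta=+\infty$ from crossing $\{\theta=0\}$ and converging to $P_-$ unless one pins down where $W^s$ comes from. Your proposed closure---``$W^s$ separates the $\theta\to-\infty$ entry sector (which flows to $P_-$) from the $\theta\to+\infty$ entry sector (which flows to $Q_-$)''---presupposes precisely the statement being proved, namely that the $\theta\to+\infty$ sector lies in the basin of $Q_-$. The claim about $W^s$ is in fact true: $W^s$ is the time-reversed orbit of Christodoulou's analytic solution $a_1^{(i)}=0$, which starts at the origin and has the vertical asymptote $\theta=1/k$; but establishing this requires either that identification or the $\mu$-argument, neither of which appears in your proposal.

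The back-up argument does not repair the gap, because two of its assertions about the flow are false. First, one cannot ``rule out a transversal crossing of $\{\theta=k,\ \alpha<0\}$'': on that ray the cubic term vanishes (since $(\theta+k)^2-(1+k^2)=(\theta-k)(\theta+\sqrt3)$ is zero at $\theta=k$) and $\frac{d\theta}{d\tau}=\frac{2k\alpha}{3}<0$, so the flow crosses $\{\theta=k\}$ transversally into $\{\theta<k\}$ everywhere along it; no sign analysis can forbid this, nor does it need to be forbidden, since $Q_-$ is a stable node which attracts orbits from $\{\theta<k\}$ as well. Second, in $\{0<\theta<k,\ \alpha<0\}$ one has $\frac{d\theta}{d\tau}=-k\alpha(k\theta-1)+\theta(k-\theta)(\theta+\sqrt3)$, which is not negative near $\alpha=0$ (the positive cubic term dominates there), so a crossing of $\{\theta=k\}$ does not ``send the orbit to $\theta=0$ and hence to $P_-$.'' Finally, Lemma \ref{side of line orbit} is of no use for the confinement you want: in the lower half-plane the line $\alpha=\sqrt3\,\theta+3$ lies in $\{\theta\le-\sqrt3\}$, to the left of everything in play, so the orbit is not ``between'' it and the Hawking--Page orbit. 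In short, excluding the limit $P_-$ requires an input beyond the phase portrait, and the paper's input is the trapped-region inequality $\mu>1$ combined with the pointwise formula for $\frac{1}{1-\mu}$; that ingredient is absent from your proof.
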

\begin{proof}
    By the study of interior Lorentzian Hawking--Page solutions, we know there exist orbits of (\ref{system alpha theta trapped}) with the vertical asymptote $\theta=k$ in the lower half plane that end at $P_-.$ Fix one such orbit, and say it intersects $\{\theta=0\}$ at $\alpha=\alpha_0.$ Since $\mu<1$ at $(0,\alpha_0)$, we see that:
    \[\frac{1}{1-\mu}(0,\alpha)=1+k^2+\frac{k^2}{\alpha-1}\]
    implies $\mu(0,\alpha)<1$ for all $\alpha\in(\alpha_0,0).$ We notice that the orbits which blow up with $\theta\xrightarrow[]{}\infty,\ \alpha\xrightarrow[]{}-\infty$ lie to the right of our fixed interior Lorentzian Hawking--Page orbit. Since they correspond to solutions of the Einstein-Scalar field equations in the trapped region, they cannot intersect $\theta=0.$ We conclude that they must end at $Q_-$.
\end{proof}

This result allows us to complete the description of the maximal extension of solutions corresponding to orbits to the left of $S_1^+$ in Theorem \ref{solutions apparent horizon}:
\begin{theorem}\label{spacelike singularity}
    Consider a Lorentzian Hawking--Page solution, which in the exterior regular region $\mathcal{E}^{(4+1)}\cap\mathcal{R}^{(4+1)}$ corresponds to an orbit to the left of $S_1^+$, the upper branch of the stable manifold of $P_1$ in the $(\theta,\alpha)$ plane. This can be extended uniquely beyond the apparent horizon $\mathcal{A}^{(4+1)}$ into the trapped region $\mathcal{T}^{(4+1)}$ and it has a spacelike curvature singularity at the boundary $\mathcal{B}$ with $r^2/f\rightarrow0$ and $f\rightarrow\infty$. We represent the Penrose diagram of the quotient of the spacetime by the $SO(3)\times U(1)$ action:
\end{theorem}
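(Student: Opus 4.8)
The plan is to combine the ODE lemmas already established with a careful analysis of the orbit's endpoint at the critical point $Q_-$. The extension beyond the apparent horizon is handled by the system $(\ref{eta beta system})$ in the variables $(\eta,\beta)=(\beta\theta,\beta)$: since $(\eta_{\mathcal{A}},0)$ is not a critical point, standard ODE theory gives a unique smooth continuation across $\mathcal{A}^{(4+1)}$, and the identities expressing $v\partial_v\phi=\eta$ and $\frac{v}{r}\frac{\partial_vr}{1-\mu}$ in terms of $(\eta,\beta)$ show that this determines the vacuum solution uniquely from the data $r,\phi$ on $C_0^-$. By the preceding lemma the continued orbit lies to the right of the fixed interior Hawking--Page orbit, stays in the lower half of the $(\theta,\alpha)$ plane, and converges to $Q_-=(k,0)$ as $\tau\rightarrow\infty$; by Lemma \ref{orbits spacelike singularity} it reaches $\alpha=0$ at a finite $t_{\mathcal{B}}$ with $r=0$ on $\mathcal{B}=\{t=t_{\mathcal{B}}\}$ via the extension criterion of \cite{kommemi}.

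The core of the argument is the asymptotic behaviour near $Q_-$. From $(\ref{zeta})$ and $(\ref{e{2lambda}})$ one reads off $\theta\rightarrow k$, $\zeta\rightarrow k$, and $\frac{1}{1-\mu}\rightarrow 1+k^2-4k^2=0$, so $1-\mu\rightarrow-\infty$ and $\mu\rightarrow+\infty$ from the trapped side. To convert this into a blow-up rate for $\phi$, I would integrate along self-similar curves of constant $u$: writing $\phi=\chi(y)-k\log(-u)$ and $R=-r/u$, one has $\frac{d\chi}{dt}=\eta=\beta\theta$ and $\frac{d\log R}{dt}=\beta$, hence $\frac{d}{dt}\big(\chi-k\log R\big)=\beta(\theta-k)$, which in the $\tau$ variable reads $\frac{d}{d\tau}\big(\chi-k\log R\big)=-(\theta-k)$. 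Linearising $(\ref{system alpha theta trapped})$ at $Q_-$ yields a stable node with both eigenvalues negative, so $\theta-k$ decays exponentially in $\tau$ and the integral $\int^{\infty}(\theta-k)\,d\tau$ converges; since $R\rightarrow0$, this gives $\chi=k\log R+O(1)$ and therefore $\phi=k\log r+O(1)$ as $r\rightarrow0$ at fixed $u$, recording the blow-up exponent $c=k=\mp\frac{1}{\sqrt{3}}$. With $k=\frac{1}{\sqrt{3}}$ and $f=\exp\big(-\frac{2}{\sqrt{3}}\phi\big)$ this produces $f\sim r^{-2/3}\rightarrow\infty$ and $r^2/f\sim r^{8/3}\rightarrow0$, which are the two claimed limits.

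It remains to identify $\mathcal{B}$ as a spacelike curvature singularity. Spacelikeness is causal: in $\mathcal{T}^{(4+1)}$ we have $\partial_ur<0$ and $\partial_vr<0$, so $\nabla r$ is timelike and the level set $\{r=0\}$ is spacelike, a property unaffected by the conformal factor $1/f$ relating the $(3+1)$ and $(4+1)$ quotient metrics. For the curvature I would use the formula of Appendix \ref{Appendix sectional curvature}, namely $K^{(4+1)}_{S^2}=\frac{f}{r^2}\big(1-(1-\mu)(k\theta+1)(k\zeta+1)\big)$: near $Q_-$ the factor $\frac{f}{r^2}\sim r^{-8/3}\rightarrow+\infty$, while $(k\theta+1)(k\zeta+1)\rightarrow(1+k^2)^2=\frac{16}{9}>0$ and $(1-\mu)\rightarrow-\infty$, so the bracket tends to $+\infty$; as both factors are positive and diverge there is no cancellation and $K^{(4+1)}_{S^2}\rightarrow+\infty$. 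Hence $\mathcal{B}$ is a spacelike curvature singularity with $r^2/f\rightarrow0$ and $f\rightarrow\infty$.

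The step I expect to be the main obstacle is the asymptotic analysis at $Q_-$. Because $Q_-$ is a degenerate endpoint where $\alpha\rightarrow0$, $\mu\rightarrow\infty$, and both $\theta$ and $\zeta$ tend to the same value $k$, one must justify the integration $\frac{d}{d\tau}(\chi-k\log R)=-(\theta-k)$ uniformly up to $\mathcal{B}$ and, crucially, establish a decay rate for $\theta-k$ fast enough that the integral governing the $O(1)$ correction converges; this is exactly where the stable-node linearisation at $Q_-$ is needed. Verifying that $R\rightarrow0$ genuinely along the approach (equivalently $\log R\rightarrow-\infty$) and that the two diverging factors in the curvature formula carry consistent signs, so that no cancellation occurs, are the remaining delicate points; everything else reduces to quoting the ODE lemmas already proved.
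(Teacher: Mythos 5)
Your proposal is correct and follows essentially the same route as the paper: extension across $\mathcal{A}^{(4+1)}$ via the $(\eta,\beta)$ system, convergence of the trapped-region orbit to $Q_-$ by the preceding lemmas, the resulting limits $\theta,\zeta\rightarrow k$ giving $\phi=k\log r+O(1)$ and $f\sim r^{-2/3}$, and blow-up of $K^{(4+1)}_{S^2}$ from the Appendix formula. The only difference is that you fill in details the paper leaves implicit (the linearization at $Q_-$ justifying the $O(1)$ correction, the integration of $\frac{d}{d\tau}(\chi-k\log R)=-(\theta-k)$ along self-similar curves, and the causal argument for spacelikeness), all of which check out.
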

\begin{center}
\begin{tikzpicture}
\draw[dashed] (0,0) -- (4.4,4.4);
\draw (0,0) -- (0,4);
\draw[very thin] (0,4) -- (2,2);
\draw[thick, dashdotted] (0,4) .. controls (1.5,4) and (2.9,4) .. (4.4,4.4);
\filldraw[color=black, fill=white] (0,4) circle (2pt) node[anchor=east] {$b_{\Gamma}$};
\draw (0,2.5)  node[anchor=east] {$\Gamma$};
\draw (0.8,3.25)  node[anchor=west] {$C_0^-$};
\draw (1.5,1.5)  node[anchor=north west] {$\mathcal{I}^-$};
\draw (2.8,4.15)  node[anchor=south east] {$\mathcal{B}_{spacelike}$};
\end{tikzpicture}
\end{center}
\begin{proof}
    The above lemmas show that the region $\mathcal{T}^{(4+1)}$ is given by $\{t_{\mathcal{A}}<t<t_{\mathcal{B}}\}$, and we have that $r=0$ at the boundary $\mathcal{B}=\{t=t_{\mathcal{B}}\}.$ Since the orbit of $(\ref{system alpha theta trapped})$ corresponding to the solution in the region $\mathcal{T}^{(4+1)}$ converges to $Q_-$, we have $\lim_{t\xrightarrow[]{}t_{\mathcal{B}}}\theta=\lim_{t\xrightarrow[]{}t_{\mathcal{B}}}\zeta=k.$ This gives $\phi=k\log r+O(1),$ so $f\sim r^{-2/3}$ as $t\xrightarrow[]{}t_{\mathcal{B}}.$ We also have that $\mu\xrightarrow[]{}\infty$ as $t\xrightarrow[]{}t_{\mathcal{B}}.$  We compute the sectional curvature of the spheres:
    \[K^{(4+1)}_{S^2(u,v)}=\frac{f}{r^2}\big(1-(1-\mu)(k\theta+1)(k\zeta+1)\big)\xrightarrow[]{}\infty\text{ as }t\xrightarrow[]{}t_{\mathcal{B}}.\]
    Thus, the curvature blows up at the spacelike singularity $\mathcal{B}$. Also, we notice that the area radius of the spheres $S^2$ satisfies $r^2/f\rightarrow0$, and the area radius of the spheres $S^1$ satisfies $f\rightarrow\infty$.
\end{proof}

\begin{remark}
    We note that the above proof also shows that in the case of the Einstein-Scalar field solutions of \cite{nakedsingularities} with $k^2=\frac{1}{3}$, which correspond to orbits to the left of $S_1^+$, the boundary $\mathcal{B}$ represents a spacelike curvature singularity with blow-up rate of the scalar field $\phi=k\log r+O(1).$
\end{remark}

The only extensions of Lorentzian Hawking--Page solutions that we are still to consider are the ones corresponding to orbits to the right of $S_1^+$. We study the behavior of such orbits in the trapped region in detail:

\begin{center}
\begin{tikzpicture}
\draw[very thin] (-5,0) -- (5,0);
\draw[very thin] (0,-3) -- (0,0);
\filldraw[color=black, fill=black] (0,0) circle (2pt);
\filldraw[color=black, fill=black] (-1.73,0) circle (2pt) node[anchor=south east] {$P_-$};
\filldraw[color=black, fill=black] (0.57,0) circle (2pt) node[anchor=south west] {$Q_-$};
\draw[thick] (-3.5,-3.1) -- (-1.73,0);
\draw[very thin] (-1.73,0) -- (-0.3,-0.83);
\draw[very thin]   plot[smooth,domain=-2.6:0] ({-1.73+(0.58*\x)+(0.33*(\x)^2)},\x);
\draw[thick]   plot[smooth,domain=-1.5:0] ({-1.73+(0.58*\x)+((-0.26+0.33)*(\x)^2)},\x);
\draw[thick]   plot[smooth,domain=-1.5:0] ({-1.73+(0.58*\x)+((-0.2+0.33)*(\x)^2)},\x);
\draw[thick]   plot[smooth,domain=-1.5:0]({-1.73+(0.58*\x)+((-0.13+0.33)*(\x)^2)},\x);
\draw[thick]   plot[smooth,domain=-1.5:0] ({-1.73+(0.58*\x)+((-0.08+0.33)*(\x)^2)},\x);
\draw (-2.2,-0.8)  node[anchor=south east] {$X<-\frac{1}{3}$};
\draw (-1.1,-3.1)  node[anchor=south east] {$-\frac{1}{3}<X<0$};
\draw (-0.7,-1.2)  node[anchor=south east] {$X>0$};
\end{tikzpicture}
\end{center}

\begin{lemma}\label{expansion at B lemma}
The orbits corresponding to solutions in $\mathcal{T}^{(4+1)}$ that blow up as $\tau\xrightarrow[]{}\tau_{\mathcal{A}}+$ with $\theta\xrightarrow[]{}-\infty,\ \alpha\xrightarrow[]{}-\infty,$ and $\alpha/\theta\xrightarrow[]{}1/\eta_{\mathcal{A}}$ reach $P_-$ as $\tau\rightarrow\infty.$ For each orbit, there exists $X\in\big(-\frac{1}{3},0\big)$ such that near $P_-$ we have the expansions:
\[\theta=-\sqrt{3}+\frac{\alpha}{\sqrt{3}}+\bigg(X+\frac{1}{3}\bigg)\alpha^2+O(|\alpha|^3),\]
\[\frac{1}{1-\mu}=X\alpha^2+O(|\alpha|^3).\]
Moreover, for every $X\in\big(-\frac{1}{3},0\big)$ there exists at most one such orbit.
\end{lemma}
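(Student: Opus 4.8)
The plan is to prove the three assertions—convergence to $P_-$, the existence of the stated expansions with $X\in(-\tfrac13,0)$, and uniqueness for each $X$—in turn, reusing the phase-plane machinery already set up for the trapped system (\ref{system alpha theta trapped}).

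First I would establish convergence to $P_-$. For $\alpha<0$ the second equation of (\ref{system alpha theta trapped}) gives $\frac{d\alpha}{d\tau}=-\alpha\big[(\theta+k)^2+(1-k^2)(1-\alpha)\big]>0$, with the bracket bounded below by $(1-k^2)(1-\alpha)>1-k^2>0$; hence $\alpha$ increases monotonically and its only possible limit is $0$. The orbits in question enter $\mathcal{T}^{(4+1)}$ with $\theta\to-\infty$, so it suffices to show they never cross $\{\theta=0\}$: on that line $\frac{1}{1-\mu}=1+k^2+\frac{k^2}{\alpha-1}>0$ for $\alpha<0$, i.e. $\mu<1$, contradicting $\mu>1$ in the trapped region (this is the computation already used for the $Q_-$ orbits). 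Thus the orbit stays in $\{\theta<0\}$, and by Lemma \ref{side of line orbit} it lies strictly below the invariant line $\alpha=\sqrt3\theta+3$, which bounds $\theta$ from below as $\alpha\to0^-$; the forward orbit is therefore bounded, and its $\omega$-limit set lies in the invariant line $\{\alpha=0\}$. Since the only critical point there with $\theta<0$ is $P_-$ (note $Q_-=(k,0)$ has $\theta>0$, and $(0,0)$ has already been excluded in a preceding lemma), the orbit converges to $P_-$.

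Next I would produce the expansions. The key point is that $P_-=(-\sqrt3,0)=(-\tfrac1k,0)$ is precisely the critical point whose local structure was analyzed at the center in Section \ref{expansion near center section}; since (\ref{system alpha theta trapped}) is the time-reversal of (\ref{system alpha theta}), the resonant-node normal form of Appendix \ref{Appendix Expansion Subsection} applies verbatim to the orbits approaching $P_-$, yielding a one-parameter family with an expansion of $\theta$ in powers of $\alpha$ of the stated shape, and the companion expansion $\frac{1}{1-\mu}=X\alpha^2+O(|\alpha|^3)$ then follows by substituting into $\frac{1}{1-\mu}=1+k^2+\frac{(\theta+k)^2}{\alpha-1}$. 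It remains to pin down the range of $X$. The trapped condition $\mu>1$ means $\frac{1}{1-\mu}<0$ near $P_-$, which with $\alpha\to0^-$ forces $X<0$ (the boundary value $X=0$ being a single exceptional orbit separating the trapped family from the cone-interior family, with $X>0$, of Proposition \ref{expansion near center proposition}). For the lower bound I would reuse Lemma \ref{side of line orbit}: our orbits satisfy $\theta>k\alpha-\sqrt3$, i.e. lie strictly below $\alpha=\sqrt3\theta+3$, and substituting the $\theta$-expansion into $\sqrt3\theta+3-\alpha$ shows the sign of its leading term is that of $X+\tfrac13$, so lying below the line is equivalent to $X>-\tfrac13$ (the line itself being the $X=-\tfrac13$ orbit). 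Hence $X\in(-\tfrac13,0)$.

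Finally, uniqueness for each fixed $X$ is inherited from the local analysis exactly as in Proposition \ref{expansion near center proposition}: the normal-form coordinates parametrize the orbits approaching $P_-$ by an invariant in bijection with $X$, so two such orbits with the same $X$ coincide up to the flow. I expect the main obstacle to be the global step of the first paragraph—tracking an orbit that enters ``from infinity'' ($\theta,\alpha\to-\infty$) all the way to the finite point $P_-$—although this closely parallels the already-established lemma for the $Q_-$ orbits; the genuinely new work is the determination of the range $(-\tfrac13,0)$ of $X$, where the sign of $\frac{1}{1-\mu}$ and the side of the invariant line must be matched consistently against the leading terms of the local expansion.
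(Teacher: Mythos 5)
Most of your proposal tracks the paper: the expansions at $P_-$ come from the normal form of Appendix \ref{Appendix Expansion Subsection} exactly as you say, the bound $X>-\frac{1}{3}$ is obtained from Lemma \ref{side of line orbit} precisely as in the paper, and your convergence argument (monotonicity of $\alpha$, the barrier $\{\theta=0\}$ on which $\mu<1$, boundedness from the invariant line, limit-set reasoning) is an acceptable variant of the paper's squeeze between the invariant line and a fixed interior Lorentzian Hawking--Page orbit, both of which terminate at $P_-$. The genuine gap is the exclusion of the endpoint $X=0$. The trapped condition gives $\frac{1}{1-\mu}<0$, and combined with $\frac{1}{1-\mu}=X\alpha^2+O(|\alpha|^3)$ and $\alpha\to0^-$ this yields only the non-strict inequality $X\leq0$: if $X=0$, the sign of $\frac{1}{1-\mu}$ is decided by the cubic term, which can perfectly well be negative for all small $\alpha<0$, so a trapped orbit with $X=0$ is in no way inconsistent with the local expansion. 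Your parenthetical remark --- that $X=0$ is ``a single exceptional orbit separating the trapped family from the cone-interior family'' --- is a description of the phase portrait, not an argument: it does not explain why that exceptional orbit cannot itself arise as the trapped-region continuation of one of the extensions from Theorem \ref{solutions apparent horizon}, which is exactly what must be ruled out.

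The paper excludes $X=0$ by a genuinely global argument, and this is the substantive content of the lemma. Suppose an extension orbit $\gamma_0$ had $X=0$. Since larger $X$ means larger $\theta$ near $P_-$, all other extension orbits would satisfy $k\alpha-\sqrt{3}<\theta\leq\theta_{\gamma_0}$ in the trapped region; this ordering is transported backwards through the apparent horizon using the $(\eta,\beta)$ chart of (\ref{eta beta system}), where the passage is regular, and re-emerges in the exterior regular region as $\theta_{\gamma_0}\leq\theta<k\alpha-\sqrt{3}$ for $\theta$ large. Hence $\gamma_0$ would be an extremal member of the family of orbits to the right of $S_1^+$ arising as extensions of the given interior solution. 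But that family is open: the bifurcation parameter $a_1^{(e)}$ chosen at $C_0^-$ ranges over an open set, and near the bifurcation point $\theta$ is monotone in $a_1^{(e)}$, so no extension orbit on that side can be extremal --- a contradiction, whence $X\neq0$. Without this step (or some substitute combining openness of the family of extensions with propagation of the orbit ordering across $\mathcal{A}^{(4+1)}$), your argument proves only $X\in\big(-\frac{1}{3},0\big]$; the strict inequality is what later allows Theorem \ref{regularity at BB} to solve $9t_+^2=-\frac{1}{X}$ and identify the solution with an explicit Taub--NUT-type spacetime, so the endpoint cannot be left open.
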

\begin{proof}
    Because of the interior Lorentzian Hawking--Page solutions, there exists an orbit with vertical asymptote $\theta=k$ in the lower half plane that ends at $P_-.$ As a result, the orbits which blow up with $\theta\xrightarrow[]{}-\infty,$ $\alpha\xrightarrow[]{}-\infty$ lie to the left of our fixed Lorentzian Hawking--Page orbit, so they must reach $P_-$ as $\tau\rightarrow\infty.$
    
    By the arguments in Appendix \ref{Appendix Expansion Subsection} and Section \ref{expansion near center section}, we obtain that any orbits of (\ref{system alpha theta trapped}) that end at $P_-$ satisfy the above expansions for some unique $X$. Using Lemma \ref{side of line orbit}, we get that the orbits are to the right of the line $\alpha=\frac{\theta}{k}+3$, so $X>-\frac{1}{3}$. Since the orbits correspond to solutions in the region $\{\mu>1\}$, we also get that $X\leq0.$ 
    
    Suppose now that there exists a solution with $X=0,$ and denote its orbit by $\gamma_0.$ For $\tau>\tau_{\mathcal{A}},$ we have that the above orbits satisfy $k\alpha-3k<\theta\leq \theta_{\gamma_0}$. We consider the system (\ref{eta beta system}) in the $(\eta,\beta)$ plane, and get that near $t=t_{\mathcal{A}}$ the above orbits satisfy $k\alpha-3k\beta<\eta\leq \eta_{\gamma_0}.$ Back to the system (\ref{system alpha theta}) in the $(\theta,\alpha)$ plane, we get that for $\tau<\tau_{\mathcal{A}}$ and $\theta$ large enough the above orbits satisfy $\theta_{\gamma_0}\leq\theta<k\alpha-3k.$ However, for the system (\ref{system in t variable}) in the $(\theta,\beta)$ plane near $\big(\frac{1}{k},0\big)$, the set of orbits corresponding to exterior solutions is open, because we could select any parameter $a_1^{(e)}$ when extending the cone interior solution. As a result, the set of orbits to the right of $S_1^+$ of (\ref{system alpha theta}) in the $(\theta,\alpha)$ plane corresponding to exterior solutions is open, contradicting the properties of $\gamma_0.$ We conclude that $X\neq0.$
\end{proof}

We now describe the extension of solutions corresponding to orbits to the right of $S_1^+$ in Theorem \ref{solutions apparent horizon}:

\begin{theorem}\label{regular spacelike boundary}
    Consider a Lorentzian Hawking--Page solution which in the exterior regular region $\mathcal{E}^{(4+1)}\cap\mathcal{R}^{(4+1)}$ corresponds to an orbit to the right of $S_1^+$, the upper branch of the stable manifold of $P_1$ in the $(\theta,\alpha)$ plane. This can be extended uniquely beyond the apparent horizon $\mathcal{A}^{(4+1)}$ into the trapped region $\mathcal{T}^{(4+1)}$ and it has a boundary $\mathcal{B}$ with $r^2/f\rightarrow const.\neq 0$ and $f\rightarrow0.$ We represent the Penrose diagram of the quotient of the spacetime by the $SO(3)\times U(1)$ action:
\end{theorem}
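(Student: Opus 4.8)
The plan is to extract the behavior of the extension from the phase portrait of (\ref{system alpha theta trapped}) in the trapped region, in exact parallel with the proof of Theorem \ref{spacelike singularity}; the only, but decisive, difference is that the relevant orbit now terminates at $P_-=(-\sqrt{3},0)$ instead of $Q_-$. First I would invoke Lemma \ref{expansion at B lemma}: an exterior solution corresponding to an orbit to the right of $S_1^+$ enters $\mathcal{T}^{(4+1)}$ across $\mathcal{A}^{(4+1)}$ with $\theta\to-\infty,\ \alpha\to-\infty$, and as a solution of (\ref{system alpha theta trapped}) it reaches $P_-$ as $\tau\to\infty$, with a unique admissible $X\in\big(-\frac{1}{3},0\big)$. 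Uniqueness of the extension follows by combining two facts already established: the continuation across $\mathcal{A}^{(4+1)}$ was obtained by solving the system (\ref{eta beta system}) near the non-critical point $(\eta_{\mathcal{A}},0)$, hence is unique, and the onward orbit to $P_-$ is the unique solution of the autonomous ODE through this data.

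Next I would locate the boundary. By Lemma \ref{orbits spacelike singularity}, the solution reaches $\alpha=0$ at a finite $t_{\mathcal{B}}$ and $r=0$ on $\mathcal{B}=\{t=t_{\mathcal{B}}\}$. This is consistent with the orbit reaching $P_-$ only as $\tau\to\infty$, because in the trapped region $\frac{d\tau}{dt}=-\frac{1}{\alpha}\to+\infty$ as $\alpha\to0-$, so a finite $t$-interval corresponds to an infinite $\tau$-interval.

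The heart of the argument is then a short computation at $P_-$. Since $\theta\to-\sqrt{3}$ by Lemma \ref{expansion at B lemma}, formula (\ref{zeta}) gives $\zeta=-\frac{\theta+k\alpha}{\alpha-1}\to-\sqrt{3}$ as well. As $\theta=r\partial_r\phi$ and the incoming renormalized derivative $\zeta$ share the same limit $-\sqrt{3}$, integrating toward $\mathcal{B}$ produces the blow-up rate $\phi=-\sqrt{3}\log r+O(1)$. Because we work in the case $k=+\frac{1}{\sqrt{3}}$, i.e. $f=\exp\big(-\frac{2}{\sqrt{3}}\phi\big)$, this yields $f\sim r^2$, so $f\to0$ while the area radius $r^2/f$ of the $S^2$ orbits tends to a nonzero constant. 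This is exactly the geometric dichotomy separating case (iii) from the spacelike singularity of Theorem \ref{spacelike singularity}: there the orbit ends at $Q_-$, forcing $\theta,\zeta\to k$, $f\sim r^{-2/3}\to\infty$ and $r^2/f\to0$, so the $S^2$ orbits degenerate; here the $S^2$ orbits persist with positive area and instead the $S^1$ fiber collapses, which is what ultimately makes $\mathcal{B}$ a Cauchy horizon rather than a curvature singularity.

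The step requiring the most care is upgrading $\phi=-\sqrt{3}\log r+O(1)$ to the convergence of $\phi+\sqrt{3}\log r$, which is what the claim $r^2/f\to const\neq0$ actually demands (boundedness of the remainder alone is not enough). Concretely, one must show that $\int(\zeta+\sqrt{3})\,\frac{dr}{r}$ converges along incoming cones approaching $\mathcal{B}$. I would control this using the linearization of (\ref{system alpha theta trapped}) at $P_-$, where $\frac{d\alpha}{d\tau}\approx-2\alpha$ gives $\alpha\to0$ geometrically in $\tau$ and hence $\zeta+\sqrt{3}=O(|\alpha|)$ decays geometrically, making the integrand integrable once one relates $r$ to $\tau$ via $Sr=r$. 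I emphasize that the present theorem establishes only the asymptotics $r^2/f\to const\neq0$ and $f\to0$; the identification of $\mathcal{B}$ as a null Cauchy horizon of Taub--NUT type, together with the explicit Wick-rotation form of these solutions and the non-unique analytic extensions with closed timelike curves, is carried out in Theorem \ref{regularity at BB} and Section \ref{regular spacelike boundary section}.
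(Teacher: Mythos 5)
Your proposal is correct and follows essentially the same route as the paper's proof: Lemma \ref{orbits spacelike singularity} locates the boundary $\mathcal{B}=\{t=t_{\mathcal{B}}\}$ with $r=0$, Lemma \ref{expansion at B lemma} sends the trapped-region orbit of (\ref{system alpha theta trapped}) to $P_-$, and the limits $\theta,\zeta\to-\sqrt{3}$ give $\phi=-\sqrt{3}\log r+O(1)$, hence $f\sim r^{2}$, so $f\to0$ while $r^2/f$ tends to a nonzero constant, with the Taub--NUT identification correctly deferred to Theorem \ref{regularity at BB}. Your extra step --- upgrading the $O(1)$ bound to genuine convergence of $\phi+\sqrt{3}\log r$ by using the geometric decay of $\alpha$ in $\tau$ near $P_-$, which forces $\zeta+\sqrt{3}=O(|\alpha|)$ and makes $\int(\zeta+\sqrt{3})\,\frac{dr}{r}$ convergent --- is a detail the paper asserts without elaboration, and your treatment of it is sound.
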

\begin{center}
\begin{tikzpicture}
\draw[dashed] (0,0) -- (4.4,4.4);
\draw (0,0) -- (0,4);
\draw[very thin] (0,4) -- (2,2);
\draw[thick, dotted] (0,4) .. controls (1.5,4) and (2.9,4) .. (4.4,4.4);
\filldraw[color=black, fill=white] (0,4) circle (2pt) node[anchor=east] {$b_{\Gamma}$};
\draw (0,2.5)  node[anchor=east] {$\Gamma$};
\draw (0.8,3.25)  node[anchor=west] {$C_0^-$};
\draw (1.5,1.5)  node[anchor=north west] {$\mathcal{I}^-$};
\draw (2.6,4.15)  node[anchor=south east] {$\mathcal{B}$};
\end{tikzpicture}
\end{center}
\begin{proof}
    Similarly to the proof of Theorem \ref{spacelike singularity}, the region $\mathcal{T}^{(4+1)}$ is given by $\{t_{\mathcal{A}}<t<t_{\mathcal{B}}\}$, and we have that $r=0$ at the boundary $\mathcal{B}=\{t=t_{\mathcal{B}}\}.$ The orbit of $(\ref{system alpha theta trapped})$ corresponding to the solution in the region $\mathcal{T}^{(4+1)}$ converges to $P_-$, so we get $\lim_{t\xrightarrow[]{}t_{\mathcal{B}}}\theta=\lim_{t\xrightarrow[]{}t_{\mathcal{B}}}\zeta=-1/k.$ This gives $\phi=-\frac{1}{k}\log r+O(1),$ so $f\sim r^{2}$ as $t\xrightarrow[]{}t_{\mathcal{B}}.$ The area radius of the spheres $S^1$ satisfies $f\rightarrow0$. Also, we have that $r^2/f$, the area radius of the spheres $S^2$, converges to a nonzero limit.
\end{proof}

\begin{remark}
    Once again, the above proof also shows that in the case of the Einstein-Scalar field solutions of \cite{nakedsingularities} with $k^2=\frac{1}{3}$, which correspond to orbits to the right of $S_1^+$, the boundary $\mathcal{B}$ represents a spacelike curvature singularity with blow-up rate of the scalar field $\phi=-\frac{1}{k}\log r+O(1).$ We also note that for $k^2=\frac{1}{3}$, \cite{nakedsingularities} provides an explicit analytic solution of the Einstein-Scalar field equations, which is homogeneous in space and satisfies $\phi=-\frac{1}{k}\log r+O(1).$ Our computation shows that the orbit of $(\ref{system alpha theta trapped})$ corresponding to this solution in the trapped region converges to $P_-$.
\end{remark}

We can compute that for the above solutions the sectional curvatures $K^{(4+1)}_{S^2(u,v)}$ and $K^{(4+1)}_{\mathcal{Q}(u,v)}$ are finite near $\mathcal{B}$. Unlike Theorem \ref{spacelike singularity}, we do not obtain directly that the solutions have a curvature singularity. We also point out that while the quotient by the $SO(3)\times U(1)$ action represented in the Penrose diagram above has a spacelike boundary, we cannot conclude that this corresponds to a spacelike boundary for the $(4+1)$-dimensional vacuum spacetime. We address these questions in the next section.

\subsubsection{Explicit solutions in the cone exterior region}\label{regular spacelike boundary section}

In this section, we find explicit solutions in the cone exterior region that are analogous to the Lorentzian Hawking--Page solutions. These represent a rigorous instance of the Wick rotation of cone interior solutions considered in \cite{Andersoncone} and \cite{Anderson}. In Proposition \ref{Taub-NUT extension} we extend these solutions non-uniquely beyond a null Cauchy horizon of Taub--NUT type to a region with closed timelike curves. In Theorem \ref{regularity at BB} we prove a similar result to Proposition \ref{interior all solutions}, by showing that every solution of Theorem \ref{regular spacelike boundary} is given by one of the explicit solutions in the cone exterior region. In particular, this shows that for the extensions of Lorentzian Hawking–Page solutions via Wick rotation we obtain a null Cauchy horizon of Taub--NUT type.

We briefly prove a series of results similar to the ones in Section \ref{first big section}:

\begin{proposition}
    Let $(M,g^{(3+1)})$ be a $(3+1)$-dimensional Lorentzian manifold. Then the $(4+1)$-dimensional Lorentzian manifold $\big((0,\infty)\times M, g^{(4+1)}\big),$ where:
    \[g^{(4+1)}=dz^2+z^2g^{(3+1)},\]
    is a vacuum solution if and only if $Ric(g^{(3+1)})=3g^{(3+1)}.$ The metric $g^{(4+1)}$ is scale-invariant with scaling vector field $S=z\partial_{z}$. The Kretschmann scalar is given by:
    \[{}^{(4+1)}K={}^{(4+1)}R_{\alpha\beta\gamma\delta}{}^{(4+1)}R^{\alpha\beta\gamma\delta}=z^{-4}\cdot{}^{(3+1)}R_{\alpha\beta\gamma\delta}{}^{(3+1)}R^{\alpha\beta\gamma\delta}=z^{-4}\cdot{}^{(3+1)}K.\]
\end{proposition}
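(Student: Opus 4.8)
The plan is to follow the proof of the earlier Lorentzian cone proposition almost verbatim, the one structural change being that the cone direction $z$ is now spacelike, so ${}^{(4+1)}g^{zz}=+1$ instead of $-1$. Writing coordinates $(z,x^i)$ with $i=0,\dots,3$ on $M$ and ${}^{(4+1)}g_{ij}=z^2\,{}^{(3+1)}g_{ij}$, I would first record the nonzero Christoffel symbols
\[{}^{(4+1)}\Gamma^k_{ij}={}^{(3+1)}\Gamma^k_{ij},\qquad {}^{(4+1)}\Gamma^z_{ij}=-z\,{}^{(3+1)}g_{ij},\qquad {}^{(4+1)}\Gamma^i_{zj}=\tfrac{1}{z}\delta^i_j.\]
The sign of the middle symbol is the only place where the spacelike signature enters (it is $+\tau$ in the timelike cone and $-z$ here); all symbols carrying two or three $z$-indices vanish.

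Next I would compute the Ricci tensor directly from these symbols. The radial cancellations (the $\partial_z$ of the ${}^{(4+1)}\Gamma^z_{zi}$-type terms against the quadratic ${}^{(4+1)}\Gamma^i_{zj}\,{}^{(4+1)}\Gamma^j_{zi}$ terms) give ${}^{(4+1)}R_{zz}={}^{(4+1)}R_{zi}=0$, exactly as in the timelike case, while the purely tangential block yields
\[{}^{(4+1)}R_{ij}={}^{(3+1)}R_{ij}-3\,{}^{(3+1)}g_{ij},\]
the constant $3=4-1$ coming from the base dimension $4$. Hence ${}^{(4+1)}g$ is Ricci-flat, i.e. vacuum, if and only if $\mathrm{Ric}(g^{(3+1)})=3\,g^{(3+1)}$. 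Scale invariance is then immediate: for $S=z\partial_z$ one has $Sz=z$ and $g^{(3+1)}$ is $z$-independent, so $\mathcal{L}_S(dz^2)=2\,dz^2$ and $\mathcal{L}_S(z^2 g^{(3+1)})=2z^2 g^{(3+1)}$, giving $\mathcal{L}_S g^{(4+1)}=2\,g^{(4+1)}$.

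For the Kretschmann scalar I would compute the full Riemann tensor from the same Christoffel symbols. The two structural facts I expect, both provable by direct cancellation, are that every component carrying a $z$-index vanishes (the cone is flat in the radial direction, since $z$ is affine and the base factor is $z$-independent), and that the purely tangential components take the form
\[{}^{(4+1)}R_{ijkl}=z^2\Big({}^{(3+1)}R_{ijkl}-\big({}^{(3+1)}g_{ik}\,{}^{(3+1)}g_{jl}-{}^{(3+1)}g_{il}\,{}^{(3+1)}g_{jk}\big)\Big).\]
Squaring two such factors and raising all four indices with ${}^{(4+1)}g^{ij}=z^{-2}\,{}^{(3+1)}g^{ij}$ produces the overall factor $(z^2)^2(z^{-2})^4=z^{-4}$, which is the origin of the claimed homogeneity and drives the curvature-blow-up statements as $z\to 0^+$. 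The main obstacle is the remaining bookkeeping of the constant-curvature correction: expanding $\|{}^{(4+1)}R\|^2$ produces the base Riemann norm $\|{}^{(3+1)}R\|^2$, a cross term proportional to the base scalar curvature, and the dimensional norm of the constant-curvature tensor, and one then substitutes the Einstein values coming from $\mathrm{Ric}(g^{(3+1)})=3g^{(3+1)}$ in dimension $4$. Equivalently, since the cone is Ricci-flat one may identify ${}^{(4+1)}R$ with the cone Weyl tensor and reduce the problem to comparing Weyl norms. Confirming that these corrections reassemble into exactly $z^{-4}\,{}^{(3+1)}K$, rather than into $\,{}^{(3+1)}K$ shifted by a dimensional constant, is the delicate crux of the argument and the point at which the constant-tracking must be carried out most carefully.
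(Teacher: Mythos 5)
Your Christoffel, Riemann, and Ricci computations are exactly the paper's proof: the paper records precisely the three displays you propose, with the same sign ${}^{(4+1)}\Gamma^z_{ij}=-z\,g_{ij}$, the same tangential Riemann formula, and the same vanishing of all components carrying a $z$-index; the vacuum equivalence and the scale-invariance check are then identical. Up to that point your proposal and the paper coincide.

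The problem is the step you flagged as the ``delicate crux'' and left open: when you actually carry out that bookkeeping, the corrections do \emph{not} reassemble into $z^{-4}\,{}^{(3+1)}K$. Write $T_{ijkl}={}^{(3+1)}g_{ik}{}^{(3+1)}g_{jl}-{}^{(3+1)}g_{il}{}^{(3+1)}g_{jk}$, so that ${}^{(4+1)}R_{ijkl}=z^2\big({}^{(3+1)}R_{ijkl}-T_{ijkl}\big)$. Raising the four indices with $z^{-2}\,{}^{(3+1)}g^{ij}$ and using that all $z$-components vanish,
\[{}^{(4+1)}K=z^{-4}\Big({}^{(3+1)}K-2\,{}^{(3+1)}R_{ijkl}T^{ijkl}+T_{ijkl}T^{ijkl}\Big)=z^{-4}\Big({}^{(3+1)}K-4\,{}^{(3+1)}R_{\mathrm{scal}}+2\cdot4\cdot3\Big)=z^{-4}\Big({}^{(3+1)}K-24\Big),\]
since the Einstein condition forces ${}^{(3+1)}R_{\mathrm{scal}}=12$ (the contraction conventions here are pinned down by consistency with the paper's own reduction ${}^{(4+1)}R_{ij}={}^{(3+1)}R_{ij}-3g_{ij}$). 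The constant $-24$ is genuine, not a convention artifact: take $g^{(3+1)}$ to be the unit de Sitter metric, which satisfies $Ric=3g$; then $dz^2+z^2g^{(3+1)}$ is a region of $(4+1)$-dimensional Minkowski space, so ${}^{(4+1)}K=0$, whereas $z^{-4}\,{}^{(3+1)}K=24\,z^{-4}\neq0$. So the displayed identity cannot be proved as literally stated --- and note that the paper's own proof never performs this contraction either; it stops at listing the curvature components. The correct statement is the one your Weyl remark points to: for an Einstein $4$-manifold with $Ric=3g$ one has $R_{\mathrm{scal}}/\big(n(n-1)\big)=12/12=1$, hence ${}^{(3+1)}R_{ijkl}-T_{ijkl}={}^{(3+1)}W_{ijkl}$, so ${}^{(4+1)}R_{ijkl}=z^2\,{}^{(3+1)}W_{ijkl}$ and ${}^{(4+1)}K=z^{-4}\,{}^{(3+1)}W_{ijkl}\,{}^{(3+1)}W^{ijkl}=z^{-4}\big({}^{(3+1)}K-24\big)$. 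For everything this proposition is later used for (finiteness of the curvature away from $z=0$ and the $z^{-4}$ scaling), the additive constant is immaterial; but your proof as proposed cannot terminate in the formula as stated, and your proposal correctly isolates --- without resolving --- exactly the point where it breaks.
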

\begin{proof} We compute the following:
    \[{}^{(4+1)}\Gamma_{ij}^k={}^{(3+1)}\Gamma_{ij}^k,\ {}^{(4+1)}\Gamma_{ij}^z=-zg_{ij},\ {}^{(4+1)}\Gamma_{zi}^i=\frac{1}{z},\]
    \[{}^{(4+1)}R_{ijkl}=z^2\cdot\bigg[{}^{(3+1)}R_{ijkl}-{}^{(3+1)}g_{ik}{}^{(3+1)}g_{jl}+{}^{(3+1)}g_{il}{}^{(3+1)}g_{jk}\bigg],\ {}^{(4+1)}R_{zijk}={}^{(4+1)}R_{zizj}=0,\]
    \[{}^{(4+1)}R_{ij}={}^{(3+1)}R_{ij}-3g_{ij},\ {}^{(4+1)}R_{zi}={}^{(4+1)}R_{zz}=0.\]
\end{proof}

We introduce an analogue of the function $V$ in Section \ref{Hawking--Page solutions section}. For any $M>-\frac{1}{3\sqrt{3}}$, we consider the function:
\[W(t)=t^2-1-\frac{2M}{t}.\]
Let $t_+$ be the largest root of $W$. Since $M>-\frac{1}{3\sqrt{3}},$ we have that $W'(t_+)>0,$ so $t_+$ is a simple root. This implies $t_+>-3M$, and we always have $t_+>0$. For any parameter $a>0,$ we define the following metrics on $(t_+,\infty)\times S^2\times S^1$:
\[g^{(3+1)}_{M,a}=-W^{-1}dt^2+t^2d\sigma_2^2+a^2Wd\sigma_1^2.\]
\begin{proposition}
    The Lorentzian metrics $g^{(3+1)}_{M,a}$ have an $SO(3)\times U(1)$ isometry with a free action. They satisfy the Einstein equations $Ric(g_{M,a}^{(3+1)})=3g_{M,a}^{(3+1)}.$ Moreover, the Kretschmann scalar is finite.
\end{proposition}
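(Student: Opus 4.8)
The plan is to mirror the proof of the Riemannian statement in Section \ref{Hawking--Page solutions section}, exploiting that $g^{(3+1)}_{M,a}$ is the Wick rotation of the Hawking--Page metric $g^{(4)}_m$. First I would dispose of the isometry claim: the coefficients $-W^{-1},\ t^2,\ a^2W$ depend only on $t$, so the standard action of $SO(3)$ on the $S^2$ factor and of $U(1)$ on the $S^1$ factor preserves $g^{(3+1)}_{M,a}$. On $\{t>t_+\}$ we have $t^2>0$ and $a^2W>0$, so both sphere radii are strictly positive and the $SO(3)\times U(1)$ action has no fixed points, which is the sense in which it is free, as in Assumption \ref{assumption}.

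For the Einstein equation I would avoid recomputing the curvature from scratch and instead continue the identity $Ric(g^{(4)}_m)=-3g^{(4)}_m$ analytically in the coordinate $\rho$ and the parameter $m$. The key algebraic observation is that substituting $\rho=it,\ m=-iM$ into $V(\rho)=\rho^2+1-\frac{2m}{\rho}$ gives
\[
V(it)=-t^2+1+\frac{2M}{t}=-\Big(t^2-1-\frac{2M}{t}\Big)=-W(t),
\]
while $d\rho^2=-dt^2$ and $\rho^2=-t^2$. Under this substitution the Hawking--Page metric transforms as
\[
V^{-1}d\rho^2+\rho^2d\sigma_2^2+\Big(\tfrac{b}{2\pi}\Big)^2Vd\sigma_1^2\ \longmapsto\ W^{-1}dt^2-t^2d\sigma_2^2-\Big(\tfrac{b}{2\pi}\Big)^2Wd\sigma_1^2=-g^{(3+1)}_{M,a},
\]
with $a=\tfrac{b}{2\pi}$ (here $a$ is free precisely because we no longer impose regularity at a center). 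Since the components of $Ric+3g$ for this diagonal cohomogeneity-one ansatz are explicit rational functions of $(\rho,m)$ that vanish identically for real $\rho>\rho_+$, their numerators are identically zero and hence they vanish at $\rho=it,\ m=-iM$ as well, i.e. $Ric(-g^{(3+1)}_{M,a})=-3(-g^{(3+1)}_{M,a})=3g^{(3+1)}_{M,a}$. Using that the Ricci tensor is invariant under the constant rescaling $g\mapsto-g$, so that $Ric(-g^{(3+1)}_{M,a})=Ric(g^{(3+1)}_{M,a})$, I would read off $Ric(g^{(3+1)}_{M,a})=3g^{(3+1)}_{M,a}$. Alternatively, one computes the Ricci tensor directly in the orthonormal coframe $e^0=W^{-1/2}dt,\ e^1=t\,d\vartheta,\ e^2=t\sin\vartheta\,d\varphi,\ e^3=aW^{1/2}d\psi$ via Cartan's equations, which is the same bookkeeping carried out once with $W$ in place of $V$.

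Finally, for the Kretschmann scalar I would use the same orthonormal coframe, in which the $W^{\pm1/2}$ factors cancel and the Riemann components are smooth in $t$, so that $K={}^{(3+1)}R_{\alpha\beta\gamma\delta}{}^{(3+1)}R^{\alpha\beta\gamma\delta}$ is a rational function of $t$ whose only possible pole is at $t=0$. Since $t_+>0$, the scalar $K$ is finite on all of $(t_+,\infty)$ and, crucially, extends continuously to the degeneration locus $t=t_+$, where merely $W\to0$; it also converges to a finite constant as $t\to\infty$. Equivalently, $K(t)$ is the analytic continuation of the finite Kretschmann scalar of the regular metric $g^{(4)}_m$. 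The main obstacle I anticipate is making the analytic-continuation step fully rigorous: one must check that the relevant components really are single-valued analytic (indeed rational) functions of $(\rho,m)$ whose vanishing on the real Riemannian locus transfers to the real Lorentzian locus, with the metric non-degenerate at the continued values. If this bookkeeping proves delicate, the safe route is the direct Cartan computation in the coframe above, which sidesteps continuation entirely at the cost of one tedious but routine curvature calculation.
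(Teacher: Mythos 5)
Your proposal is correct in substance, but it proves the Einstein equation by a genuinely different route than the paper. The paper writes down the general $SO(3)\times U(1)$-symmetric ansatz $g^{(3+1)}=-H^2(t)dt^2+t^2d\sigma_2^2+F^2(t)d\sigma_1^2$, reduces $Ric=3g$ to the two ODEs $\frac{F'}{F}+\frac{H'}{H}=0$ and $\frac{2}{t}\frac{H'}{H}=\frac{H^2+1}{t^2}-3H^2$, and verifies that $W=H^{-2}$, $F^2=a^2W$ solves them; this has the side benefit that the second equation integrates to $(tW)'=3t^2-1$, so the family $W=t^2-1-\frac{2M}{t}$ is exhaustive within the ansatz, a Birkhoff-type statement in the spirit of the paper's classification results. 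You instead continue the Riemannian identity $Ric(g^{(4)}_m)=-3g^{(4)}_m$ analytically via $\rho=it$, $m=-iM$, together with $Ric(-g)=Ric(g)$; this is legitimate (the components of $Ric+3g$ are rational in $(\rho,m)$ and vanish on an open real set, hence identically), and it is the rigorous version of the Wick rotation that the paper records only as a formal Remark after this proposition. One wrinkle you should fix: as written you take $a=\frac{b}{2\pi}$, but $b=\frac{4\pi\rho_+}{3\rho_+^2+1}$ is a function of $m$, so under $m\to-iM$ it becomes complex. The repair is what your parenthetical already gestures at, and what the paper's Remark does by sending $b\to 2\pi a$ as an \emph{independent} parameter: first observe that the Riemannian ansatz with an arbitrary constant $c^2$ in place of $\big(\frac{b}{2\pi}\big)^2$ still satisfies $Ric=-3g$ (the constant only matters for regularity at the center), then continue in $(\rho,m)$ with $c=a$ fixed and real. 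For the Kretschmann scalar your argument is the same as the paper's (orthonormal frame), except that you assert rather than verify the cancellation of the $W^{\pm1/2}$ factors; the paper's explicit components $R_{1212}=-\frac{1}{2}W''$, $R_{1313}=R_{1414}=R_{2323}=R_{2424}=\frac{W'}{2t}$, $R_{3434}=\frac{1+W}{t^2}$ confirm it and give boundedness up to $t=t_+$, which is the actual content of the finiteness claim, since finiteness on the open region is automatic. Your declared fallback, the direct Cartan computation, is precisely the paper's proof, so the proposal is sound either way.
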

\begin{proof}
Let $g^{(3+1)}$ be a Lorentzian metric with an $SO(3)\times U(1)$ isometry:
\[g^{(3+1)}=-H^2(t)dt^2+t^2d\sigma_2^2+F^2(t)d\sigma_1^2.\]
The Einstein equations $Ric(g^{(3+1)})=3g^{(3+1)}$ are equivalent to:
\[\begin{dcases}
    \frac{F'}{F}+\frac{H'}{H}=0, \\
    \frac{2}{t}\frac{H'}{H}=\frac{H^2+1}{t^2}-3H^2.
\end{dcases}\]
We notice that for $W=H^{-2},$ the second equation is satisfied. By the first equation we get $F^2=a^2W,$ for some constant $a>0.$ We now compute the Kretschmann scalar. The Christoffel symbols are given by:
\[\Gamma_{tt}^t=-\frac{W'}{2W},\ \Gamma_{\psi\psi}^t=-\frac{a^2}{2}WW',\ \Gamma_{\theta\theta}^t=tW,\ \Gamma_{\varphi\varphi}^t=tW\sin^2\theta,\]
\[\Gamma_{t\psi}^{\psi}=\frac{W'}{2W},\ \Gamma_{t\theta}^{\theta}=\Gamma_{t\varphi}^{\varphi}=-\frac{1}{t},\ \Gamma_{\varphi\varphi}^{\theta}=-\sin\theta\cos\theta,\ \Gamma_{\varphi\theta}^{\varphi}=\frac{\cos\theta}{\sin\theta}.\]
We consider the orthonormal frame $e_1=\sqrt{W}\partial_t,\ e_2=(a\sqrt{W})^{-1}\partial_{\psi},\ e_3=t^{-1}\partial_{\theta},\ e_4=(t\sin\theta)^{-1}\partial_{\varphi}.$ With respect to this frame, the nonzero curvature components are:
\[R_{1212}=-\frac{1}{2}W'',\ R_{1313}=R_{1414}=R_{2323}=R_{2424}=\frac{W'}{2t},\ R_{3434}=\frac{1+W}{t^2}.\]
In particular, we obtain that the Kretschmann scalar is finite.
\end{proof}

Based on the previous results, we can construct the following spacetimes, which are the analogues of the Lorentzian Hawking--Page solutions:

\begin{definition}
    For every $M>-\frac{1}{3\sqrt{3}},\ a>0$ we define the Lorentzian metrics on $\mathcal{M}^{(4+1)}_M=(t_+,\infty)\times(0,\infty)\times S^2\times S^1$:
    \[g^{(4+1)}_{M,a}=-z^2W^{-1}dt^2+dz^2+z^2t^2d\sigma_2^2+a^2z^2Wd\sigma_1^2.\]
    These are scale-invariant vacuum spacetimes with an $SO(3)\times U(1)$ isometry and scaling vector field $S=z\partial_{z}$. The time orientation is given by $-\partial_{t},$ and the spacetimes are future geodesically incomplete. Moreover, the Kretschmann scalar is finite away from $z=0.$
\end{definition}

\begin{remark}
    The metrics $g^{(4+1)}_{M,a}$ defined above can be obtained by a Wick rotation of the Lorentzian Hawking--Page solutions. Formally, we take $\tau\rightarrow iz,\ \rho\rightarrow it,\ m\rightarrow -iM,\ b\rightarrow 2\pi a$, and we have $g_{m}^{(4+1)}\rightarrow g^{(4+1)}_{M,a}.$ Since the curvature of the Lorentzian Hawking--Page solutions is finite away from $\tau=0$, this transformation gives another proof that the above metrics have finite curvature away from $z=0.$
\end{remark}

In the following, we prove that the Lorentzian manifolds $\big(\mathcal{M}^{(4+1)}_M,\ g^{(4+1)}_{M,a}\big)$ can be extended beyond $\{t=t_+\},$ which represents a null Cauchy horizon of Taub--NUT type. The toy problem for this extension is the $(1+1)$-dimensional Lorentzian metric $-t^{-1}dt^2+td\psi^2$ on $(0,\infty)\times S^1$ considered by Misner in \cite{Misner}. Our discussion follows the argument in \cite{HawkingEllis}, which presents both Misner's example and the Taub--NUT space.

We introduce the tortoise coordinate:
\[t^*=\frac{1}{a}\int_{t_+}^t\frac{dt}{W},\]
so that $dt^*=(aW)^{-1}dt.$ We now define $\upsilon=\psi-t^*$ and remark that $(t,z,\theta,\varphi,\upsilon)$ are coordinates on $(t_+,\infty)\times(0,\infty)\times S^2\times S^1.$ The metric is given by:
\[g^{(4+1)}_{M,a}=2az^2dtd\upsilon+dz^2+z^2t^2d\sigma_2^2+a^2z^2Wd\upsilon^2.\]
This can be extended as a vacuum Lorentzian metric to ${}^{(1)}\mathcal{M}^{(4+1)}_M=(t_+-\epsilon,\infty)\times(0,\infty)\times S^2\times S^1.$ We notice that for any $t<t_+,$ there are closed timelike curves $\gamma(\upsilon)=(t_0,z_0,\theta_0,\varphi_0,\upsilon),$ so the extension is not globally hyperbolic. As a result, $\mathcal{CH}=\{t_+\}\times(0,\infty)\times S^2\times S^1$ represents a null Cauchy horizon of $\big(\mathcal{M}^{(4+1)}_M,\ g^{(4+1)}_{M,a}\big)$. Since the extension of the Taub metric into the NUT region has a similar behavior, we refer to $\mathcal{CH}$ as a null Cauchy horizon of Taub--NUT type. We represent the causality of the extended spacetime in the following diagram:

\begin{center}
\begin{tikzpicture}
    \draw (0,0) -- (0,4);
    \draw (3,0) -- (3,4);
    \draw[very thin] (1.5,0) -- (1.5,4);
    \draw (1.5,4)  node[anchor=south] {$\upsilon=\upsilon_0$};
    \draw (0,2)  node[anchor=east] {$t=t_+$};
    \draw (4.4,1.55)  node {$\mathcal{CH}$};
    \draw [-latex](4,1.6) -- (2.7,1.8);
    \draw (0,2) .. controls (0.8,1.5) and (2.2,1.5) .. (3,2);
    \draw[dashed] (0,2) .. controls (0.8,2.3) and (2.2,2.3) .. (3,2);
    \draw[very thick, shift={(1.65,1.78)}, rotate=45] (0,0) ellipse (0.08 and 0.18);
    \draw[very thick] (1.5,1.65) -- (1.5,1.88);
    \draw[very thick] (1.48,1.65) -- (1.7,1.64);
    \draw (0,3)  node[anchor=east] {$t>t_+$};
    \draw (4.4,3.6)  node {$\mathcal{M}^{(4+1)}_M$};
    \draw [-latex](3.8,3.6) -- (2.7,3.6);
    \draw (0,3) .. controls (0.8,2.5) and (2.2,2.5) .. (3,3);
    \draw[dashed] (0,3) .. controls (0.8,3.3) and (2.2,3.3) .. (3,3);
    \draw[very thick, shift={(1.6,2.82)}, rotate=60] (0,0) ellipse (0.08 and 0.11);
    \draw[very thick] (1.5,2.65) -- (1.5,2.88);
    \draw[very thick] (1.5,2.64) -- (1.65,2.73);
    \draw (0,1)  node[anchor=east] {$t<t_+$};
    \draw (5.2,0.3)  node {${}^{(1)}\mathcal{M}^{(4+1)}_M\backslash\mathcal{M}^{(4+1)}_M$};
    \draw [-latex](3.7,0.3) -- (2.7,0.3);
    \draw (0,1) .. controls (0.8,0.5) and (2.2,0.5) .. (3,1);
    \draw[dashed] (0,1) .. controls (0.8,1.3) and (2.2,1.3) .. (3,1);
    \draw[very thick, shift={(1.65,0.78)}, rotate=30] (0,0) ellipse (0.08 and 0.24);
    \draw[very thick] (1.5,0.65) -- (1.5,0.92);
    \draw[very thick] (1.48,0.65) -- (1.7,0.58);
    \draw [-latex](0.6,-0.5) .. controls (0.8,-0.6) and (1,-0.65) .. (1.2,-0.68);
    \draw [-latex](0.6,-0.5) -- (0.6,0.1);
    \draw (1.1,-0.65) node[anchor=north] {$\upsilon$};
    \draw (0.6,0) node[anchor=east] {$t$};
\end{tikzpicture}
\end{center}

We notice that we can define $\underline{\upsilon}=\psi+t^*$, so $(t,z,\theta,\varphi,\underline{\upsilon})$ are also coordinates on $(t_+,\infty)\times(0,\infty)\times S^2\times S^1.$ In these coordinates, the metric is given by:
\[g^{(4+1)}_{M,a}=-2az^2dtd\underline{\upsilon}+dz^2+z^2t^2d\sigma_2^2+a^2z^2Wd\underline{\upsilon}^2.\]
Once again, we can extend as a vacuum Lorentzian metric to ${}^{(2)}\mathcal{M}^{(4+1)}_M=(t_+-\epsilon,\infty)\times(0,\infty)\times S^2\times S^1,$ and $\mathcal{CH}=\{t_+\}\times(0,\infty)\times S^2\times S^1$ represents a null Cauchy horizon of Taub--NUT type of $\big(\mathcal{M}^{(4+1)}_M,\ g^{(4+1)}_{M,a}\big)$.

The two extensions provided above are inequivalent. To prove this, we remark that the null curve $\{\upsilon(t,z_0,\theta_0,\varphi_0,\psi)=\upsilon_0\}\subset\mathcal{M}^{(4+1)}_M$ can be extended in ${}^{(1)}\mathcal{M}^{(4+1)}_M$ for $t<t_+$. However, in $(t,z,\theta,\varphi,\underline{\upsilon})$ coordinates on ${}^{(2)}\mathcal{M}^{(4+1)}_M,$ this curve is given by $\underline{\upsilon}=2t^*(t)+\upsilon_0$ and cannot be extended for $t<t_+$. We also point out that using the argument of \cite{HawkingEllis} we get that we cannot extend $\mathcal{M}^{(4+1)}_M$ to both ${}^{(1)}\mathcal{M}^{(4+1)}_M$ and ${}^{(2)}\mathcal{M}^{(4+1)}_M$ simultaneously, as the resulting manifold would not be Hausdorff. 

We summarize the above discussion into the following result:
\begin{proposition}\label{Taub-NUT extension}
    For any two parameters $M>-\frac{1}{3\sqrt{3}},\ a>0,$ the vacuum spacetime $\big(\mathcal{M}^{(4+1)}_M,\ g^{(4+1)}_{M,a}\big)$ has two inequivalent analytic extensions with closed timelike curves. The original spacetime has a null Cauchy horizon of Taub--NUT type $\mathcal{CH}$ which is reached along future directed causal curves in the limit $t\rightarrow t_++$.
\end{proposition}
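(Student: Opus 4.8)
The plan is to remove the apparent singularity of $g^{(4+1)}_{M,a}$ at $\{t=t_+\}$ by passing to null coordinates twisted along the $S^1$ fibres, exactly as in the treatment of Misner's cylinder and Taub--NUT in \cite{HawkingEllis}, \cite{Misner}, and then to read off all the asserted structure from the resulting analytic metrics. Since $M>-\frac{1}{3\sqrt{3}}$ forces $W'(t_+)>0$, the root $t_+$ is simple, so the tortoise variable defined by $dt^*=(aW)^{-1}dt$ diverges only logarithmically, with $t^*\rightarrow-\infty$ as $t\rightarrow t_++$. Substituting $d\psi=d\upsilon+(aW)^{-1}dt$ into $g^{(4+1)}_{M,a}$, I would verify that the $a^2z^2W\,d\psi^2$ term generates a $z^2W^{-1}dt^2$ contribution which cancels the singular $-z^2W^{-1}dt^2$ term, leaving the metric in the manifestly analytic form written above in the chart $(t,z,\theta,\varphi,\upsilon)$, with the nondegenerate cross term $g_{t\upsilon}=az^2\neq0$. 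As all components are real-analytic and the determinant is nonzero for $z>0$, this metric extends $g^{(4+1)}_{M,a}$ analytically to ${}^{(1)}\mathcal{M}^{(4+1)}_M=(t_+-\epsilon,\infty)\times(0,\infty)\times S^2\times S^1$; since its Ricci tensor is a real-analytic expression vanishing on the open set $\{t>t_+\}$, analytic continuation shows the extension is again Ricci flat, hence vacuum.

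I would then extract the causal features. For fixed $t_0<t_+$ one has $g_{\upsilon\upsilon}=a^2z^2W(t_0)<0$, so the closed orbits $\gamma(\upsilon)=(t_0,z_0,\theta_0,\varphi_0,\upsilon)$ of $\partial_\upsilon=\partial_\psi$ are timelike, producing the closed timelike curves and showing $\{t=t_+\}$ is a Cauchy horizon. Inverting the $(t,\upsilon)$ block gives $g^{tt}=-W/z^2$, which vanishes at $t_+$, so $dt$ is null there and $\mathcal{CH}=\{t_+\}\times(0,\infty)\times S^2\times S^1$ is a null hypersurface along which the $S^1$ direction degenerates. That $\mathcal{CH}$ lies to the future is seen in the original coordinates: for $t>t_+$ we have $g_{tt}=-z^2/W<0$, so with the time orientation $-\partial_t$ the integral curves of decreasing $t$ are future-directed timelike, and their proper length $z_0\int W^{-1/2}\,dt$ converges because $W$ has a simple zero, so $\mathcal{CH}$ is reached at finite proper time as $t\rightarrow t_++$. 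Repeating the construction with $\underline{\upsilon}=\psi+t^*$ yields a second analytic vacuum extension ${}^{(2)}\mathcal{M}^{(4+1)}_M$ with the same properties.

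The hard part is to show the two extensions are genuinely inequivalent and cannot be merged. I would track the null curve $\{\psi-t^*(t)=\upsilon_0\}$ of $\mathcal{M}^{(4+1)}_M$: in the first chart it is simply $\{\upsilon=\upsilon_0\}$ and continues smoothly across $\mathcal{CH}$ into ${}^{(1)}\mathcal{M}^{(4+1)}_M$, whereas in the second chart it reads $\underline{\upsilon}=2t^*(t)+\upsilon_0$, and since $t^*\rightarrow-\infty$ it runs off to $\underline{\upsilon}=-\infty$ and fails to reach $t=t_+$; the analogous curve $\{\psi+t^*=\text{const}\}$ behaves oppositely. This asymmetry distinguishes the extensions. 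Finally, following the Hausdorffness argument of \cite{HawkingEllis}, a sequence of points converging to $\mathcal{CH}$ along such a null curve would acquire a limit in ${}^{(1)}\mathcal{M}^{(4+1)}_M$ and a distinct limit in ${}^{(2)}\mathcal{M}^{(4+1)}_M$, so no Hausdorff manifold can contain both extensions simultaneously. Assembling these observations gives the proposition.
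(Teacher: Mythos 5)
Your proposal is correct and follows essentially the same route as the paper's proof: the same tortoise coordinate $t^*$, the same twisted coordinates $\upsilon=\psi-t^*$ and $\underline{\upsilon}=\psi+t^*$ producing the two analytic extensions, the same closed-timelike-curve observation, the same null-curve-continuation argument for inequivalence, and the same Hausdorff obstruction from \cite{HawkingEllis}. The additional details you supply (analytic continuation for Ricci-flatness of the extension, the inverse-metric computation showing $\mathcal{CH}$ is null, and the finite-proper-time check that $\mathcal{CH}$ is reached by future directed causal curves) merely make explicit steps the paper leaves implicit.
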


For the rest of the section we prove that every solution of Theorem \ref{regular spacelike boundary} is given by one of the explicit solutions $\big(\mathcal{M}^{(4+1)}_M,\ g^{(4+1)}_{M,a}\big)$ in the cone exterior region. Using the Kaluza--Klein reduction from Section \ref{Kaluza Klein}, we obtain the corresponding Einstein-Scalar field solutions on $(t_+,\infty)\times(0,\infty)\times S^2$:
\[g_{M,a}=-\frac{az^3}{\sqrt{W}}dt^2+az\sqrt{W}dz^2+r^2d\sigma_2^2,\]
\[r_{M,a}=\sqrt{a}z^{3/2}W^{1/4}t,\]
\[\phi_{M,a}=\pm\frac{\sqrt{3}}{2}\log\big(az\sqrt{W}\big).\]
These are self-similar solutions, with $S=\frac{2}{3}z\partial_{z}$ and $k=\mp\frac{1}{\sqrt{3}}$. As previously remarked, the Einstein-Scalar field solutions have the gauge freedom $\phi\rightarrow\phi+C$. We notice that the corresponding vacuum solution is given by another spacetime in our family $\big(\mathcal{M}^{(4+1)}_M,\ g^{(4+1)}_{M,a'}\big)$.

We write our solutions in self-similar double null gauge:

\begin{proposition}
    In self-similar double null coordinates we have on $\big\{u<0,\ v\in(0,-u)\big\}\times S^2$:
    \[g_{M,a}=-\frac{4}{9}a\sqrt{W}dudv+r^2d\sigma_2^2,\]
    \[S=u\partial_u+v\partial_v,\]
    \[r_{M,a}=\sqrt{a}\sqrt{-uv}\cdot t W^{1/4}(u,v),\]
    \[\phi_{M,a}=\pm\frac{\sqrt{3}}{2}\log\big(a(-uv)^{\frac{1}{3}}\sqrt{W}\big).\]
\end{proposition}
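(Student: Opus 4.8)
The plan is to produce explicit self-similar double null coordinates $(u,v)$ on the two-dimensional quotient by directly integrating the null condition, mirroring the passage to the coordinate $U=-\tau\exp(\int\cdots)$ carried out for the interior solutions in Section \ref{Hawking--Page solutions in self-similar Bondi gauge}. First I would locate the null directions of the reduced $(1+1)$-dimensional metric $-\tfrac{az^3}{\sqrt W}dt^2+az\sqrt W\,dz^2$: setting this quadratic form to zero gives $\tfrac{dz}{z}=\pm\tfrac{dt}{\sqrt W}$, so the characteristic curves are the level sets of $\log z\mp t^{**}$, where $t^{**}(t)=\int_{t_+}^{t}\tfrac{dt'}{\sqrt W}$ is a tortoise-type coordinate. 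Since $t_+$ is a simple root of $W$, the integrand behaves like $(t-t_+)^{-1/2}$ near $t_+$ and $t^{**}$ converges there, while $t^{**}\to\infty$ as $t\to\infty$.

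Guided by the requirement $S=\tfrac23 z\partial_z$, I would then set $u=-z^{3/2}e^{\frac{3}{2}t^{**}}$ and $v=z^{3/2}e^{-\frac{3}{2}t^{**}}$; the exponent $\tfrac32$ is dictated precisely so that $Su=\tfrac23 z\,\partial_z u=u$ and $Sv=v$, giving $S=u\partial_u+v\partial_v$ as required. A direct differentiation yields $du=-\tfrac32 z^{1/2}e^{\frac32 t^{**}}\big(dz+\tfrac{z}{\sqrt W}dt\big)$ and $dv=\tfrac32 z^{1/2}e^{-\frac32 t^{**}}\big(dz-\tfrac{z}{\sqrt W}dt\big)$, whence the symmetric product is $du\,dv=-\tfrac94 z\big(dz^2-\tfrac{z^2}{W}dt^2\big)$ and therefore $-\tfrac49 a\sqrt W\,du\,dv=az\sqrt W\,dz^2-\tfrac{az^3}{\sqrt W}dt^2$, reproducing the $(1+1)$-part of $g_{M,a}$; in particular the absence of $du^2$ and $dv^2$ terms confirms that $u$ and $v$ are genuinely null. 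The remaining formulas follow from the single identity $uv=-z^3$: this gives $z=(-uv)^{1/3}$ and $\sqrt{-uv}=z^{3/2}$, so substituting into $r_{M,a}=\sqrt a\,z^{3/2}W^{1/4}t$ and $\phi_{M,a}=\pm\tfrac{\sqrt3}{2}\log(az\sqrt W)$ produces exactly the stated expressions, with $W$ and $t$ regarded as functions of $(u,v)$, while the factor $r^2 d\sigma_2^2$ is carried along unchanged.

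Finally I would identify the coordinate range. On $t\in(t_+,\infty)$ one has $t^{**}\ge0$, hence $-u=z^{3/2}e^{\frac32 t^{**}}\ge z^{3/2}\ge v$, so the image is precisely $\{u<0,\ v\in(0,-u)\}$, with $y=-v/u=e^{-3t^{**}}$ decreasing from $1$ to $0$; thus $\{v=-u\}$ is the limit $t\to t_+$ (the Cauchy horizon $\mathcal{CH}=\mathcal{B}$, where $r\to0$ and $W\to0$) and $\{v=0\}$ is the limit $t\to\infty$, which is the cone $C_0^-$. The point requiring genuine care is this normalization step: the three conditions that $S=u\partial_u+v\partial_v$, that the conformal factor be exactly $-\tfrac49 a\sqrt W$, and that the domain be exactly $v\in(0,-u)$ together pin down the reference point $t_+$ in $t^{**}$ and the power $3/2$, leaving only the residual gauge freedom (\ref{gauge freedom double null}); one should also check that $v$ (not $u$) is the incoming coordinate, consistent with the time orientation $-\partial_t$ and with the convention $C_0^-=\{v=0\}$ fixed in Section \ref{Bondi to double null coordinates}. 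Everything else reduces to the routine differentiation recorded above.
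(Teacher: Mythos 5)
Your proof is correct and takes essentially the same route as the paper: your coordinates $u=-z^{3/2}e^{\frac{3}{2}t^{**}}$, $v=z^{3/2}e^{-\frac{3}{2}t^{**}}$ are exactly the paper's $u=-e^{\frac{3}{2}U}$, $v=e^{\frac{3}{2}V}$ with $U=\log z+\int_{t_+}^t W^{-1/2}d\tilde t$, $V=\log z-\int_{t_+}^t W^{-1/2}d\tilde t$, and the conformal factor, the identity $uv=-z^3$, and the resulting formulas for $r_{M,a}$ and $\phi_{M,a}$ follow by the same routine differentiation. The only difference is that you make explicit the verification of the coordinate range $\{u<0,\ v\in(0,-u)\}$ and the boundary identifications ($t\to t_+$ with $\{v=-u\}$, $t\to\infty$ with $C_0^-=\{v=0\}$), which the paper leaves implicit.
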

\begin{proof}
We define the auxiliary null coordinates:
\[U(t,z)=\log z+\int_{t_+}^t\frac{d\tilde{t}}{\sqrt{W}},V(t,z)=\log z-\int_{t_+}^t\frac{d\tilde{t}}{\sqrt{W}}.\]
In $(U,V)$ coordinates we compute:
\[g_{M,a}=az^3\sqrt{W}dUdV+r^2d\sigma_2^2,\ S=\frac{2}{3}(\partial_U+\partial_V).\]
The desired null coordinates are $u=-e^{\frac{3}{2}U},\ v=e^{\frac{3}{2}V}.$
\end{proof}

We use the explicit form of the solutions to compute:
\[\beta=\frac{v\partial_vr}{r}=\frac{1}{3}\bigg(\frac{3}{2}-\frac{\sqrt{W}}{t}-\frac{W'}{4\sqrt{W}}\bigg)\]
\[\theta=r\frac{\partial_v\phi}{\partial_vr}=\frac{\pm\sqrt{3}t(W'-2\sqrt{W})}{4W+W't-6\sqrt{W}t}\]
\[1-\mu=9t^2\beta(1-\beta)\]
\[\eta=\beta\theta=\frac{1}{4\sqrt{3}}\frac{W'-2\sqrt{W}}{\sqrt{W}}\]

We want to view the above solutions as orbits of an autonomous system of ODEs. According to Section \ref{Bondi to double null coordinates}, the pair $(\theta,\beta)$ satisfies the system (\ref{double null systemmm}). We consider $e^{\underline{t}}=-v/u,$ which is equivalent to:
\[\underline{t}=-3\int_{t_+}^t\frac{1}{\sqrt{W}}.\]
With respect to the variable $\underline{t},$ the pair $(\theta,\beta)$ satisfies the system (\ref{System in t variablee}).

We notice that the coordinates $(t,z)$ cover the cone exterior region of the spacetime, i.e. the complement of the causal past of the scaling origin $\{u=0,\ v=0\}$. This follows since the cone exterior region is given by $\{u<0,\ v>0\}.$ The trapped region $\mathcal{T}^{(4+1)}$ is given by $\mu>1,$ which is equivalent to $\beta<0.$ Using the above formula for $\beta,$ we have $\mathcal{T}^{(4+1)}=\{t\in(t_+,t_{\mathcal{A}})\}$, where:
\[t_{\mathcal{A}}=\frac{1}{\sqrt{3}}+\frac{1}{\sqrt{3}}\sqrt{1+M3\sqrt{3}}.\]

We define as before:
\[\tau(\underline{t})=\begin{dcases}\tau_*+\int_{-\infty}^{\underline{t}}\beta d\underline{\tilde{t}}, &  \underline{t}\leq\underline{t}_{\mathcal{A}}\\
\tau_{\mathcal{A}}-\int_{\underline{t}_{\mathcal{A}}}^{\underline{t}}\beta d\underline{\tilde{t}}, &  \underline{t}>\underline{t}_{\mathcal{A}}
\end{dcases}\]
With respect to the variable $\tau,$ the pair $(\theta,\alpha)$ satisfies the system (\ref{system alpha theta}) in the region $\mathcal{E}^{(4+1)}\cap\mathcal{R}^{(4+1)}$ given by $\underline{t}\in(-\infty,\underline{t}_{\mathcal{A}})$, and it satisfies the system (\ref{system alpha theta trapped}) in the region $\mathcal{T}^{(4+1)}$ given by $\underline{t}\in(\underline{t}_{\mathcal{A}},0).$

We compute the limits:
\[r\xrightarrow[]{}0,\ \alpha\xrightarrow[]{}0-,\ \theta\xrightarrow[]{}-\sqrt{3}\text{ as }\tau\xrightarrow[]{}\infty,\]
\[\alpha\xrightarrow[]{}\infty,\ \theta\xrightarrow[]{}\infty,\ \eta\xrightarrow[]{}\eta_{\mathcal{A}}>0,\ \mu\xrightarrow[]{}1,\text{ as }\tau\xrightarrow[]{}\tau_{\mathcal{A}}-,\]
\[\alpha\xrightarrow[]{}\infty,\ \theta\xrightarrow[]{}\sqrt{3}\text{ as }\tau\xrightarrow[]{}\tau_*.\]
Thus, the orbits corresponding to the solutions $\big(\mathcal{M}_M^{(4+1)},g_{M,a}^{(4+1)}\big)$ have the same properties as the orbits corresponding to the solutions described in Theorem \ref{regular spacelike boundary} in the cone exterior region. We obtain the following result:

\begin{proposition}
    Consider any solution as in Theorem \ref{regular spacelike boundary}. Let $X\in\big(-\frac{1}{3},0\big)$ be the parameter from Lemma \ref{expansion at B lemma}. The orbit corresponding to this solution in the cone exterior region coincides with orbit corresponding to the explicit solution $\big(\mathcal{M}_M^{(4+1)},g_{M,a}^{(4+1)}\big)$ with:
    \[9t_+^2=-\frac{1}{X}.\]
\end{proposition}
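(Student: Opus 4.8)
The plan is to mirror the proof of Proposition~\ref{interior all solutions}, using the parameter $X$ from Lemma~\ref{expansion at B lemma} as an invariant of the orbit near $P_-$ and matching it against an explicit computation for $g_{M,a}^{(4+1)}$. Just before this statement we verified that the orbit of $\big(\mathcal{M}_M^{(4+1)},g_{M,a}^{(4+1)}\big)$ blows up as $\tau\xrightarrow[]{}\tau_{\mathcal{A}}+$ with $\theta\xrightarrow[]{}-\infty,\ \alpha\xrightarrow[]{}-\infty$ and reaches $P_-$ as $\tau\xrightarrow[]{}\infty$; hence it is one of the orbits to which Lemma~\ref{expansion at B lemma} applies, and it carries a well-defined parameter $X\in\big(-\frac13,0\big)$. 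From the expansion $\frac{1}{1-\mu}=X\alpha^2+O(|\alpha|^3)$ near $P_-$, this parameter is recovered as
\[X=\lim_{\tau\xrightarrow[]{}\infty}\frac{1}{(1-\mu)\,\alpha^2}.\]

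First I would evaluate this limit using the explicit formulas. Substituting $1-\mu=9t^2\beta(1-\beta)$ and $\alpha=1/\beta$, the quotient simplifies to
\[\frac{1}{(1-\mu)\,\alpha^2}=\frac{\beta^2}{9t^2\beta(1-\beta)}=\frac{\beta}{9t^2(1-\beta)}.\]
As $\tau\xrightarrow[]{}\infty$ we have $t\xrightarrow[]{}t_+$, where $W(t_+)=0$ and $W'(t_+)>0$; from the explicit formula for $\beta$ the term $\frac{W'}{4\sqrt{W}}$ diverges, so $\beta\xrightarrow[]{}-\infty$ and hence $\frac{\beta}{1-\beta}\xrightarrow[]{}-1$. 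This yields
\[X=-\frac{1}{9t_+^2},\qquad\text{i.e.}\qquad 9t_+^2=-\frac{1}{X}.\]

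Next I would verify surjectivity of the correspondence $M\mapsto X$. Solving $W(t_+)=W'(t_+)=0$ identifies $M=-\frac{1}{3\sqrt3}$ as the degenerate case $t_+=\frac{1}{\sqrt3}$, while $t_+\xrightarrow[]{}\infty$ as $M\xrightarrow[]{}\infty$; so as $M$ ranges over $\big(-\frac{1}{3\sqrt3},\infty\big)$ the largest root $t_+$ ranges over $\big(\frac{1}{\sqrt3},\infty\big)$ and $X=-\frac{1}{9t_+^2}$ sweeps out exactly $\big(-\frac13,0\big)$, matching the range in Lemma~\ref{expansion at B lemma}. Thus, given the parameter $X$ attached to the solution of Theorem~\ref{regular spacelike boundary}, I can pick $M$ with $9t_+^2=-1/X$. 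Both orbits end at $P_-$ with the same $X$, so the uniqueness clause of Lemma~\ref{expansion at B lemma} (at most one such orbit for each $X$) forces them to coincide, which is the claim. The residual gauge freedom $\phi\rightarrow\phi+C$, equivalently the choice of $a$, does not affect the orbit in the $(\theta,\alpha)$ plane, as noted before the statement.

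The step I expect to be the main obstacle is the limit near $P_-$: the point $t_+$ is precisely where $W$ vanishes and the self-similar coordinate degenerates ($\beta\xrightarrow[]{}-\infty$), so one must track the subleading behavior carefully enough to confirm that the quotient converges to $-\frac{1}{9t_+^2}$ and that the $O(|\alpha|^3)$ remainder in the expansion of $\frac{1}{1-\mu}$ does not contribute. Once this limit is secured, the remaining ingredients are the elementary range analysis for $t_+(M)$ and the uniqueness already proved in Lemma~\ref{expansion at B lemma}.
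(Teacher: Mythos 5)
Your proposal is correct and follows essentially the same route as the paper: the paper's proof also identifies $X=\lim_{t\rightarrow t_+}\frac{\beta^2}{1-\mu}=-\frac{1}{9t_+^2}$ for the explicit solutions, notes that for every $X\in\big(-\frac{1}{3},0\big)$ there is a unique $M\in\big(-\frac{1}{3\sqrt{3}},\infty\big)$ with $X=-\frac{1}{9t_+^2}$, and concludes by the uniqueness clause of Lemma \ref{expansion at B lemma}. Your write-up merely fills in the intermediate steps (the simplification $\frac{\beta^2}{1-\mu}=\frac{\beta}{9t^2(1-\beta)}$ with $\beta\rightarrow-\infty$ as $t\rightarrow t_+$, and the range analysis of $t_+(M)$) that the paper leaves implicit.
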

\begin{proof}
For the explicit solutions defined above we have:
\[X=\lim_{t\xrightarrow[]{}t_+}\frac{\beta^2}{1-\mu}=-\frac{1}{9t_+^2}.\]
Moreover, for every $X\in\big(-\frac{1}{3},0\big)$ there exists a unique $M\in\big(-\frac{1}{3\sqrt{3}},\infty\big)$ such that $X=-\frac{1}{9t_+^2}.$
By the uniqueness result in Lemma \ref{expansion at B lemma}, we conclude that all the orbits of the solutions in Theorem \ref{regular spacelike boundary} are given by those of the explicit solutions $\big(\mathcal{M}_M^{(4+1)},g_{M,a}^{(4+1)}\big)$ in the cone exterior region.
\end{proof}

The above orbits determine the solution to the Einstein vacuum equations, once we specify the boundary values given by $r$ and $\phi$ on $C_0^-$. We compute in the case of Lorentzian Hawking--Page solutions:
\[r_m(u,s_*)=-u\sqrt{\frac{b}{2\pi}}\big(C_1(m)\big)^{\frac{3}{2}},\ \phi_m(u,s_*)=-k\log\bigg(-u\bigg(\frac{b}{2\pi}C_1(m)\bigg)^{\frac{3}{2}}\bigg).\]
We also compute in the case of the solutions $\big(\mathcal{M}_M^{(4+1)},g_{M,a}^{(4+1)}\big):$
\[r_{M,a}(u,\tau_*)=-u\sqrt{a}\big(C_2(M)\big)^{\frac{3}{2}},\ \phi_{M,a}(u,\tau_*)=-k\log\big(-u(aC_2(M))^{\frac{3}{2}}\big).\]

Using the gauge freedom (\ref{gauge freedom double null}), we can select $r_m=r_M,\ \phi_m=\phi_{M,a}$ on $C_0^-,$ so the solutions coincide in the cone exterior region. This allows us to complete the description of the solutions considered in Theorem \ref{regular spacelike boundary} at the boundary $\mathcal{B}:$ 

\begin{theorem}\label{regularity at BB}
    We consider a maximally extended Lorentzian Hawking--Page solution, which in the exterior regular region $\mathcal{E}^{(4+1)}\cap\mathcal{R}^{(4+1)}$ corresponds to an orbit to the right of $S_1^+$, the upper branch of the stable manifold of $P_1$ in the $(\theta,\alpha)$ plane. Let $X\in\big(-\frac{1}{3},0\big)$ be the parameter from Lemma \ref{expansion at B lemma}. Up to the gauge freedom (\ref{gauge freedom double null}), in the cone exterior region the solution coincides with the explicit solution $\big(\mathcal{M}_M^{(4+1)},g_{M,a}^{(4+1)}\big)$ with:
    \[9t_+^2=-\frac{1}{X}.\]
    Thus, the boundary $\mathcal{B}_{\text{Taub--NUT}}$ represents a null Cauchy horizon of Taub--NUT type, and beyond $\mathcal{B}_{\text{Taub--NUT}}$ the spacetime has at least two inequivalent analytic extensions with closed timelike curves.
\end{theorem}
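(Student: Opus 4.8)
The plan is to upgrade the orbit-level coincidence already established in the preceding proposition to a genuine identification of spacetimes, and then to transport the Taub--NUT structure of Proposition \ref{Taub-NUT extension} across this identification. Since the orbit of the extension in the cone exterior region has already been shown to agree with that of the explicit solution $\big(\mathcal{M}_M^{(4+1)},g_{M,a}^{(4+1)}\big)$ for the value of $M$ fixed by $9t_+^2=-1/X$, it remains only to pin down the boundary data and then read off the causal structure at $\mathcal{B}$.

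First I would recall that, as explained when passing beyond the apparent horizon, the scale-invariant vacuum solution in the cone exterior region is completely determined by the relevant orbit of the autonomous system together with the boundary values of $r$ and $\phi$ on $C_0^-$; this is what the identities for $v\partial_v\phi$ and for $\tfrac{v}{r}\cdot\tfrac{\partial_v r}{1-\mu}$ encode. Consequently, two such solutions whose orbits coincide and whose data $(r,\phi)|_{C_0^-}$ agree must be isometric throughout $\mathcal{E}^{(4+1)}$, so the task reduces to matching $(r,\phi)$ on $C_0^-$.

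Next I would carry out this matching using the available gauge freedom. On $C_0^-$ both $r_m$ and $r_{M,a}$ are linear in $(-u)$ and both $\phi_m$ and $\phi_{M,a}$ have the form $-k\log(-u)+\text{const}$, with the constants computed just above in terms of $b(m),C_1(m)$ and $a,C_2(M)$. The rescaling $u\to c_1^2u$ of (\ref{gauge freedom double null}) rescales $r|_{C_0^-}$ and shifts the additive constant of $\phi|_{C_0^-}$, while the scalar-field shift $\phi\to\phi+C$ moves within the family $a\to a'$; I would choose these two parameters so that $r_m=r_{M,a}$ and $\phi_m=\phi_{M,a}$ simultaneously on $C_0^-$. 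Because the $u$-dependence already matches and there are exactly as many free parameters as conditions, this is a solvable (essentially linear, in the logarithms) system, after which the two solutions coincide on the entire cone exterior region.

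Finally, having identified the extension with $\big(\mathcal{M}_M^{(4+1)},g_{M,a}^{(4+1)}\big)$ in the cone exterior region, I would read off the nature of $\mathcal{B}$ from the explicit geometry: the boundary $\mathcal{B}=\{t=t_{\mathcal{B}}\}$ with $r\to0$ of Theorem \ref{regular spacelike boundary} corresponds to $\{t\to t_+\}$, where $W\to0$ so that the $S^1$ fibre $f=az\sqrt{W}$ degenerates while the $S^2$ area radius $zt$ stays bounded away from zero. By Proposition \ref{Taub-NUT extension} this is precisely the null Cauchy horizon $\mathcal{CH}$ of Taub--NUT type, and the same proposition furnishes the two inequivalent analytic extensions with closed timelike curves. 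The main obstacle I anticipate is conceptual rather than computational: one must see clearly that the apparently spacelike $r=0$ boundary of the $(3+1)$-dimensional reduction is, upstairs, the \emph{smooth} null Cauchy horizon at which only the $S^1$ direction collapses, so that the loss of regularity visible downstairs is an artifact of the Kaluza--Klein quotient; the remaining care lies in the gauge bookkeeping, namely checking that the rescaling of $u$ together with the choice of $a$ really matches both $r$ and $\phi$ while remaining compatible with the independently fixed value $9t_+^2=-1/X$.
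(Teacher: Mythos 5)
Your proposal is correct and follows essentially the same route as the paper: it takes the orbit coincidence $9t_+^2=-1/X$ from the preceding proposition, observes that the solution in the cone exterior region is determined by the orbit together with the data $(r,\phi)$ on $C_0^-$, matches that data by combining the rescaling of $u$ with the choice of $a$ (equivalently the shift $\phi\to\phi+C$ within the explicit family), and then reads off the Taub--NUT Cauchy horizon structure from Proposition \ref{Taub-NUT extension}. Your explicit two-parameters-for-two-conditions bookkeeping, and the remark that upstairs only the $S^1$ fibre $f=az\sqrt{W}$ collapses at $t\to t_+$ while $r^2/f=z^2t^2$ stays bounded away from zero, is exactly the content the paper uses.
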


Together with Theorems \ref{solutions singular null boundary}, \ref{solutions apparent horizon}, \ref{spacelike singularity} and \ref{regular spacelike boundary}, this result completes the classification of all the maximal extensions of Lorentzian Hawking--Page solutions in the class of $(4+1)$-dimensional scale-invariant globally hyperbolic vacuum spacetimes with an $SO(3)\times U(1)$ isometry, concluding the proof of the main result of our paper.

We recall that the $\big(\mathcal{M}_M^{(4+1)},g_{M,a}^{(4+1)}\big)$ solutions represent a Wick rotation of the Lorentzian Hawking--Page solutions. Thus, if we extend the Lorentzian Hawking--Page solutions in the cone exterior region by a Wick rotation according to the heuristics suggested in \cite{Andersoncone}, we obtain spacetimes with a null Cauchy horizon of Taub--NUT type. The generators of the Cauchy horizon are given by closed null curves, so the first singularity at the center $b_{\Gamma}$ is not a naked singularity. Moreover, we remark that none of the extensions of Lorentzian Hawking--Page solutions in our symmetry class lead to naked singularities. 

\begin{remark}
The spacetimes obtained in Theorem \ref{regularity at BB} are maximal globally hyperbolic future developments which can be extended as smooth Lorentzian manifolds beyond the null Cauchy horizon $\mathcal{B}_{\text{Taub--NUT}}$. In view of strong cosmic censorship, we conjecture this horizon to be unstable.
\end{remark}

\section{Solutions with Locally Naked Singularities}\label{third big section}

In this section we use the extensions of the Lorentzian Hawking--Page solutions to spacetimes with a null curvature singularity in order to construct asymptotically flat spacetimes with locally naked singularities. We argue that their instability in the $SO(3)\times U(1)$ symmetry class follows using the blue-shift effect of \cite{instability} and we address this in the context of Conjecture \ref{conjecture}.

\subsection{Asymptotically Flat Spacetimes}\label{asymptotically flat spacetimes}

We denote by $(\mathcal{M}_0^{(4+1)},g_0^{(4+1)})$ an extension of a Lorentzian Hawking--Page solution with a null curvature singularity as in Theorem \ref{solutions singular null boundary}. We recall that we have the decomposition $\mathcal{M}_0^{(4+1)}=\Gamma_0^{(4+1)}\cup\mathcal{U}_0^{(4+1)}\cup\mathcal{E}_0^{(4+1)},$ and the two dimensional Lorentzian quotient manifold  $\mathcal{Q}_0^{(1+1)}=\big(\mathcal{U}_0^{(4+1)}\cup\mathcal{E}_0^{(4+1)}\big)/\big(SO(3)\times U(1)\big).$ We also recall that $\big(\mathcal{Q}_0^{(1+1)}\times S^2,g_0,\phi_0\big)$ is the corresponding solution of the Einstein-Scalar field equations obtained via the Kaluza--Klein reduction. We notice that due to our scale invariance assumption, the spacetime $(\mathcal{M}_0^{(4+1)},g_0^{(4+1)})$ is not asymptotically flat. 

We consider an outgoing null cone $C_0^+,$ and an incoming null cone $C_1^-$ in the cone exterior region, so $C_1^-$ is to the future of $C_0^-.$ Using the Kaluza--Klein reduction, it suffices to prescribe characteristic initial data $(g,\phi)$ for the Einstein-Scalar field equations on $C_0^+.$ To the past of $C_1^-$ we consider the data induced by $\big(\mathcal{Q}_0^{(1+1)}\times S^2,g_0,\phi_0\big)$. We then use a cutoff to prescribe asymptotically flat data to the future of $C_1^-$, without a loss in regularity across the cone $C_1^-$. This can be done in double null coordinates by prescribing $(r,\phi)$ on $C_0^+$ to respect asymptotic flatness, and then solving the constraint equation (\ref{partial v}) with initial conditions on $C_0^+\cap C_1^-$ to obtain $\Omega.$ As a result, we obtained asymptotically flat characteristic initial data on $C_0^+$ for the $(4+1)$-dimensional Einstein vacuum equations.

We denote by $(\mathcal{M}^{(4+1)},g^{(4+1)})$ the evolution of the above initial data. Since we solved the reduced system, the spacetime obtained has an $SO(3)\times U(1)$ isometry. As above, we consider the quotient manifold $\mathcal{Q}^{(1+1)},$ and the corresponding solution of the Einstein-Scalar field equations $\big(\mathcal{Q}^{(1+1)}\times S^2,g,\phi\big)$. By domain of dependence, the spacetime $(\mathcal{M}^{(4+1)},g^{(4+1)})$ coincides with the original Lorentzian Hawking--Page extended solution in the past of $C_1^-$. Therefore, the new spacetime has a portion of the null boundary $\mathcal{B}_{null}$ where the curvature is singular. This singularity is not preceded by trapped surfaces, because in the case of the corresponding Einstein-Scalar field solution $\big(\mathcal{Q}^{(1+1)}\times S^2,g,\phi\big)$ the null boundary is also not preceded by trapped surfaces. Thus, the spacetime $(\mathcal{M}^{(4+1)},g^{(4+1)})$ has a locally naked singularity. We represent the Penrose diagram of the quotient
of the spacetime by the $SO(3)\times U(1)$ action below.

The Einstein-Scalar field solution that we constructed satisfies the usual notion of asymptotic flatness. Using the Kaluza--Klein reduction, we get that the vacuum spacetime $(\mathcal{M}^{(4+1)},g^{(4+1)})$ approaches the standard flat Lorentzian metric on $\mathbb{R}^4\times S^1$ at infinity. In conclusion, we constructed asymptotically flat extensions of the Lorentzian Hawking--Page solutions with locally naked singularities and an $SO(3)\times U(1)$ isometry.

\begin{center}
\begin{tikzpicture}[scale=0.9]
\draw[dashed] (0,0) -- (4,4);
\draw (0,0) -- (0,5);
\draw[very thin] (0,3) -- (2.5,5.5);
\draw[very thin] (0,5) -- (2.5,2.5);
\draw[dashed] (4,4) -- (2.5,5.5);
\draw[thick, dashdotted] (0,5) -- (0.85,5.85);
\draw[thick, dotted] (0.85,5.85) -- (0.9,5.9);
\draw[very thin] (0.6,5.6) -- (3.1,3.1);
\draw[thick, dashdotted] (1.4,5.7) .. controls (1.7,5.5) and (2,5.5) .. (2.1,5.5);
\draw[thick, dotted] (2.1,5.5) -- (2.5,5.5);
\draw[thick, dotted] (0.9,5.9) -- (1.4,5.7);
\filldraw[color=black, fill=white] (0,5) circle (2pt) node[anchor=east] {$b_{\Gamma}$};
\filldraw[color=black, fill=white] (2.5,5.5) circle (2pt) node[anchor=south west] {$i^+$};
\filldraw[color=black, fill=white] (4,4) circle (2pt) node[anchor=north west] {$i^0$};
\filldraw[color=black, fill=white] (0,0) circle (2pt) node[anchor=north west] {$i^-$};
\draw (0,2.5)  node[anchor=east] {$\Gamma$};
\draw (1.55,3.55)  node[anchor=west] {$C_0^-$};
\draw (2.15,4.15)  node[anchor=west] {$C_1^-$};
\draw (2,2)  node[anchor=north west] {$\mathcal{I}^-$};
\draw (3.25,4.75)  node[anchor=south west] {$\mathcal{I}^+$};
\draw (0.45,5.45)  node[anchor=south east] {$\mathcal{B}_{null}$};
\draw (1.7,5.7)  node[anchor=south] {$\mathcal{B}$};
\end{tikzpicture}    
\end{center}

\begin{remark}
    In the above Penrose diagram we represented the case when the spacelike or Taub--NUT-like component of the boundary $\mathcal{B}$ is nonempty. However, we cannot rule out the situation when $\mathcal{B}=\emptyset$ and the null boundary $\mathcal{B}_{null}$ extends all the way to $i^+.$ In this case, we use the monotonicity properties of the Einstein-Scalar field equations in spherical symmetry to get that $r$ extends continuously to zero along $\mathcal{B}_{null}$. We claim that in both cases we can use the argument of \cite{trappedsurface} to prove that the corresponding Einstein-Scalar field solution has a complete future null infinity $\mathcal{I}^+.$ Using also the decay of the scalar field, we can then use this argument to prove that in both cases the vacuum spacetime $(\mathcal{M}^{(4+1)},g^{(4+1)})$ has a complete future null infinity $\mathcal{I}^+.$
\end{remark}

\begin{remark}
    We could also repeat the above argument for the solutions constructed in Theorem \ref{spacelike singularity} and Theorem \ref{regular spacelike boundary}, to obtain the corresponding asymptotically flat extensions of the Lorentzian Hawking--Page solutions. In this case, the asymptotic behavior that we computed holds near a portion of $\mathcal{B}$, in the past of $C_1^-$. We can again use \cite{trappedsurface} as above to prove that these extensions have a complete future null infinity $\mathcal{I}^+.$ This argument also shows that we can modify the associated Einstein-Scalar field solutions of \cite{nakedsingularities} with $k^2=\frac{1}{3}$ to asymptotically flat solutions with spacelike singularities. For a portion of spacelike singularity near the center, the blow-up rate of the scalar field is $\phi=c\log r+O(1)$, where $c=\pm\sqrt{3},\ \mp\frac{1}{\sqrt{3}}.$
\end{remark}

\begin{remark}
    We consider a maximal extension of a Lorentzian Hawking–Page solution, as constructed in Section \ref{second big section}, and we fix an incoming null cone $C_1^-$ in the exterior region. As a consequence of the above two remarks, we get that any $(4+1)$-dimensional asymptotically flat vacuum solution with an $SO(3)\times U(1)$ isometry, that coincides with our extended spacetime in the past of $C_1^-$, has a complete future null infinity $\mathcal{I}^+.$
\end{remark}

\subsection{Instability in the $SO(3)\times U(1)$ Symmetry Class}

In this section we prove that the above constructed asymptotically flat spacetime $(\mathcal{M}^{(4+1)},g^{(4+1)})$ with a locally naked singularity is unstable in the $SO(3)\times U(1)$ symmetry class. We point out that showing the initial data is non-generic in a whole well-posedness class is beyond the purpose of this paper. On the other hand, we illustrate how the blue-shift instability mechanism of \cite{instability} applies in our case, by providing a two dimensional set of perturbations of our solution to spacetimes without locally naked singularities.

Given a spherically symmetric solution of the Einstein-Scalar field equations $(M,g,\phi)$ with a first singularity at the center $b_{\Gamma},$ we denote by $C_0^-$ the incoming null cone going into $b_{\Gamma}$ and by $C_0^+$ an outgoing null cone. We write the metric in double null coordinates $(u,v)$ such that $C_0^-=\{v=0\},$ $u=-2r$ on $C_0^-,$ and $v=2r$ on $C_0^+.$ Note that with this normalization we have $b_{\Gamma}=\{u=0,\ v=0\}.$ The blue-shift is defined:
\[\gamma(u)=\int_{u_0}^u\frac{1}{r}\cdot\frac{\mu}{1-\mu}(u',0)du'.\]

As pointed out before, given our normalization it suffices to prescribe $\phi$ on $C_0^+$ to obtain characteristic initial data. Using the two instability theorems of \cite{instability}, one gets the following result:
\begin{theorem}[D. Christodoulou, \cite{instability}]
    Let $(M,g,\phi)$ be a bounded variation spherically symmetric solution of the Einstein-Scalar field equations. Assume that the spacetime has a locally naked singularity and that the blue-shift is unbounded on $C_0^-$. There exist functions $f_1,\ f_2$ on $C_0^+$ with $\partial_vf_1\in BV,\ f_2\in C^{1,1/2}$ and $f_1=f_2=0$ for $v<0$, with the following property: for any two constants $\lambda_1,\ \lambda_2$ such that at least one is nonzero, any exterior incoming null cone in the future development of the initial data $\phi+\lambda_1f_1+\lambda_2f_2$ reaches a spacelike singularity $\mathcal{B}.$
\end{theorem}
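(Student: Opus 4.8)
The plan is to deduce the statement by combining the two instability theorems of \cite{instability}, with the unbounded blue-shift along $C_0^-$ serving as the driving mechanism. First I would fix the characteristic setup exactly as in the normalization preceding the theorem: $C_0^-=\{v=0\}$, $u=-2r$ on $C_0^-$, $v=2r$ on $C_0^+$, and $b_\Gamma=\{u=0,\ v=0\}$, so that prescribing $\phi$ on $C_0^+$ determines the data. The hypothesis of a locally naked singularity means the unperturbed development is regular up to $C_0^-$ with no preceding trapped surfaces, and the assumption that $\gamma(u)=\int_{u_0}^u\frac{1}{r}\cdot\frac{\mu}{1-\mu}(u',0)\,du'$ is unbounded as $u\to0-$ is precisely the statement that perturbations incident along $C_0^+$ are amplified without bound as one approaches $b_\Gamma$ along $C_0^-$.

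Second, I would invoke the two instability theorems of \cite{instability} separately to manufacture the two perturbation directions. The first treats perturbations of bounded variation and produces a function $f_1$ on $C_0^+$, supported in $\{v\ge0\}$ with $\partial_v f_1\in BV$, such that the development of $\phi+\lambda f_1$ forms a trapped surface for every $\lambda\ne0$; the second treats the more regular class and yields $f_2\in C^{1,1/2}$ with $f_2=0$ for $v<0$ having the analogous property. In either case, once a trapped surface has formed, the associated exterior incoming cone can no longer reach null infinity, and the extension criterion of \cite{kommemi} together with the $BV$ theory of \cite{BV} forces the solution to terminate at a spacelike singular boundary $\mathcal B$ with $r=0$, exactly as in the analysis leading to Theorem \ref{spacelike singularity}.

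Third, and this is the genuinely new ingredient, I would upgrade these one-parameter conclusions to the two-parameter statement. The mechanism of \cite{instability} shows that whether a given exterior incoming cone develops a trapped surface is governed, to leading order, by the value of a functional of the perturbation paired against the blue-shift weight along $C_0^-$; the unboundedness of $\gamma$ guarantees that this functional is nonzero on every nonzero admissible perturbation, since the infinite amplification cannot be neutralized by the finite-order regularity of the perturbation. I would then select $f_1$ and $f_2$ so that their responses under this functional are linearly independent, which yields nonvanishing on the whole punctured plane $(\lambda_1,\lambda_2)\ne(0,0)$ and hence trapped-surface formation for every nontrivial combination $\phi+\lambda_1 f_1+\lambda_2 f_2$. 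By the monotonicity of the Einstein--Scalar field system in spherical symmetry, this behavior propagates to \emph{every} exterior incoming null cone, so each such cone reaches the spacelike singularity $\mathcal B$.

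The hard part will be the third step: ruling out destructive cancellation for \emph{all} nonzero parameters simultaneously, rather than merely for a generic set. This demands the quantitative form of the blue-shift estimate of \cite{instability}, establishing that the amplified contribution of any nonzero perturbation strictly dominates the bounded corrections, together with a genuine independence of the two directions $f_1,f_2$ relative to the blue-shift functional so that no interference along $C_0^-$ can restore nakedness. Once this two-dimensional nondegeneracy is secured, the rest of the argument is a direct transcription of Christodoulou's trapped-surface formation and the extension criterion.
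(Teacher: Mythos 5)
The first thing to note is that the paper does not prove this statement at all: it is a quotation from \cite{instability}, introduced with the single remark that it follows from the two instability theorems proved there. So the standard of comparison is Christodoulou's own argument, and against that standard your third step contains a genuine gap. Having characterized the two instability theorems as two separate one-parameter statements (the data $\phi+\lambda f_1$ and the data $\phi+\lambda f_2$ each develop trapped surfaces for every $\lambda\neq 0$), you must pass to the two-parameter family, and you propose to do so by positing a linear ``blue-shift response functional'' that governs trapped-surface formation and by choosing $f_1,\ f_2$ with linearly independent responses. No such functional is defined or constructed anywhere, and the assertion that it is nonzero on every nonzero element of the span of $\{f_1,f_2\}$ is not an argument --- it is a restatement of the conclusion to be proved. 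Worse, your justification (``the infinite amplification cannot be neutralized by the finite-order regularity of the perturbation'') would, if it were valid as stated, apply to \emph{every} nonzero admissible perturbation, i.e.\ it would prove that trapped-surface formation is fully generic; that is far stronger than what \cite{instability} establishes (instability along a specific two-dimensional family), and it cannot be extracted from the cited theorems.

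The reason the combination works in \cite{instability} is structural rather than a nondegeneracy-of-a-pairing argument: the two theorems are formulated with a regularity hierarchy so that the cancellation question you isolate never arises. The perturbation $f_1$ is rougher than $f_2$: since $f_2\in C^{1,1/2}$, the singular part of $\partial_v(\lambda_1f_1+\lambda_2f_2)$ at $v=0^+$ (the part at BV level, e.g.\ the jump of the $v$-derivative) is exactly $\lambda_1$ times that of $f_1$, for \emph{any} value of $\lambda_2$. The first instability theorem exploits the blue-shift amplification of precisely this rough part, and is therefore robust to the addition of an arbitrary multiple of the smoother function; it covers the case $\lambda_1\neq0$ uniformly in $\lambda_2$. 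The second instability theorem then covers the remaining case $\lambda_1=0,\ \lambda_2\neq0$, where the leading roughness at $v=0^+$ is $\lambda_2$ times that of $f_2$. No cancellation between the two directions is possible because they live at different orders of regularity, not because of any linear independence of functionals. Your steps 1 and 2 are fine (the normalization, and the deduction via the extension criterion of \cite{kommemi} and the BV theory of \cite{BV} that once the exterior is trapped every exterior incoming null cone terminates at a spacelike $r=0$ boundary, as in Theorem \ref{spacelike singularity}), but the theorem you are quoting should be invoked in its correct, hierarchical form, at which point your ``hard part'' dissolves.
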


The instability theorem gives a 2-dimensional space of perturbations of solutions with locally naked singularities and unbounded blue-shift to spacetimes without locally naked singularities, where the boundary $\mathcal{B}$ is preceded by trapped surfaces. We represent the Penrose diagram of such Einstein-Scalar field solutions:
\begin{center}
\begin{tikzpicture}[scale=1.2]
\draw[dashed] (0,0) -- (3,3);
\draw[dashed] (3,3) -- (2,4);
\draw (0,0) -- (0,4);
\draw[very thin] (0,4) -- (2,2);
\draw[thick, dashdotted] (0,4) .. controls (1,3.8)  .. (2,4);
\filldraw[color=black, fill=white] (0,4) circle (2pt) node[anchor=east] {$b_{\Gamma}$};
\draw (0,2.5)  node[anchor=east] {$\Gamma$};
\draw (1.05,3)  node[anchor=west] {$C_0^-$};
\draw (1.5,1.5)  node[anchor=north west] {$\mathcal{I}^-$};
\draw (1.3,3.9)  node[anchor=south east] {$\mathcal{B}$};
\filldraw[color=black, fill=white] (3,3) circle (2pt) node[anchor=north west] {$i^0$};
\filldraw[color=black, fill=white] (2,4) circle (2pt) node[anchor=south west] {$i^+$};
\filldraw[color=black, fill=white] (0,0) circle (2pt) node[anchor=north west] {$i^-$};
\draw (2.45,3.55)  node[anchor=south west] {$\mathcal{I}^+$};
\end{tikzpicture}
\end{center}

\begin{corollary}
    Given $(\mathcal{M}^{(4+1)},g^{(4+1)})$ our extension of a Lorentzian Hawking--Page solution to an asymptotically flat spacetime with a locally naked singularity, there exists in the $SO(3)\times U(1)$ symmetry class a two dimensional space of perturbations to globally hyperbolic spacetimes without locally naked singularities, where the boundary $\mathcal{B}$ is preceded by trapped surfaces. 
\end{corollary}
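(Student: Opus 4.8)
The plan is to transfer the instability statement down to the associated $(3+1)$-dimensional Einstein-Scalar field solution, apply Christodoulou's instability theorem there, and then lift the resulting perturbations back to the $(4+1)$-dimensional vacuum setting through the Kaluza--Klein reduction. By Proposition \ref{Kaluza Klein proposition}, the spacetime $(\mathcal{M}^{(4+1)},g^{(4+1)})$ corresponds to the spherically symmetric solution $\big(\mathcal{Q}^{(1+1)}\times S^2,g,\phi\big)$, which lies in the bounded variation class of \cite{BV} (recall $\theta\in C^{0,1/2}$ across $C_0^-$) and inherits a locally naked singularity at $b_\Gamma$ from the construction in Section \ref{asymptotically flat spacetimes}. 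Thus it suffices to verify the remaining hypothesis of the instability theorem, namely that the blue-shift $\gamma(u)=\int_{u_0}^u \frac1r\cdot\frac{\mu}{1-\mu}(u',0)\,du'$ is unbounded on $C_0^-$, and then to check that the resulting perturbations and conclusions are compatible with the symmetry class.

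The key computation is the blue-shift. On $C_0^-$ the self-similar boundary values established in Section \ref{Extending Beyond the Interior Region section} give $\theta=\tfrac1k$, $\zeta=-k$ and $\tfrac{1}{1-\mu}=1+k^2$, whence $\tfrac{\mu}{1-\mu}=k^2=\tfrac13$ is constant along $C_0^-$. In the normalization of the instability theorem we have $u=-2r$ on $C_0^-$, so the integrand behaves like $\tfrac{1}{r}\cdot\tfrac13=-\tfrac{2}{3u'}$ and $\gamma(u)=-\tfrac23\log\tfrac{|u|}{|u_0|}\to+\infty$ as $u\to 0-$, i.e. as one approaches $b_\Gamma$ along $C_0^-$. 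Hence the blue-shift is unbounded, and Christodoulou's theorem supplies functions $f_1,f_2$ on $C_0^+$ with $\partial_v f_1\in BV$, $f_2\in C^{1,1/2}$, vanishing for $v<0$, such that for every $(\lambda_1,\lambda_2)\neq(0,0)$ the development of the data $\phi+\lambda_1 f_1+\lambda_2 f_2$ reaches a spacelike singular boundary $\mathcal{B}$ preceded by trapped surfaces.

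It remains to lift this $2$-parameter family to the vacuum side. Perturbing the scalar field perturbs the $S^1$ area radius $f=\exp\big(\pm\tfrac{2}{\sqrt3}\phi\big)$, hence the $(4+1)$-dimensional metric, while preserving the $SO(3)\times U(1)$ symmetry; since $f_1=f_2=0$ for $v<0$ the perturbations are supported in the cone exterior region and leave the interior data and the regularity at the center untouched, and taking the $f_i$ localized in $v$ preserves the asymptotically flat gluing to the future of $C_1^-$. Each perturbed Einstein-Scalar field data therefore corresponds to $SO(3)\times U(1)$-symmetric asymptotically flat characteristic data for the $(4+1)$-dimensional vacuum equations. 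Using the correspondence between trapped codimension-$2$ surfaces recorded in Section \ref{introduction to bondi gauge} (the orbit areas $8\pi^2 r^2$ and $4\pi r^2$ differ only by a positive factor), the spacelike boundary $\mathcal{B}$ of the perturbed Einstein-Scalar field solution is preceded by trapped surfaces upstairs as well, the perturbed vacuum spacetimes are globally hyperbolic, and the locally naked singularity is destroyed.

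The main obstacle is the blue-shift verification together with the regularity bookkeeping around it: one must justify applying the instability theorem to a solution that is only $C^{1,1/2}$ across $C_0^-$ with $\theta\in C^{0,1/2}$, which is exactly where the bounded variation framework of \cite{BV} is needed, and one must confirm that the $C^{0,1/2}$ matching of the perturbations across $C_0^+$ does not spoil the correspondence between the vacuum problem and the reduced Einstein-Scalar field problem. The geometric transfer of trapped surfaces and of global hyperbolicity through the Kaluza--Klein reduction is then routine, given the orbit-area identities already established.
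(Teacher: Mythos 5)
Your overall strategy coincides with the paper's: pass to the Kaluza--Klein-reduced Einstein-Scalar field solution $\big(\mathcal{Q}^{(1+1)}\times S^2,g,\phi\big)$, verify unbounded blue-shift on $C_0^-$, invoke Christodoulou's instability theorem to get the two functions $f_1,f_2$, and lift the two-parameter family of perturbed developments back to the $(4+1)$-dimensional vacuum setting. Your blue-shift computation is correct, and in fact more explicit than the paper's: the paper simply records $\mu=\frac14$ on $C_0^-$, which is equivalent to your $\frac{\mu}{1-\mu}=k^2=\frac13$, and your logarithmic divergence $\gamma(u)\sim-\frac23\log|u|$ is the right quantitative statement. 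The lifting step (preservation of the $SO(3)\times U(1)$ symmetry, correspondence of trapped surfaces between the $(4+1)$- and $(3+1)$-dimensional pictures) also matches the paper.

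There is, however, a genuine gap at the central step. You assert that the reduced solution ``lies in the bounded variation class of \cite{BV}'' and that, once the blue-shift is checked, Christodoulou's theorem applies as a black box. It does not: the reduced solution has a \emph{singular center} $\Gamma^{(3+1)}$ for all time --- regularity of the $(4+1)$-dimensional vacuum metric forces the exotic boundary condition $\theta\to\pm\sqrt3$, i.e. $\phi\mp\sqrt3\log r=O(1)$, at the center --- whereas solutions of bounded variation in Christodoulou's framework have a regular center, and his instability theorem is stated for such solutions. This is precisely the obstruction the paper flags (``the center $\Gamma^{(3+1)}$ is singular, so we cannot apply the theorem directly''). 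The obstacle you name instead, the $C^{1,1/2}$ regularity across $C_0^-$ with $\theta\in C^{0,1/2}$, is the minor one, and the BV theory disposes of it. The paper's proof therefore needs two observations that your proposal is missing: (a) the perturbed characteristic data $\phi+\lambda_1f_1+\lambda_2f_2$ on $C_0^+$ has a unique future development in this exotic (singular-center) class, by the same argument as in Section \ref{Extending Beyond the Interior Region section}, using the extension criterion of \cite{kommemi} which guarantees $r=0$ at any first singularity away from the center; and (b) the proof of the instability theorem in \cite{instability} depends only on the solution on $C_0^-$ and in the cone exterior region, never on the center, so its conclusion --- every exterior incoming null cone reaches the trapped region --- persists for these solutions. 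Without (a) you cannot even speak of the perturbed developments or their global hyperbolicity, and without (b) you have no license to use $f_1,f_2$ at all.
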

\begin{proof}
    We consider the Einstein-Scalar field solution $\big(\mathcal{Q}^{(1+1)}\times S^2,g,\phi\big)$ associated to the $(4+1)$-dimensional vacuum spacetime $(\mathcal{M}^{(4+1)},g^{(4+1)})$. This solution has a locally naked singularity, and $\mu=\frac{1}{4}$ on $C_0^-,$ so the blue-shift is unbounded. However, the center $\Gamma^{(3+1)}$ is singular, so we cannot apply the theorem directly. We consider the characteristic initial data $\phi+\lambda_1f_1+\lambda_2f_2$ on $C_0^+,$ and obtain a unique future development of the data with the same argument that we used in Section \ref{Extending Beyond the Interior Region section}. We then notice that the proof of the above theorem depends only on the solution on $C_0^-$ and the cone exterior region, so we obtain that any exterior incoming null cone in the perturbed solution reaches the trapped region. Using the Kaluza--Klein reduction, we have the same behavior for the corresponding $(4+1)$-dimensional vacuum solution.
\end{proof}

\begin{remark}
    In \cite{instability}, Christodoulou proves Conjecture \ref{conjecture} for the Einstein-Scalar field system in spherical symmetry, which then implies weak cosmic censorship for this matter model. In addition to the instability theorem, Christodoulou also proves that for AC solutions with locally naked singularities the blue-shift is unbounded, and that the one-dimensional spaces of perturbations obtained using $f_2$ are pairwise disjoint. In the case of the $(4+1)$-dimensional vacuum equations with an $SO(3)\times U(1)$ symmetry, one could also prove weak cosmic censorship as a consequence of Conjecture \ref{conjecture}. In this context, our above proof illustrates how the blue-shift effect implies the instability of the locally naked singularities that we constructed. However, to prove the conjecture we also need an extension principle at the center to conclude that given a first singularity at the center the blue-shift is unbounded.
\end{remark}

\begin{remark}
We could also consider proving the instability of the solutions with locally naked singularities $(\mathcal{M}^{(4+1)},g^{(4+1)})$ outside of symmetry. A potential candidate for perturbing could be constructed starting from localizing the functions $f_1,\ f_2$ near a fixed angle $(\theta_0,\varphi_0,\psi_0).$ An essential step for this problem is a formation of trapped surfaces result analogue to \cite{formationoftrappedsurfaces}, but without symmetry assumptions. We remark that the short-pulse solutions of \cite{shortpulse} cannot be used, since we need to allow for small perturbations and use the unbounded blue-shift on $C_0^-.$ Since the incoming cone is far from Minkowski space, the situation is closer to that of \cite{impulsive}. Finally, we point out that the problem of perturbing spherically symmetric naked singularities of the Einstein-Scalar field system outside of symmetry was considered in \cite{Li-Liu}.
\end{remark}

\section{Appendix}

\subsection{Expansion of the solutions near the center}\label{Appendix Expansion Subsection}

In this section we compute in detail the expansions used in Section \ref{expansion near center section}. The goal is to study the autonomous system of ODEs:
\begin{equation}\label{appendix system}
    \begin{dcases}
    \frac{d\alpha}{ds}=\alpha\big[(\theta+k)^2+(1-k^2)(1-\alpha)\big] \\
    \frac{d\theta}{ds}=k\alpha\big(k\theta-1\big)+\theta\big[(\theta+k)^2-(1+k^2)\big]
    \end{dcases}
\end{equation}
in a neighborhood of the critical point $(\alpha,\theta)=(0,\pm\sqrt{3}).$ We set $x_1=\alpha,\ x_2=\theta\mp\sqrt{3},$ to obtain the system:
\begin{equation}
    \begin{dcases}
    \frac{dx_1}{ds}=2x_1+x_1\bigg[x_2^2\pm\frac{4}{\sqrt{3}}x_2-\frac{2}{3}x_1\bigg] \\
    \frac{dx_2}{ds}=-2kx_1+4x_2+\frac{1}{3}x_1x_2\pm\frac{7}{\sqrt{3}}x_2^2+x_2^3
    \end{dcases}
\end{equation}
We diagonalize the linear part of this system by setting $y_1=x_1,\ y_2=-kx_1+x_2$. We obtain the system:
\begin{equation}\label{diagonal system}
    \begin{dcases}
    \frac{dy_1}{ds}=2y_1-2y_1^2\pm\frac{4}{\sqrt{3}}y_1y_2+U_1^{(3)}(y_1,y_2) \\
    \frac{dy_2}{ds}=4y_2\pm\frac{7}{\sqrt{3}}y_2^2-3y_1y_2+U_2^{(3)}(y_1,y_2)
    \end{dcases}
\end{equation}
where $U_1^{(3)}(y_1,y_2),\ U_2^{(3)}(y_1,y_2)$ are homogeneous polynomials of degree 3. The critical point $(0,0)$ is stable to the past. The eigenvalues $\lambda_1=2,\ \lambda_2=4$ are resonant, since $(2,0)\cdot(\lambda_1,\lambda_2)=\lambda_2.$ Using the Poincaré-Dulac theorem, we obtain that the solutions have the form:
\begin{equation}
    \begin{dcases}
    y_1=z_1+Q_1^{(2)}(z_1,z_2) \\
    y_2=z_2+Q_2^{(2)}(z_1,z_2)
    \end{dcases}
\end{equation}
where $Q_1^{(2)}(z_1,z_2),\ Q_2^{(2)}(z_1,z_2)$ are analytic function that vanish to order 2 at the origin, and $z_1,\ z_2$ are solutions to the system:
\begin{equation}
    \begin{dcases}
    \frac{dz_1}{ds}=2z_1\\
    \frac{dz_2}{ds}=4z_2+C_0z_1^2
    \end{dcases}
\end{equation}
We compute $z_1=a_1e^{2s},\ z_2=a_2e^{4s}+C_0a_1^2se^{4s}.$ We introduce the equivalence relation $(a_1,a_2)\sim(Aa_1,A^2a_2)$, for some $A>0,$ and we denote by $\mathfrak{a}$ the set of equivalence classes. There is a one to one correspondence between $\mathfrak{a}$ and the orbits of (\ref{diagonal system}) near the origin.  We write:
\[Q_1^{(2)}(z_1,z_2)=b^{ij}z_iz_j+Q_1^{(3)}(z_1,z_2),\ Q_2^{(2)}(z_1,z_2)=d^{ij}z_iz_j+Q_1^{(3)}(z_1,z_2),\] and we obtain that $C_0=0$. This corresponds to the fact that even though the eigenvalues are resonant, the resonant monomial $\begin{pmatrix} 0 & y_1^2 \end{pmatrix}$ does not appear on the right hand side of (\ref{diagonal system}). We also obtain $b^{11}=-1.$ Thus:
\[y_1=a_1e^{2s}-a_1^2e^{4s}+O(e^{6s}),\ y_2=\big(a_2+a_1^2d^{11}\big)e^{4s}+O(e^{6s}).\]
Finally, in a neighborhood of the negatively stable critical point $(0,\pm\sqrt{3})$, the orbits of (\ref{appendix system}) satisfy the expansion:
\[\alpha=a_1e^{2s}-a_1^2e^{4s}+O(e^{6s}),\]
\[\theta=\pm\sqrt{3}\mp\frac{a_1}{\sqrt{3}}e^{2s}+\bigg(\pm\frac{a_1^2}{\sqrt{3}}+a_1^2d^{11}+a_2\bigg)e^{4s}+O(e^{6s}),\]
for some $(a_1,a_2)\in\mathfrak{a}.$

\subsection{Sectional curvature computation}\label{Appendix sectional curvature}
In this section, we compute the sectional curvatures of $S^2$ and $\mathcal{Q}^{(1+1)}$ with respect to the metric:
\[\Tilde{g}=-\frac{1}{f}e^{2\nu}du^2-\frac{2}{f}e^{\nu+\lambda}dudr+\frac{1}{f}r^2d\sigma_2^2+f^2d\sigma_1^2.\]
We express these using the corresponding $4$-dimensional metric:
\[g=-e^{2\nu}du^2-2e^{\nu+\lambda}dudr+r^2d\sigma_2^2,\]
and the quantities defined in Section \ref{introduction to bondi gauge}. 

In general, we compute for $a,b,c,d\neq\psi:$
\[\tilde{R}_{abc}{ }^d=R_{abc}{ }^d-\delta^d{ }_{[a}\nabla_{b]}\nabla_c\log f+g_{c[a}\nabla_{b]}\nabla^d\log f+\frac{1}{2}\big(\nabla_{[a}\log f\big)\delta^d{ }_{b]}\nabla_c\log f-\]\[-\frac{1}{2}\big(\nabla_{[a}\log f\big)g_{b]c}\nabla^d\log f-\frac{1}{2}g_{c[a}\delta^d{ }_{b]}g(\nabla\log f,\nabla\log f).\]

We apply this formula to compute:
\[\tilde{K}_{S^2}=\frac{f}{r^2}\tilde{R}_{\theta\varphi\theta}{ }^{\varphi}=\frac{f}{r^2}\bigg[R_{\theta\varphi\theta}{ }^{\varphi}+\nabla_{\theta}\nabla_{\theta}\log f-\frac{1}{4}r^2g(\nabla\log f,\nabla\log f)\bigg]=\]\[=\frac{f}{r^2}\bigg[\frac{2m}{r}-k(1-\mu)(\theta+\zeta)-k^2r^2g(\nabla\phi,\nabla\phi)\bigg]=\frac{f}{r^2}\bigg[1-(1-\mu)(k\theta+1)(k\zeta+1)\bigg].\]

Similarly, we also compute:
\[\tilde{K}_{\mathcal{Q}^{(1+1)}}=\frac{f^2}{4}\tilde{R}(n,l,n,l)=f^2K_{\mathcal{Q}^{(1+1)}}+\frac{f^2}{2}\nabla_{l}\nabla_{n}\log f+\frac{f^2}{4}\nabla_{l}\log f\nabla_{n}\log f+\frac{f^2}{4}g(\nabla\log f,\nabla\log f)=\]\[=\frac{f}{r^2}\bigg[1-(1-\mu)(k\theta+1)(k\zeta+1)\bigg]+(1+k^2)\frac{f^2}{r^2}(1-\mu)\theta\zeta.\]

\subsection{Equations in double null coordinates}\label{Appendix Double Null}

In this section, we write the Einstein-Scalar field equations in spherical symmetry for a self-similar solution in double null coordinates as a system of autonomous ODEs. We consider the metric:
\[g=-\Omega^2dudv+r^2d\sigma_2^2.\]
The Einstein-Scalar field equations are:

\begin{equation}\label{partial u}
    \partial_u(\Omega^{-2}\partial_ur)=-r\Omega^{-2}(\partial_u\phi)^2
\end{equation}
\begin{equation}\label{partial v}
    \partial_v(\Omega^{-2}\partial_vr)=-r\Omega^{-2}(\partial_v\phi)^2
\end{equation}
\begin{equation}\label{partial uv Omega}
    \partial_u\partial_v\log\Omega=-\frac{1}{r}\partial_u\partial_v r-4\partial_u\phi\partial_v\phi
\end{equation}
\begin{equation}\label{partial uv r}
    \partial_u\partial_vr^2=-\frac{1}{2}\Omega^2
\end{equation}
\begin{equation}\label{wave}
    r\partial_u\partial_v\phi+\partial_v\phi\partial_ur+\partial_u\phi\partial_vr=0
\end{equation}
We notice that equation (\ref{partial uv Omega}) is implied by the other equations. Moreover, we can rewrite the Raychaudhuri's equations (\ref{partial u}), (\ref{partial v}) as:
\begin{equation}\label{partial u m}
    \partial_um=-2r^2\Omega^{-2}\partial_vr(\partial_u\phi)^2
\end{equation}
\begin{equation}\label{partial v m}
    \partial_vm=-2r^2\Omega^{-2}\partial_ur(\partial_v\phi)^2
\end{equation}
where $m$ is the Hawking mass:
\begin{equation}\label{mass}
    m=\frac{r}{2}(1+4\Omega^{-2}\partial_ur\partial_vr).
\end{equation}

Following Section \ref{Bondi to double null coordinates}, we assume we are in self-similar double null gauge, so $S=u\partial_u+v\partial_v$ and:
\[S\Omega=0,\ Sr=r,\ S\phi=-k.\]
As before, we consider the following quantities:
\[\theta=\frac{r\partial_v\phi}{\partial_vr},\ \zeta=\frac{r\partial_u\phi}{\partial_ur},\ \beta=\frac{v\partial_vr}{r}.\]
We introduce the self-similar coordinate:
\[y=-\frac{v}{u}.\]
The change of coordinates is given by:
\[\partial_v=-\frac{1}{u}\partial_y,\ \partial_u=\frac{v}{u^2}\partial_y+\partial_u.\]
Thus, in $(u,y)$ coordinates we have $S=u\partial_u.$ We recall the notation:
\[\chi=\phi+k\log(-u),\ R=-\frac{r}{u}.\]
The self-similar condition is equivalent to $\Omega=\Omega(y),\ R=R(y),\ \chi=\chi(y).$ We compute:
\[\beta=\frac{y\partial_yR}{R},\ \zeta=\frac{\theta\beta+k}{\beta-1},\ \Omega^2=-\frac{4}{1-\mu}\partial_yR\big(y\partial_yR-R\big).\]
We can rewrite the wave equation (\ref{wave}) as:
\[y\partial_y\theta+\theta\bigg(1+\frac{\beta-1}{1-\mu}\bigg)+k=0.\]
Also, we can rewrite equation (\ref{partial uv r}) as:
\[y\partial_y\beta+\bigg(2-\frac{1}{1-\mu}\bigg)\beta(\beta-1)=0.\]
Using this in equations (\ref{partial u m}),(\ref{partial v m}) eventually gives:
\[\frac{1}{1-\mu}=\beta\theta^2-\frac{(\theta\beta+k)^2}{\beta-1}+1.\]
Finally, we conclude that the Einstein-Scalar field system is equivalent to:
\[\begin{dcases}
    y\partial_y\theta=k\big(k\theta-1\big)+\beta\theta\big[(\theta+k)^2-(1+k^2)\big] \\
    y\partial_y\beta=\beta(1-k^2)-\beta^2\big[(\theta+k)^2+(1-k^2)\big]
    \end{dcases}\]

\bibliographystyle{alpha}
\bibliography{refs}

\end{document}